\documentclass[aps,pra,reprint,twocolumn,superscriptaddress,nofootinbib]{revtex4-1}
\pdfoutput=1 

\usepackage{graphicx}
\usepackage{bm}
\usepackage{amsmath}
\usepackage{amssymb}
\usepackage{amsfonts}
\usepackage{amsthm}
\usepackage{mathtools}
\usepackage{hyperref}
\usepackage{braket}
\usepackage[normalem]{ulem}
\usepackage{wrapfig}
\usepackage{tikz}
\usepackage{dsfont}
\usepackage{comment}
\usepackage{thmtools,thm-restate}
\mathtoolsset{showonlyrefs}

\hypersetup{colorlinks=true,urlcolor=[rgb]{0,0,0.5},citecolor=[rgb]{0.5,0,0},linkcolor=[rgb]{0,0,0.4}}

\usetikzlibrary{decorations.pathmorphing}

\newtheorem{theorem}{Theorem}
\newtheorem{definition}{Definition}
\newtheorem{corollary}{Corollary}
\newtheorem{lemma}{Lemma}

\newtheorem*{theorem*}{Theorem}
\newtheorem{task}{Task}
\newtheorem*{task*}{Task}
\newtheorem*{proposition*}{Proposition}

\def\autorefapp#1{\hyperref[#1]{Appendix~\ref{#1}}}

\def\tr{{\rm tr}}

\def\ketbra#1{ |{#1}\rangle\!\langle{#1}| }

\def\and{\quad {\rm and} \quad}

\newcommand{\calA}{\mathcal{A}}

\newcommand{\calC}{\mathcal{C}}
\newcommand{\calD}{\mathcal{D}}
\newcommand{\calE}{\mathcal{E}}

\newcommand{\calM}{\mathcal{M}}

\newcommand{\calS}{\mathcal{S}}
\newcommand{\calT}{\mathcal{T}}

\setlength{\skip\footins}{24pt}
\raggedbottom

\DeclarePairedDelimiter{\abs}{\lvert}{\rvert}
\DeclarePairedDelimiter{\norm}{\lVert}{\rVert}
\DeclarePairedDelimiter{\iprod}{\langle}{\rangle}
\DeclarePairedDelimiter{\brk}{[}{]}
\DeclarePairedDelimiter{\brc}{\{}{\}}

\makeatletter
\def\Pr{\@ifnextchar[{\@witha}{\@withouta}}
\def\@witha[#1]{\mathop{\operator@font Pr}_{#1}\brk}
\def\@withouta{\mathop{\operator@font Pr}\brk}
\makeatother

\makeatletter
\def\E{\@ifnextchar[{\@withb}{\@withoutb}}
\def\@withb[#1]{\mathop{\mathbb{E}}_{#1}\brk}
\def\@withoutb{\mathop{\mathbb{E}}\brk}
\makeatother

\makeatletter
\def\Var{\@ifnextchar[{\@withc}{\@withoutc}}
\def\@withc[#1]{\mathop{\mathbb{V}}_{#1}\brk}
\def\@withoutc{\mathop{\mathbb{V}}\brk}
\makeatother

\newcommand{\tvd}{d_{\text{TV}}}

\newcommand{\bone}{\mathds{1}\brk}

\renewcommand{\vec}[1]{\mathbf{#1}}
\newcommand{\T}{\vec{T}}
\newcommand{\U}{\vec{U}}
\newcommand{\V}{\vec{V}}
\newcommand{\W}{\vec{W}}
\newcommand{\Z}{\vec{Z}}
\newcommand{\Id}{\mathds{1}}
\newcommand{\Sig}{\vec{\Sigma}}

\newcommand{\rhomm}{\rho_{\mathsf{mm}}}

\newcommand{\R}{\mathbb{R}}

\newcommand{\KL}[2]{\text{KL}\left(#1\|#2\right)}

\newcommand{\Sodd}{S_{\mathsf{odd}}}
\newcommand{\Seven}{S_{\mathsf{even}}}

\newtheorem{fact}[theorem]{Fact}

\DeclareMathOperator{\Tr}{tr}
\DeclareMathOperator{\Wg}{Wg}

\begin{document}
\title{A Hierarchy for Replica Quantum Advantage}

\author{Sitan Chen}
\email{sitanc@berkeley.edu}
\affiliation{Department of Electrical Engineering and Computer Sciences, University of California, Berkeley, Berkeley, CA, USA}
\affiliation{Simons Institute for the Theory of Computing, Berkeley, CA, USA}

\author{Jordan Cotler}
\email{jcotler@fas.harvard.edu}
\affiliation{Society of Fellows, Harvard University, Cambridge, MA, USA}
\affiliation{Black Hole Initiative, Harvard University, Cambridge, MA, USA}

\author{Hsin-Yuan Huang}
\email{hsinyuan@caltech.edu}
\affiliation{
Institute for Quantum Information and Matter, Caltech, Pasadena, CA, USA}
\affiliation{Department of Computing and Mathematical Sciences, Caltech, Pasadena, CA, USA}

\author{Jerry Li}
\email{jerrl@microsoft.com}
\affiliation{Microsoft Research, Redmond, WA, USA}

\begin{abstract}
We prove that given the ability to make entangled measurements on at most $k$ replicas of an $n$-qubit state $\rho$ simultaneously, there is a property of $\rho$ which requires at least order $2^n$ measurements to learn.  However, the same property only requires one measurement to learn if we can make an entangled measurement over a number of replicas polynomial in $k, n$.  Because the above holds for each positive integer $k$, we obtain a hierarchy of tasks necessitating progressively more replicas to be performed efficiently.  We introduce a powerful proof technique to establish our results, and also use this to provide new bounds for testing the mixedness of a quantum state.
\end{abstract}

\maketitle

\section{Introduction}

In conventional physics experiments, an experimental sample (e.g. a superconductor, an array of atoms, a Bose-Einstein condensate, etc.) is prepared and then subsequently measured.  This is repeated numerous times to extract information about the system of interest.  But suppose an experimentalist has multiple \textit{replicas} of their experimental setup in the same lab -- could this dramatically increase the efficiency of learning about the system of interest?

In the simplest case, say that the experimentalist has two copies of their setup.  If the experimentalist can perform entangled measurements on both copies simultaneously, then remarkably, there are certain experiments which can be performed \textit{exponentially} more efficiently with two replicas versus with one~\cite{huang2021information, aharonov2021quantum, chen2021exponential, huang2021quantum, harrow2021approximate}.

There are several contemporary experimental platforms which can both (i) furnish several replicas, typically two to three; and (ii) allow for entangled measurements on said replicas~\cite{islam2015measuring, linke2018measuring, arute2019quantum}. 
Moreover, these platforms have recently been leveraged to experimentally measure purities and R\'{e}nyi entropies of novel quantum states~\cite{islam2015measuring, kaufman2016quantum, linke2018measuring, cotler2019quantum}. 
The most flexible experimental platforms of this kind are NISQ computers~\cite{preskill2018quantum, arute2019quantum}, which presently offer the widest range of possibilities for entangled measurements on replicas.

We are interested in the potential for a modest number of replicas to enable an exponential savings in resources for certain kinds of experiments; see the schematic in Figure~\ref{fig:replicas1}.  This could facilitate new forms of quantum experiments which are heretofore impractical or impossible.

A 2-versus-1 replica advantage can be achieved in a task about learning properties of an unknown quantum state. For instance, given an $n$-qubit experimental system $\rho$, the experimentalist wants to learn the expectation value of the absolute value of any observable that is a tensor product of Pauli operators.
The task requires $\sim \text{poly}(n)/\epsilon^2$ measurements if there are two replicas but $\sim \exp(n)/\epsilon^2$ measurements if there is only one, even if an adaptive strategy is employed~\cite{huang2021information, chen2021exponential}.
We say that this problem is \textit{1-replica hard}, since any strategy accessing one replica at a time is exponentially inefficient.  



While these results establish that there are problems which are $1$-replica hard and are rendered efficient with additional replicas, it is natural to ask if for each $k = 2,3,...$ there are problems which are $k$-replica hard and rendered efficient with additional replicas?  We provide such a family of $k$-replica hard problems.  Moreover, we show that each $k$-replica hard problem can be solved with $\text{poly}(k,n)$ replicas.  Our result provides a \textit{hierarchy} of problems which require more and more replicas to solve efficiently.



In the remainder of the paper, we outline the technical tools for our $\text{poly}(k,n)$-versus-$k$ replica advantage, and then state our main result.  Full details of the proofs are provided in the Appendix.  Then we present additional consequences of our results pertaining to mixedness testing.  We conclude with a discussion of implications and future directions.

\begin{figure*}[t!]
    \centering
    \includegraphics[width=0.9\textwidth]{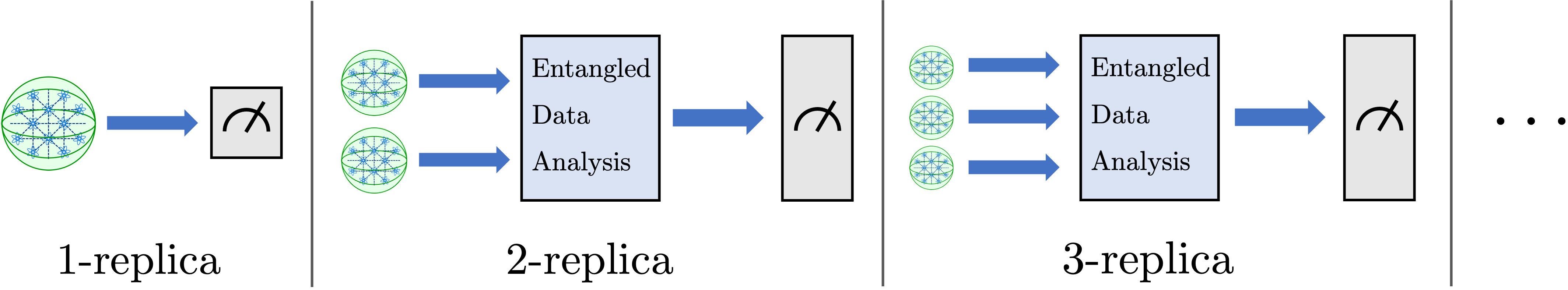}
    \centering
    \caption{\textit{Schematic of experiments with multiple replicas.}  Conventional experiments, including most contemporary experiments, fall into the 1-replica setting wherein only one copy of the system can be measured at a time.  In the multiple replica setting, several system copies can be mutually entangled via quantum computation (denoted by `entangled data analysis') and then jointly measured.}
    \label{fig:replicas1}
\end{figure*}

\section{Results on Replica Hierarchy}

We begin by formulating a family of learning problems which are $k$-replica hard.
Let $\mathcal{D}$ be a distribution over $n$-qubit unitaries formed by random quantum circuits with $\mathcal{O}(\mathrm{poly}(k, n))$ many gates.  We require that the distribution $\mathcal{D}$ forms a $\delta$-approximate unitary $2k$-design for $\delta=1/2^n$.  There are many random circuit architectures which are adequate for our purposes, e.g.~we can consider any geometrically local random quantum circuit of sufficient depth studied in \cite{brandao2016local, harrow2018approximate}.
We choose to consider a circuit architecture based on interleaving a $2\times 2$ unitary $V$ with random Clifford circuits \cite{haferkamp2020homeopathy};
see Appendix~\ref{sec:design} for more detailed definitions.
Assuming $n \gg k^2$, we consider the following task.
\begin{task}[RQC testing]\label{task1}
Determine whether an $n$-qubit state $\rho$ is the maximally mixed state $\mathds{1} / 2^n$ or the state $(\mathds{1} + \epsilon\, \U Z^{\otimes n} \U^\dagger) / 2^n$ where $\U$ is a fixed, random sample from the distribution $\mathcal{D}$ over random quantum circuits.
\end{task}
\noindent We prove the following result:
\begin{theorem}[$k$-replica hardness of the task]\label{thm:hardness}
The above task for $\epsilon = \frac{1}{3k}$ is $k$-replica hard.  In particular, any protocol which can make arbitrary entangled measurements of at most $k$ replicas of $\rho$ simultaneously must use $\Omega(2^n k)$ copies of $\rho$ overall to achieve the task with constant probability.
\end{theorem}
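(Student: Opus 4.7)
The plan is to invoke the standard Le Cam / tree-induction framework for adaptive entangled-measurement lower bounds developed in Refs.~\cite{huang2021information, chen2021exponential}. Any $k$-replica protocol using $N$ copies is represented as a depth-$T$ tree with $T = N/k$, whose internal nodes are labeled by POVMs $\{M_x\}$ on $(\mathbb{C}^{2^n})^{\otimes k}$. By Le Cam's two-point method, it suffices to show that $\chisq{\mathbb{E}_{U \sim \mathcal{D}} P_U}{P_0} = o(1)$ whenever $N \ll 2^n k$, where $P_U$ and $P_0$ are the transcript distributions when $\rho = \rho_U$ and $\rho = \mathds{1}/2^n$, respectively. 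A standard telescoping over the tree then yields
\[
\chisq{\mathbb{E}_U P_U}{P_0} + 1 \;\le\; \mathbb{E}_{U, U' \sim \mathcal{D}}\bigl[(1 + \gamma^*(U, U'))^T\bigr],
\]
where $\gamma^*(U, U') := \sup_M \mathbb{E}_{x \sim p_0(\cdot; M)}[\delta_U(x; M)\,\delta_{U'}(x; M)]$ is the worst-case (over POVMs on $k$ copies) single-round correlation of likelihood-ratio deviations $\delta_U(x; M) := \tr(M_x \rho_U^{\otimes k})/\tr(M_x \rho_0^{\otimes k}) - 1$.

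The core step is a single-round moment estimate. Expanding $\rho_U^{\otimes k} = 2^{-nk}\sum_{S \subseteq [k]} \epsilon^{|S|} V_S$, with $V := U Z^{\otimes n} U^\dagger$ and $V_S$ placing $V$ at the positions in $S$ (identity elsewhere), one writes
\[
\gamma(U, U'; M) \;=\; \frac{1}{2^{nk}} \sum_{\emptyset \neq S, S' \subseteq [k]} \epsilon^{|S|+|S'|}\, \tr\!\bigl[(V_S \otimes V'_{S'})\,\Omega_M\bigr],
\]
where $\Omega_M := \sum_x M_x \otimes M_x / \tr(M_x)$ is PSD, has trace $2^{nk}$, and is supported on the symmetric subspace of the two-copy space. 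I would then use the $2k$-design property of $\mathcal{D}$ (error $\delta = 2^{-n}$) to replace the joint law of $(U, U')$ by Haar up to negligible error, and exploit two Haar-integration facts: (i) odd moments $\mathbb{E}[V^{\otimes 2j+1}]$ vanish, since $U \mapsto U X_1$ is a Haar symmetry mapping $V \mapsto -V$; (ii) by Schur--Weyl/Weingarten, even moments $\mathbb{E}[V^{\otimes 2j}]$ decompose into permutation operators with coefficients of operator norm $O(2^{-nj})$. Combined with the observation that the one-slot marginals of $\Omega_M$ are maximally mixed (since $\sum_x M_x = \mathds{1}^{\otimes k}$ forces each single-slot partial trace of $\Omega_M$ to equal $2^{n(k-1)}\mathds{1}$), the leading contributions are shown to be $O(k^2 \epsilon^2/2^n)$, with higher $|S|,|S'|$ terms decaying geometrically in $2^{-n}$.

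The main obstacle is upgrading these Haar-averaged estimates to a per-$(U, U')$, uniform-in-$M$ bound on $\gamma^*(U, U')$, as required by the tree-induction telescoping. My plan is to combine concentration of measure on the unitary group (Levy's lemma, ported to $\mathcal{D}$ via the $2k$-design property applied to appropriately truncated polynomial test functions) with an $\epsilon$-net / covering argument over the compact convex set of $k$-copy PSD operators $\Omega_M$ of fixed trace and symmetric support. Assuming this lifting succeeds, we obtain $\gamma^*(U, U') \lesssim k^2 \epsilon^2/2^n$ with probability $1 - o(1)$ over $U, U' \sim \mathcal{D}$, whence $\chisq{\mathbb{E}_U P_U}{P_0} \lesssim T \cdot k^2 \epsilon^2/2^n$. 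Since $\epsilon = 1/(3k)$, this reduces to $N / (9\, k \cdot 2^n)$, which is $o(1)$ precisely when $N = o(2^n k)$, establishing the claimed $\Omega(2^n k)$ lower bound.
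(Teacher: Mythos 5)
Your reduction to the single-round quantity $\gamma^*(U,U') := \sup_M \mathbb{E}_{x\sim p_0(\cdot;M)}[\delta_U(x;M)\,\delta_{U'}(x;M)]$ is where the argument breaks: the bound you need, $\gamma^*(U,U')\lesssim \epsilon^2k^2/2^n$ with high probability over independent $U,U'$, is false, already for $k=1$. The supremum is over measurements that may be adapted to the \emph{pair} $(U,U')$, and such a measurement can extract $\Omega(\epsilon^2)$ correlation: take the projective measurement in the eigenbasis $\{e_i\}$ of $V+V'$ with $V=UZ^{\otimes n}U^\dagger$, $V'=U'Z^{\otimes n}U'^\dagger$. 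Writing $a_i=\langle e_i|V|e_i\rangle$, $b_i=\langle e_i|V'|e_i\rangle$ and $\lambda_i=a_i+b_i$ for the eigenvalues of $V+V'$, one has $a_ib_i\ge |\lambda_i|-1$, and since $V,V'$ are asymptotically free reflections whose sum has the arcsine limiting spectrum on $[-2,2]$, $\frac1d\sum_i(|\lambda_i|-1)\to 4/\pi-1>0$; hence $\gamma^*(U,U')=\Omega(\epsilon^2)$ for typical independent Haar $U,U'$, exponentially larger than $\epsilon^2 k^2/d$. Plugged into your telescoped $\chi^2$ bound this yields only a trivial $\mathrm{poly}(k)$ copy lower bound. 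The proposed rescue via Levy concentration plus an $\epsilon$-net cannot work either (the quantity genuinely is large, and in any case a net over $k$-copy rank-one POVM elements has cardinality $\exp(\Theta(d^{k}))$ while Levy only supplies $\exp(-O(d\,\cdot))$ tails, so the union bound loses by an exponential margin; you would also still need to control $(1+\gamma^*)^T$ on the bad event). The order of quantifiers --- supremum over measurements inside versus outside the expectation over the ensemble --- is precisely the difficulty of the adaptive setting, and it is why the paper's adaptive mixedness-testing bound requires a separate, much more delicate martingale/tail-bound analysis rather than the bound you wrote (whose legitimate use, with the max over POVMs taken \emph{outside} the ensemble average, appears only in the nonadaptive setting, Lemma~\ref{lem:ingster}).

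The paper sidesteps this entirely by working with the first moment of the likelihood ratio rather than the second: Lemma~\ref{lem:convexity} lower-bounds $\mathbb{E}_U[p_{\rho_U}(T)/p_{\rhomm}(T)]$ pointwise in the transcript, using concavity of $\log$ together with a sign-flip symmetrization $U\mapsto \T U$ (which sends $UZ^{\otimes n}U^\dagger\mapsto -UZ^{\otimes n}U^\dagger$ and cancels the odd-order part to first order), reducing everything to quantities of the form $\max_{\ket\psi}\mathbb{E}_U\big[\langle\psi|\Phi^U_i(\epsilon)|\psi\rangle^2\big]$ in which the maximum over states sits \emph{outside} the expectation over $U$. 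Those are exactly the design-averaged Weingarten estimates your ``core step'' computes (Lemmas~\ref{lem:secondmoment} and~\ref{lem:secondmoment_approx}), and the one-sided pointwise bound is converted to a total-variation bound by Fact~\ref{fact:oneside} before applying Le Cam (Lemma~\ref{lem:lecam}). So your moment calculations and design-to-Haar replacement are on target, but the reduction from the adaptive tree to those averaged quantities must be redone along these lines; the $\chi^2$-with-$\sup_M$ route cannot be repaired as stated.
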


\noindent Here and throughout this work, we use $f(n) = \Omega(g(n))$ to denote that there is a constant $C > 0$ such that $f(n) \geq C\,g(n)$ for all sufficiently large $n$. A formal version of this Theorem is proved in the Appendix (see Theorem~\ref{thm:kdesign_formal}).
The theorem immediately implies that at least $\Omega(2^n)$ entangled measurements over $k$ replicas must be performed.


In contrast, if we are allowed to make a single entangled measurement over all replicas of $\rho$ simultaneously, then existing algorithms for shadow tomography \cite{aaronson2018shadow, aaronson2019gentle, buadescu2020improved} can solve the above task with exponentially fewer replicas.
In particular:
\begin{theorem}[$\text{poly}(k,n)$-replica easiness of the task]\label{thm:easiness} There is a protocol which performs a single entangled measurement on $\text{\rm poly}(k, n)$ replicas simultaneously and achieves the task with probability $9/10$.
\end{theorem}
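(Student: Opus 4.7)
Proof plan: I would apply the Buadescu--O'Donnell shadow tomography algorithm \cite{buadescu2020improved} to simultaneously estimate $\mathrm{tr}(O_U \rho)$ for every candidate $U$ in the support of $\mathcal{D}$, where $O_U := U Z^{\otimes n} U^\dagger$.

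The key preliminary observation is that $\mathrm{supp}(\mathcal{D})$ is a finite set of cardinality at most $2^{\mathrm{poly}(k,n)}$: each element of $\mathrm{supp}(\mathcal{D})$ is a composition of $\mathrm{poly}(k,n)$ layers alternating the fixed single-qubit unitary $V$ with random $n$-qubit Cliffords, and the $n$-qubit Clifford group has order $2^{O(n^2)}$.

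Next, I would apply Buadescu--O'Donnell with the $M := |\mathrm{supp}(\mathcal{D})|$ observables $\{O_U\}_{U \in \mathrm{supp}(\mathcal{D})}$, each of operator norm one, and target precision $\eta := \epsilon/4 = 1/(12k)$. Their result guarantees that a single entangled measurement on $N = \tilde{O}(n \log^2 M / \eta^4) = \mathrm{poly}(k,n)$ copies of $\rho$ returns estimates $\hat{\alpha}_U$ with $|\hat{\alpha}_U - \mathrm{tr}(O_U \rho)| \le \eta$ for every $U \in \mathrm{supp}(\mathcal{D})$, with success probability at least $9/10$. The protocol then outputs ``biased'' if $\max_U \hat{\alpha}_U > \epsilon/2$ and ``mixed'' otherwise. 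Under $\rho = \mathds{1}/2^n$, every true expectation $\mathrm{tr}(O_U \rho)$ vanishes, so every estimate is at most $\epsilon/4 < \epsilon/2$; under $\rho = (\mathds{1} + \epsilon O_{U^*})/2^n$, the estimate $\hat{\alpha}_{U^*}$ is at least $\epsilon - \epsilon/4 > \epsilon/2$, which triggers the correct output in both cases.

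The main obstacle is verifying the finite-support claim quantitatively, since the logarithmic dependence on the number of observables in the Buadescu--O'Donnell bound is precisely what makes enumerating over $\mathrm{supp}(\mathcal{D})$ affordable. If the design construction were instead built from gates with continuous parameters, one would first replace $\mathrm{supp}(\mathcal{D})$ by a net at resolution $O(\epsilon/\mathrm{poly}(k,n))$ using Lipschitz continuity of the circuit in its gate parameters and of $U \mapsto O_U$ in operator norm; the resulting net still has size $2^{\mathrm{poly}(k,n)}$, leaving the final $\mathrm{poly}(k,n)$ sample complexity unchanged.
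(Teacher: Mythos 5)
Your proposal is correct, but it takes a different route from the paper's formal proof. The paper (Appendix~\ref{sec:upper}, Theorem~\ref{thm:upperbound}) invokes the \emph{quantum hypothesis selection} guarantee of \cite{buadescu2020improved} (their Theorem 1.5): it enumerates the finite support of $\calD$, forms the candidate list $\{\rhomm\}\cup\{\rho_{\U}\}_{\U\in\mathrm{supp}(\calD)}$, runs hypothesis selection with accuracy $\epsilon$, and decides according to whether the selected hypothesis is $\rhomm$, using $\norm{\rhomm-\rho_{\U}}_1=\epsilon$ to rule out confusions; this gives $N = O\bigl(\tfrac{\gamma}{\epsilon^2}(\log^3 m+\gamma\log m)\bigr)$ copies with $\gamma=\log\log(1/\epsilon)$. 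You instead use the shadow tomography guarantee from the same reference, estimating $\Tr(O_{\U}\rho)$ for all $O_{\U}=\U Z^{\otimes n}\U^{\dagger}$ to precision $\epsilon/4$ and thresholding at $\epsilon/2$; the completeness/soundness computation ($\Tr(O_{\U}\rhomm)=0$ versus $\Tr(O_{\U^*}\rho_{\U^*})=\epsilon$) is right, and the same finiteness observation $|\mathrm{supp}(\calD)|=\exp(O(n^2 D))$ with $D=\mathrm{poly}(k,n)$ drives both arguments. Your route actually matches the informal "shadow tomography" intuition given in the paper's main text, and it is arguably more robust (the paper itself uses the same observable family in Theorem~\ref{thm:shdaow_kdesign} for the matching lower bound); its cost is the $1/\epsilon^4$ and $\log d$ dependence of shadow tomography versus the $1/\epsilon^2$, $\mathrm{polylog}(m)$ dependence of hypothesis selection, though both land at $\mathrm{poly}(k,n)$ copies for $\epsilon=1/3k$. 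One cosmetic point: state explicitly that observables with $\norm{O_{\U}}\le 1$ are handled by the affine rescaling $(O_{\U}+\Id)/2$, since the shadow tomography statement is usually phrased for $0\preceq O\preceq\Id$; this only changes the precision by a factor of $2$. Your closing remark about discretizing a continuous-gate ensemble is not needed here, since Definition~\ref{def:interleaved} already yields a finite support.
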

\noindent We give a more detailed statement of this theorem along with its proof in Appendix~\ref{sec:upper}.




Let us provide intuition for Theorems~\ref{thm:hardness} and~\ref{thm:easiness}.  To understand Theorem~\ref{thm:hardness}, we need to elucidate why it is hard to distinguish a single batch of $k$ replicas of $\mathds{1} / 2^n$ from a single batch of $k$ replicas of $\frac{1}{2^n}(\mathds{1} + \tfrac{1}{3k}  \U Z^{\otimes n} \U^\dagger)$.
The idea is that a random quantum circuit $\U$ with sufficient depth leads to the state $\frac{1}{2^n}(\mathds{1} + \frac{1}{3k}\,\U Z^{\otimes n} \U^\dagger)$ having complicated correlations that cannot be resolved with a modest number of $k$-replica measurements.
Even given the ability to condition the choice of future $k$-replica measurements on past measurement outcomes, the correlations learned from \textit{any possible} $k$-replica measurement are so negligible that the adaptive strategy is futile unless there are exponentially many measurement rounds.
More technically, under any POVM measurement on $k$ replicas, a suitable statistical distance between $\frac{1}{2^n}\,\mathds{1}$ and $\frac{1}{2^n}(\mathds{1} + \frac{1}{3k} \, \U Z^{\otimes n} \U^\dagger)$ is exponentially small on average.

On the other hand, Theorem~\ref{thm:easiness} shows that when we can measure $\text{poly}(k,n)$ replicas in an entangled fashion, the desired task can be achieved with an economy of measurement.  The central technique is shadow tomography~\cite{aaronson2018shadow, aaronson2019gentle, buadescu2020improved}, which under certain conditions allows us to learn $M$ properties of a quantum state via an entangled measurement on $\text{polylog}(M)$ replicas of that state.  In our setting, $\mathcal{D}$ comprises only $\sim \exp(\text{poly}(k,n))$ distinct $\U$'s, and so we can construct as many distinct states $\frac{1}{2^n}(\mathds{1} + \frac{1}{3k} \, \U Z^{\otimes n} \U^\dagger)$.  Accordingly, we find that shadow tomography allows us to determine which particular state we have via a $\text{poly}(k,n)$-replica measurement.  Note that if $\mathcal{D}$ were to comprise \textit{general} random unitaries (i.e.~not formed by polynomial-depth circuits), there would instead be $\sim \exp(\exp(n))$ distinct unitaries, which is vastly more than $\sim \exp(\text{poly}(k,n))$.  Then shadow tomography would require an entangled measurement on $\sim \exp(n)$ replicas, as compared to the $\text{poly}(k,n)$ replicas in our setting.  In summary, we have deliberately parameterized our unitaries by polynomial-depth circuits so that shadow tomography can be performed with only $\text{poly}(k,n)$ replicas.

\section{Application to Mixedness Testing}

We can leverage our proof methods to provide new insights into mixedness testing.  The task of mixedness testing can be stated as follows:
\begin{task}[Mixedness Testing]\label{task2}
For an unknown density matrix $\rho$ on $\mathbb{C}^d$ and a fixed $\epsilon > 0$, determine whether $\rho$ is the maximally mixed state $\mathds{1}/d$ or instead $\| \rho - \mathds{1}/d\|_1 > \epsilon$.  (We are given a promise that one of these two scenarios is realized by $\rho$.)
\end{task}
\noindent That is, the task is asking us to determine if $\rho$ is the maximally mixed state or far from from it.  The hardness of this task clearly depends on the kinds of measurements we are allowed to perform, e.g.~if we are permitted to measure multiple copies of $\rho$ simultaneously.

In the context of experiments with replicas, it is natural to ask: if we have a $k$-replica system and thus can make joint, entangled measurements on at most $k$ copies of $\rho$ simultaneously, then how many copies of $\rho$ \textit{in total} do we require to solve the mixedness testing problem?  One might guess that it helps to have access to a larger number of replicas, so that the total number of $\rho$'s required will be smaller when $k$ is larger.

Interestingly, we can leverage our proof techniques used for Theorem~\ref{thm:hardness} to obtain new lower bounds for the problem of mixedness testing in the $k$-replica setting:

\begin{theorem}\label{thm:mixed}
    Suppose $k\in\mathbb{N}$ satisfies $k\le O(d^{1/2})$ and $k\ll \log(d)/\epsilon^2$. Any protocol which can make arbitrary entangled measurements of at most $k$ replicas of $\rho$ simultaneously must use at least $\Omega(d^{\frac{4}{3} - \eta}/\epsilon^2)$
    copies of $\rho$ overall, for any constant $\eta > 0$, to solve the mixedness testing task with constant probability.
\end{theorem}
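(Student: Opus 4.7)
The plan is to reduce Task~\ref{task2} to a two-point hypothesis testing problem, mirroring the chi-squared-based framework that underpins Theorem~\ref{thm:hardness}. We take the null to be $\rho_0 = \mathds{1}/d$ and construct the alternative as a random state $\rho_U = \frac{1}{d}(\mathds{1} + \epsilon\, U D U^\dagger)$, where $D$ is a fixed traceless Hermitian matrix satisfying $\|D\|_\infty \leq 1$ (so $\rho_U \succeq 0$) and $\|D\|_1 \geq d$ (so $\|\rho_U - \mathds{1}/d\|_1 \geq \epsilon$), and $U$ is drawn either from the Haar measure on $\mathrm{U}(d)$ or from a sufficiently high-order approximate $2k$-design. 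The spectrum of $D$ is a free parameter: a naive balanced choice such as $D = \mathrm{diag}(+1,\ldots,+1,-1,\ldots,-1)$ recovers only the weaker $\Omega(d/\epsilon^2)$ bound, so to reach the improved exponent we will let $D$ have a more constrained rank that we tune to optimize the final bound.

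The lower bound follows by upper bounding the chi-squared divergence between the transcript distributions under $\rho_0$ and under the mixture $\mathbb{E}_U[\rho_U]$, for any adaptive protocol making $T = N/k$ entangled $k$-replica measurements. A standard chain rule reduces this to controlling, for any POVM $\{M_i\}$ on $k$ copies,
\[
\chi^2_{\mathrm{round}}(\{M_i\}) = \sum_i \frac{\bigl(\mathbb{E}_U\Tr[M_i \rho_U^{\otimes k}] - \Tr[M_i(\mathds{1}/d)^{\otimes k}]\bigr)^2}{\Tr[M_i(\mathds{1}/d)^{\otimes k}]}.
\]
Expanding $\rho_U^{\otimes k} = d^{-k}(\mathds{1} + \epsilon\,UDU^\dagger)^{\otimes k}$ term by term and integrating over $U$ via Weingarten calculus, each surviving summand becomes a trace of the POVM element against a linear combination of permutation operators on $(\mathbb{C}^d)^{\otimes k}$ weighted by moments $\Tr[D^s]$. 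After applying operator Cauchy--Schwarz and invoking POVM completeness $\sum_i M_i \preceq \mathds{1}^{\otimes k}$, one arrives at a per-round bound of the form $\chi^2_{\mathrm{round}} \lesssim \epsilon^2 k \, d^{-(4/3-\eta)}$, and Le Cam's method then forces $T \gtrsim d^{4/3-\eta}/(k\epsilon^2)$, i.e., $N \gtrsim d^{4/3-\eta}/\epsilon^2$.

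The main obstacle lies in the Weingarten sum: a termwise bound on the permutation contributions recovers only $\Omega(d/\epsilon^2)$, and beating this requires a sharper combinatorial argument that exploits cancellations among permutations in $S_{2k}$. The conditions $k \leq O(d^{1/2})$ and $k \ll \log(d)/\epsilon^2$ in the theorem statement are the technical thresholds at which these cancellations survive: the first keeps subleading Weingarten coefficients from becoming competitive with the leading term, while the second ensures that the cumulative effect of many rounds (via the exponentiation implicit in the chain rule) does not spoil the per-round bound. Balancing these constraints against the requirement $\|D\|_1 \gtrsim d$ fixes the optimal spectrum of $D$ and yields the $d^{4/3-\eta}$ exponent, with $\eta \to 0$ corresponding to saturating the underlying tradeoff.
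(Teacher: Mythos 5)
There is a genuine gap here, and it lies at the heart of where the $d^{4/3}$ exponent actually comes from. Your plan is to bound a per-round chi-squared divergence of the marginal mixture and then multiply/sum over the $T = N/k$ rounds via "a standard chain rule" plus Le Cam. For an \emph{adaptive} protocol against a \emph{mixture} alternative this decomposition is not available: the same unknown $\U$ is common to all rounds, so the transcript distribution under the alternative is $\mathbb{E}_{\U}\bigl[\prod_t p_{\U}(\cdot\,|\,\cdot)\bigr]$, which is far from a product, and controlling the single-round divergence of the marginal mixture does not control the $N$-round divergence. Moreover, the single-round quantity you write down, evaluated by termwise Weingarten bounds, gives exactly the estimate already used for Theorem~\ref{thm:hardness} (Lemmas~\ref{lem:convexity} and~\ref{lem:secondmoment}), i.e.\ only an $\Omega(d/(k^2\epsilon^2))$-rounds bound --- the $\Omega(d/\epsilon^2)$-type bound you concede. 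You then attribute the improvement to (i) tuning the rank/spectrum of $D$ and (ii) unspecified "cancellations among permutations in $S_{2k}$." Neither is the actual mechanism, and neither is substantiated: with the positivity constraint $\|D\|_\infty\le 1$, lowering the rank of $D$ shrinks $\|\rho_U-\rhomm\|_1=\epsilon\|D\|_1/d$, so you cannot keep the state $\epsilon$-far while reducing rank, and no concrete cancellation in the Weingarten sum is exhibited (the signs of $\Wg$ alone will not produce a $d^{1/3}$ gain).

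The paper keeps the same balanced instance $\rho_{\U}=\frac{1}{d}(\Id+\epsilon\,\U\Z\U^\dagger)$ with Haar-random $\U$ and gets the improvement from an entirely different source: adaptivity is handled by the likelihood-ratio chain rule of \cite{bubeck2020entanglement} (Lemma~\ref{lem:chain}), which reduces the KL divergence to controlling $\mathbb{E}_{\U,\V}\bigl[L_{\U}(u)L_{\V}(u)\,\phi^{\U,\V}_u\bigr]$ at each node, where $\phi^{\U,\V}_u$ is a pairwise correlation between likelihood-ratio perturbations. The $d^{4/3}$ exponent then comes from \emph{concentration of measure on $U(d)$} (Theorem~\ref{thm:levy}): one shows $F^{\V}_u$ and $G_u$ are Lipschitz (Lemma~\ref{lem:second_lip}, Corollary~\ref{cor:FGlip}), derives tail bounds for $\phi^{\U,\V}_u$ (Lemma~\ref{lem:phitail}) and the new-to-$k>1$ expectation bound $\mathbb{E}[\phi^{\U,\V}]\lesssim \epsilon^4k^4/d^2$ (Lemma~\ref{lem:phiexp}), truncates at a threshold $\tau\sim \epsilon^2k^2 d^{-4/3}$, and absorbs the rare bad event using moment bounds on $L_{\U}L_{\V}$ (Lemma~\ref{lem:oneplusZimpliesPsi}). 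The hypotheses $k\le O(\sqrt d)$ and $k\ll\log(d)/\epsilon^2$ are there to tame the $(1+\epsilon^2)^{k-1}$ factors in the Lipschitz constants and likelihood-ratio moments, not as thresholds for combinatorial cancellations. Note also that this argument genuinely needs the Haar measure (Levy concentration), so replacing it by a finite approximate $2k$-design, as you suggest, would break the concentration step. Without the likelihood-ratio/concentration machinery, your outline cannot get past $\Omega(d/\epsilon^2)$ for adaptive protocols; as written it asserts the target per-round bound rather than proving it.
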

\noindent This can be thought of as a strict strengthening of the main result in \cite{bubeck2020entanglement} which considered the special case of $k = 1$. In \cite{o2015quantum} it was shown that if one can make fully entangled measurements across all copies, then $\mathcal{O}(d/\epsilon^2)$ copies suffice for mixedness testing.  Therefore, we find that if we can perform measurements on at most $k$ replicas of $\rho$ simultaneously then for $k$ satisfying $k\le d^{1/2}$ and $k \ll \log(d)/\epsilon^2$, we require at least $\Omega(d^{4/3}/\epsilon^2)$ total copies of $\rho$ to solve the mixedness testing task whereas we only need $\mathcal{O}(d/\epsilon^2)$ copies if we can make entangled measurements on \textit{arbitrarily many} replicas.


\section{Discussion}

We have established a hierarchy of $\text{poly}(k,n)$-versus-$k$ replica advantages for learning about quantum states.  It would be nice to find variants of our hierarchy which are gate-efficient, in the sense that given $\text{poly}(k,n)$ replicas it is provably gate-efficient to distinguish $\mathds{1}/2^n$ from $\frac{1}{2^n}(\mathds{1} + \frac{1}{3k}\U Z^{\otimes n} \U^\dagger)$.  In our proof of Theorem~\ref{thm:easiness}, we leverage an existing algorithm for shadow tomography which is efficient in the number of replicas but is not necessarily gate-efficient.

Another important open question is to find a natural $f(k)$-versus-$k$ replica advantage for all $k$, where $f(k)$ is $n$-independent. More ambitiously, does there exist a $(k+1)$-versus-$k$ replica advantage?  This would tighten the replica quantum advantage hierarchy, and elucidate the added power of one additional replica.

Our investigations have been primarily limited to problems for learning about quantum states, but it seems quite likely that there exist hierarchies of replica advantages for learning about quantum channels as well.  Indeed, there are several known $2$-versus-$1$ replica advantages for learning properties of quantum channels~\cite{aharonov2021quantum, chen2021exponential}, and some hierarchy results regarding learning properties of elaborate quantum oracles encoding either a cryptographic generalization of Simon's problem~\cite{chia2020need} or the welded tree problem~\cite{coudron2020computations}. 
Note that in these problems, the quantum channels and the quantum oracles are not gate-efficient to implement.

A more general theoretical framework which goes beyond replica quantum advantage is quantum memory advantage.  In this context we can ask if there are tasks which can be efficiently carried out by a quantum computer with $\text{poly}(k,n)$ qubits of quantum memory versus $k \cdot n$ qubits of quantum memory. This allows for more complicated protocols than the replica setting.  For instance, in the state learning problems, suppose we have a $k\cdot n$ qubit quantum memory and we are storing $k$ $n$-qubit quantum states therein.  If we want to make room for an additional $n$-qubit state, then we could remove $n/k$ qubits from each of the $k$ $n$-qubit quantum states.
Such an operation is unlikely to be achieved in the near-term on more conventional experimental physics platforms since there are few means of interfacing experimental samples with an external quantum memory.  But this may be possible in the future.
While a gate-efficient quantum memory advantage is proven in~\cite{huang2021quantum}, establishing a \textit{hierarchy} of quantum memory advantage is an interesting open question.

Finally, it would be very interesting to experimentally implement replica quantum advantage in the NISQ era~\cite{preskill2018quantum}.
We anticipate that there may be interesting \textit{polynomial} replica quantum advantages that would have utility in realistic experimental settings, via the aid of entangled experimental apparatuses that can jointly measure several system copies.
Generalizing conventional experimental platforms to furnish this capability appears to be a worthy goal for future technological development.

\vspace*{6pt}
\noindent {\bf Acknowledgments.}\quad 
We thank Aram Harrow, Misha Lukin, Ryan O'Donnell, and John Wright for valuable discussions. SC is supported by the National Science Foundation under Award 2103300. JC is supported by a Junior Fellowship from the Harvard Society of Fellows, the Black Hole Initiative, as well as in part by the Department of Energy under grant {DE}-{SC0007870}.  HYH is supported by the Google PhD Fellowship. Part of this work was completed while SC and JL were visiting the Simons Institute for the Theory of Computing.

\pagebreak
\onecolumngrid
\appendix

\section{Technical Preliminaries}

\subsection{Notation}

Given an even $d\in\mathbb{N}$, let $\rhomm\coloneqq \frac{1}{d}\Id$ denote the maximally mixed state on $d$ qudits, and let $\Z\coloneqq \text(1,\ldots,-1,\ldots)$ denote the diagonal matrix with $d/2$ $1's$ and $d/2$ $-1$'s. When $d = 2^n$, we will use $\Z$ interchangeably with $Z^{\otimes n}$, where $Z$ denotes the $2\times 2$ Pauli $Z$ matrix.  Given $\U\in U(d)$, define $\rho_{\U}(\epsilon) \coloneqq \frac{1}{d}(\Id + \epsilon \U\Z\U^{\dagger})$. When $\epsilon$ is clear from context, we will refer to this as $\rho_{\U}$.

Given $S\subseteq[k]$ and matrices $A,B\in\mathbb{C}^{d\times d}$, we will let $A\otimes_S B$ denote the $k$-fold tensor whose $i$-th component is $A$ if $i\in S$ and $B$ otherwise.

Given a matrix $M\in\mathbb{C}^{d\times d}$, we will let $\norm{M}_p$ denote its Schatten-$p$ norm. Likewise, given a scalar random variable $Z$, we will denote $\norm{Z}_p \coloneqq \E{|Z|^p}^{1/p}$. When $p = \infty$, we will denote $\norm{M}_{\infty}$ by $\norm{M}$. When $p = 2$, we will denote $\norm{M}_2$ by $\norm{M}_{HS}$. 

Given functions $f,g:\Omega\to\R$ and a distribution $\mu$ over $\Omega$, we will use the notation
\begin{equation}
    \iprod{f,g}_{\mu} \coloneqq \E[v\sim\mu]{f(v)g(v)}. \label{eq:iprod}
\end{equation}

Let $\calS_{m}$ denote the symmetric group on $m$ elements. Given permutation $\pi\in\calS_{m}$, we let $\#\pi$ denote the number of cycles.

\subsection{Unitary Designs}
\label{sec:design}
Let us begin by defining approximate unitary designs.
\begin{definition}\label{def:kdesign}
    Given a distribution $\calD$ over $U(d)$, define the $2k$-fold twirling channel $\calC^{(k)}_{\calD}$ by 
    \begin{equation}
        \calC^{(2k)}_{\calD}[\rho] \coloneqq \E[\U\sim\calD]*{\U^{\otimes 2k}\rho {\U^{\dagger}}^{\otimes 2k}}
    \end{equation}
    for all $d$-qudit mixed states $\rho$. Denote by $\calC^{(2k)}_{\text{\rm Haar}}$ the twirling channel when $\calD$ is the Haar measure.
    
    In general, we say that $\calD$ forms a \emph{$\delta$-approximate $2k$-design} if
    \begin{equation}
        \norm{\calC^{(2k)}_{\calD} - \calC^{(2k)}_{\text{\rm Haar}}}_{\diamond} \le \frac{\delta}{d^{2k}},
    \end{equation} where $\norm{\cdot}$ denotes the diamond norm.
\end{definition}

We will use the following result stating that certain random quantum circuits are approximate designs. We first define these circuits:

\begin{definition}[Random $V$-Interleaved Clifford Circuits]\label{def:interleaved}
    Fix a $2\times 2$ unitary matrix $V$ and consider the following distribution $\calD$ over $2^n\times 2^n$ unitary matrices. To sample once from $\calD$, sample $D$ independent draws $\xi_1,\ldots,\xi_D$ from the uniform distribution over the set $\brc{V\otimes \Id_{2}^{\otimes (n-1)}, V^{\dagger}\otimes \Id_2^{\otimes (n-1)}, \Id_2^{\otimes n}}$, sample $D$ independent draws $\chi_1,\ldots,\chi_D$ from the uniform distribution over the Clifford group on $n$ qubits, and output the product $\chi_D \xi_D \cdots \chi_2 \xi_2 \chi_1 \xi_1$. We call a sample from $\calD$ a \emph{random $V$-interleaved Clifford circuit on $n$ qubits of depth $D$}.
\end{definition}

\noindent A schematic of a $V$-interleaved Clifford circuit on $n = 8$ qubits of depth $D = 4$ is shown in Figure~\ref{fig:interleaved}.

\begin{theorem}[Theorem 1 from \cite{haferkamp2020homeopathy}]\label{thm:homeopathy}
    Let $V\in U(2)$ be any non-Clifford unitary matrix, and let $d = 2^n$ for some $n\in\mathbb{N}$. There are absolute constants $c_1,c_2 > 0$ depending on $V$ such that for any depth $D \ge c_1\log^2(k)(k^4 + k\log(1/\delta))$ and any $n \ge c_2 k^2$, the ensemble of random $V$-interleaved Clifford circuits on $n$ qubits forms a $\delta$-approximate $k$-design. 
\end{theorem}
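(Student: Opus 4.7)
The plan is to establish the theorem via the standard moment-operator / spectral-gap route, specializing to the peculiar ensemble of interleaved Clifford circuits. Let $T_\calD \coloneqq \E_{U\sim\calD}[U^{\otimes k}\otimes (U^*)^{\otimes k}]$ be the order-$k$ moment operator, and recall that Haar-measure exactness corresponds to $T_{\rm Haar}$ being the orthogonal projector onto the Haar commutant $\calH_k\coloneqq \mathrm{span}\{P_\pi : \pi\in\calS_k\}$ (for $n\ge \log_2 k$). The diamond-norm distance in the statement is controlled by $\|T_\calD - T_{\rm Haar}\|_\infty$ up to a factor of $d^{2k}$, so the task reduces to bounding the latter by $\delta/d^{4k}$ (which yields the final $\delta/d^{2k}$ after translating back).

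First I would exploit the fact that the Clifford part of each layer averages independently. For a single layer $\chi_2\,\xi\,\chi_1$ of the circuit, the Clifford averages act as the orthogonal projector $\Pi_{\rm Cl}$ onto the Clifford commutant $\calK_k\supseteq \calH_k$, so the per-layer moment operator factors as $M\coloneqq \Pi_{\rm Cl}\,T_\xi\,\Pi_{\rm Cl}$, and the depth-$D$ moment operator is $M^D$. Since $\Pi_{\rm Cl}$ fixes $\calH_k$, we have $T_{\rm Haar}\,M = M\,T_{\rm Haar} = T_{\rm Haar}$, so the convergence rate is governed by the operator norm of $M$ restricted to $\calK_k\ominus\calH_k$. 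The problem therefore splits into (i) characterizing $\calK_k\ominus\calH_k$ concretely, and (ii) showing that the non-Clifford gate $V$ resolves all such vectors, yielding a quantitative spectral gap.

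For step (i) I would invoke the Gross--Nezami--Walter characterization of the Clifford commutant in terms of stochastic Lagrangian subspaces of $\mathbb{F}_2^{2k}$: the dimension of $\calK_k$ grows like $\prod_{j=0}^{k-2}(2^j+1)$, exceeding $|\calS_k|$ for $k\ge 4$, and an explicit basis $\{r_T\}_T$ indexed by stochastic Lagrangians is available. For step (ii), the key input is the single-qubit calculation: because $V$ is non-Clifford, the single-qubit random-gate moment operator $T_\xi^{(1)}$ has strictly smaller Clifford commutant fixed space than $\Pi_{\rm Cl}^{(1)}$ itself; I would quantify this by computing, for each ``extra'' stochastic Lagrangian vector $r_T$ with $T\notin\calS_k$, a nonzero matrix element $\langle r_T, (\Id - T_\xi)\,r_T\rangle$ bounded below by some explicit constant $c(V) > 0$ obtainable by direct computation on the three-element gate set $\{V\otimes\Id, V^\dagger\otimes\Id, \Id\}$. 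Combined with a locality argument over which qubit $V$ acts on (a union bound over the $n$ qubits being ``hit'' by the one-qubit gate, effectively achieved here because the random Clifford layer permutes the qubit where $V$ lives uniformly), this lower-bounds the spectral gap of $M$ on $\calK_k\ominus\calH_k$ by $\Delta \gtrsim 1/\mathrm{poly}(k)$, with the precise exponent traceable through the GNW counting to give $\Delta\ge \Omega(1/k^4)$.

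The main obstacle will be step (ii): the naive bound on how well $V$ ``detects'' non-permutation elements of the Clifford commutant is too weak, because one must handle simultaneously \emph{all} of the exponentially-many (in $k$) stochastic Lagrangians and rule out accidental near-invariance. I expect this to require either an explicit diagonalization of $T_\xi^{(1)}$ on each Clifford-commutant block and a minimum-gap argument, or a more combinatorial ``detectability lemma'' in the spirit of Nachtergaele--Aharonov et al.\ applied to the Cayley-like graph on stochastic Lagrangians whose edges are induced by the action of $V$. Granting a gap of the form $\Delta \ge c k^{-4}$, one then iterates: after $D$ layers, $\|M^D\big|_{\calK_k\ominus\calH_k}\|\le (1-\Delta)^D$, and to reach $\delta/d^{4k}$ it suffices to take
\begin{equation}
D \;\gtrsim\; \Delta^{-1}\bigl(\log(1/\delta) + 4k\log d\bigr) \;\lesssim\; k^4\bigl(\log(1/\delta) + k\bigr),
\end{equation}
with the extra $\log^2 k$ factor arising from the sharper analysis needed to convert operator-norm convergence to diamond-norm convergence via the standard $2\to 2$ to $1\to 1$ comparison on the commutant (which loses $\mathrm{polylog}(k)$ when one uses the optimal ordering of projectors). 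The assumption $n\ge c_2 k^2$ is used only to ensure that the Haar commutant has the ``generic'' dimension $k!$ so that the GNW basis is unambiguous and $\calK_k\ominus\calH_k$ admits the stated combinatorial description.
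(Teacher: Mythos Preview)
This theorem is not proved in the paper at all: it is quoted verbatim as ``Theorem 1 from \cite{haferkamp2020homeopathy}'' and used as a black box. The authors only need the statement that random $V$-interleaved Clifford circuits of the specified depth form a $\delta$-approximate $2k$-design, and they invoke it once (in the proof of Theorem~\ref{thm:kdesign_formal}) to instantiate $\calD$. There is therefore no ``paper's own proof'' to compare your proposal against.

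Your sketch is a reasonable outline of the argument in the cited Haferkamp et al.\ paper itself---the moment-operator factorization, the Gross--Nezami--Walter stochastic-Lagrangian description of the Clifford commutant, and the spectral-gap analysis for the non-Clifford gate are indeed the ingredients there---but reproducing that proof is not part of the present paper's contribution, and you should simply cite the result rather than attempt to re-derive it.
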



\begin{figure}[t]
    \centering
    \includegraphics[width=4cm]{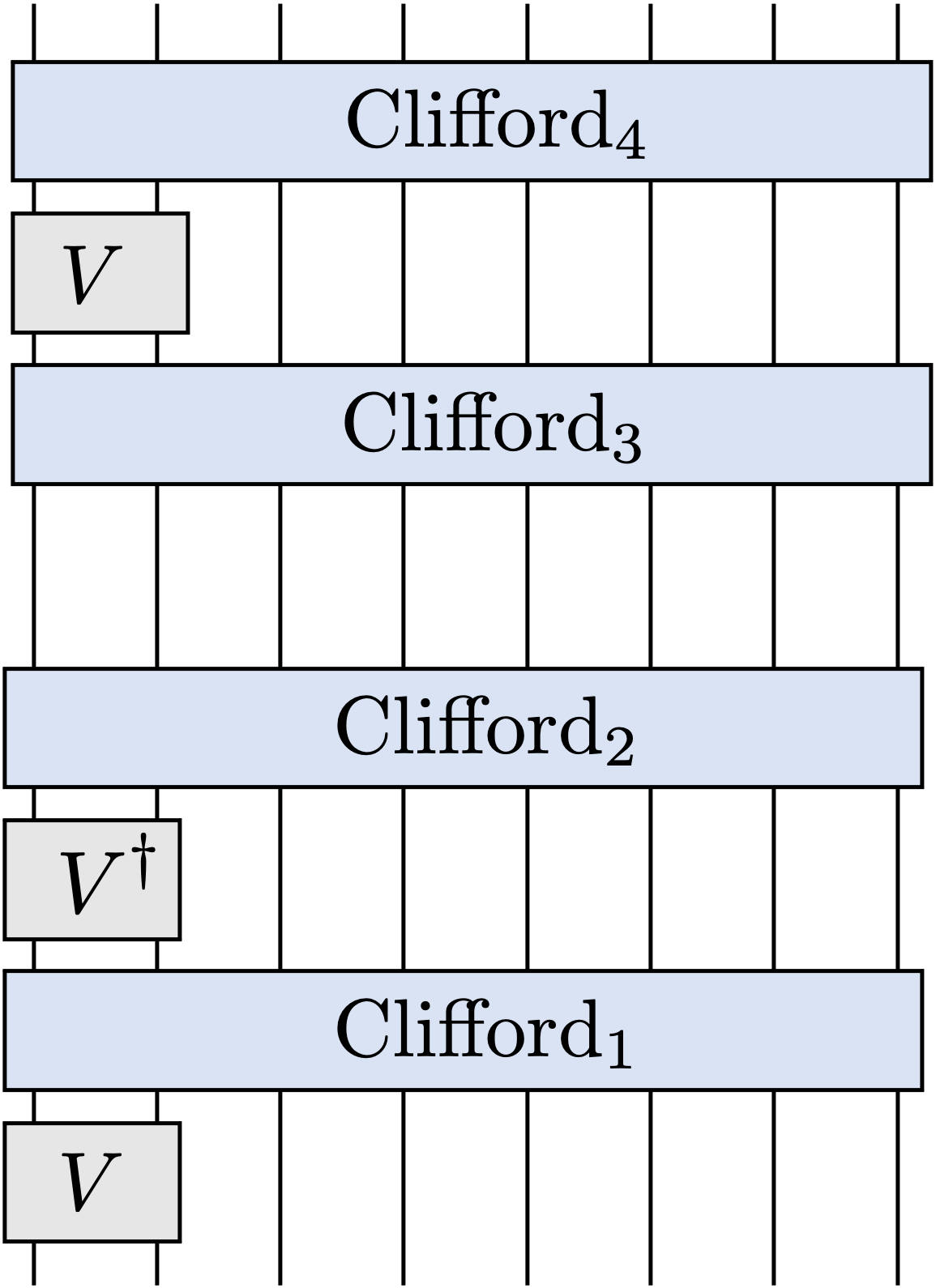}
    \centering
    \caption{\textit{Diagram of a $V$-interleaved circuit.} Shown above is a $V$-interleaved circuit on $8$ qubits with depth $4$.  We have $\xi_1 = V \otimes \Id_2^{\otimes (n-1)}$, $\xi_2 = V^{\dagger}\otimes \Id_2^{\otimes (n-1)}$, $\xi_3 = \Id_2^{\otimes n}$, $\xi_4 = V^\otimes \Id_2^{\otimes (n-1)}$, and $\chi_i = \text{Clifford}_i$. }
    \label{fig:interleaved}
\end{figure}

\subsection{Miscellaneous Combinatorial Facts}

\begin{fact}\label{fact:sumcycles}
    For any $m\in\mathbb{N}$, $\sum_{\pi\in \calS_m} d^{\#\pi} = d(d+1)\cdots(d+m-1)$.
\end{fact}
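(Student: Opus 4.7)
\textbf{Proof plan for Fact~\ref{fact:sumcycles}.} The identity is the classical generating function for unsigned Stirling numbers of the first kind. My plan is to prove it by induction on $m$, showing that both sides satisfy the same first-order recurrence $f_{m+1}(d) = (d+m)\, f_m(d)$ with initial value $f_1(d) = d$.

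For the base case $m=1$, the symmetric group $\calS_1$ contains only the identity, which has a single cycle, so $\sum_{\pi\in\calS_1} d^{\#\pi} = d$, matching $d$ on the right-hand side. For the inductive step, set $f_m(d) \coloneqq \sum_{\pi\in\calS_m} d^{\#\pi}$ and partition the permutations in $\calS_{m+1}$ according to whether the element $m+1$ is a fixed point. If $m+1$ is a fixed point, then deleting it yields an arbitrary $\sigma\in\calS_m$ with $\#\pi = \#\sigma + 1$, contributing $d\cdot f_m(d)$ to the sum. If $m+1$ lies in a cycle of length at least $2$, then removing $m+1$ from its cycle yields an arbitrary $\sigma\in\calS_m$ with $\#\pi = \#\sigma$, and conversely each such $\sigma$ admits exactly $m$ ways to reinsert $m+1$ (once after each of the $m$ existing elements in their cyclic order), contributing $m\cdot f_m(d)$. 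Summing the two cases gives $f_{m+1}(d) = (d+m)\, f_m(d)$, which combined with the inductive hypothesis $f_m(d) = d(d+1)\cdots(d+m-1)$ yields $f_{m+1}(d) = d(d+1)\cdots(d+m)$, completing the induction.

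There is no real obstacle here: the only point requiring care is the bijection in the non-fixed-point case, namely verifying that the map ``delete $m+1$ from its cycle'' is an $m$-to-$1$ surjection onto $\calS_m$ that preserves the cycle count. An alternative approach would be to expand $d(d+1)\cdots(d+m-1) = \sum_{k} c(m,k)\, d^k$ and invoke the standard combinatorial interpretation of the unsigned Stirling numbers $c(m,k)$ as counting permutations in $\calS_m$ with exactly $k$ cycles, but the inductive argument above is self-contained and avoids citing this identity as a black box.
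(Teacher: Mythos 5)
Your proof is correct. The paper states Fact~\ref{fact:sumcycles} without proof, treating it as the classical identity that $d(d+1)\cdots(d+m-1)$ is the generating polynomial for permutations of $[m]$ weighted by cycle count (equivalently, the unsigned Stirling numbers of the first kind); your induction via the recurrence $f_{m+1}(d) = (d+m)f_m(d)$, obtained by casing on whether $m+1$ is a fixed point and checking that deletion of $m+1$ from a longer cycle is an $m$-to-$1$, cycle-count-preserving map onto $\calS_m$, is a standard and fully valid self-contained derivation of exactly this identity.
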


\begin{fact}\label{fact:evencycles}
    For any even $m$, $\sum_{\tau\in \calS_m\,\text{\rm even}} d^{\#\tau} = (m - 1)!!\cdot d(d+2)\cdots (d+m-2)$, where the sum is over $\tau$ which consist only of even cycles. For any odd $m$, this quantity is zero.
\end{fact}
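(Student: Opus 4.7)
The statement naturally splits into two cases. For odd $m$, any permutation consisting solely of even cycles would have size equal to a sum of even numbers, which must be even; since $m$ is odd, no such permutation exists and the sum is empty. So the substance is for even $m$.

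My plan for even $m$ is to reinterpret the sum as a trace involving $\Z$, in parallel with the standard derivation of Fact~\ref{fact:sumcycles}. Let $P_\pi$ denote the unitary on $(\mathbb{C}^d)^{\otimes m}$ permuting tensor factors by $\pi\in\calS_m$, so that $\tr(P_\pi \Z^{\otimes m}) = \prod_i \tr(\Z^{c_i(\pi)})$, the product ranging over cycles of $\pi$. Since $\Z$ has equal numbers of $+1$ and $-1$ eigenvalues, one has $\tr(\Z^k) = d\cdot \mathds{1}[k\text{ even}]$. Consequently, any $\pi$ containing an odd cycle contributes zero, while each $\tau$ whose cycles are all even contributes exactly $d^{\#\tau}$, giving
\begin{equation}
\sum_{\pi\in\calS_m}\tr\bigl(P_\pi \Z^{\otimes m}\bigr) = \sum_{\tau\in\calS_m\,\text{even}} d^{\#\tau}.
\end{equation}

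To evaluate the left-hand side I would invoke the identity $\tfrac{1}{m!}\sum_{\pi\in\calS_m}P_\pi = \Pi_{\mathrm{sym}}$, the orthogonal projector onto the symmetric subspace of $(\mathbb{C}^d)^{\otimes m}$, reducing the task to computing $m!\cdot\tr(\Pi_{\mathrm{sym}}\Z^{\otimes m})$. I would evaluate this trace using the standard multiset basis of the symmetric subspace: splitting the computational basis of $\mathbb{C}^d$ into the $+1$ and $-1$ eigenspaces of $\Z$ (each of dimension $d/2$) and counting multisets by stars and bars, the trace becomes the Vandermonde-type sum $\sum_{j=0}^m (-1)^j \binom{d/2+j-1}{j}\binom{d/2+m-j-1}{m-j}$, which is the $z^m$ coefficient of $(1-z)^{-d/2}(1+z)^{-d/2} = (1-z^2)^{-d/2}$.

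For $m = 2\ell$ this coefficient is $\binom{d/2+\ell-1}{\ell}$, so multiplying by $m!$ and using the identities $(m-1)!! = m!/(2^\ell\,\ell!)$ and $d(d+2)\cdots(d+m-2) = 2^\ell(d/2)(d/2+1)\cdots(d/2+\ell-1)$ I expect to match the target expression $(m-1)!!\cdot d(d+2)\cdots(d+m-2)$ after elementary cancellation. I do not anticipate a serious obstacle here; the main care is in tracking the combinatorial factors in the final simplification, and noting that the generating function automatically handles the odd-$m$ case since all odd powers of $z$ in $(1-z^2)^{-d/2}$ vanish.
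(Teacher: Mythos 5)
Your argument is correct, and it fills a genuine gap in the exposition: the paper states Fact~\ref{fact:evencycles} without proof, treating it as a known combinatorial identity. The verification of each step checks out: $\tr(P_\pi \Z^{\otimes m}) = \prod_i \tr(\Z^{c_i(\pi)})$ with $\tr(\Z^j) = d$ for even $j$ and $0$ for odd $j$ (using that $d$ is even, which is the paper's standing assumption since $d = 2^n$), the projector identity $\frac{1}{m!}\sum_\pi P_\pi = \Pi_{\mathrm{sym}}$, the multiset count giving the alternating Vandermonde sum, its identification as the coefficient of $z^m$ in $(1-z^2)^{-d/2}$, and the final simplification $m!\binom{d/2+\ell-1}{\ell} = (m-1)!!\cdot d(d+2)\cdots(d+m-2)$ via $(2\ell)! = 2^\ell \ell!\,(2\ell-1)!!$ all hold, and the odd-$m$ case is handled correctly both by parity of cycle lengths and by the generating function. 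The more common route to this identity is purely combinatorial: either the exponential generating function $\sum_m E_m x^m/m! = \exp\bigl(d\sum_{j\ge 1} x^{2j}/(2j)\bigr) = (1-x^2)^{-d/2}$, which is your final generating function reached without any operators, or a recursion on the cycle containing the element $m$, yielding $E_m = \sum_{j\ge 1} \frac{(m-1)!}{(m-2j)!}\, d\, E_{m-2j}$, from which the closed form follows by induction; these have the small advantage of establishing the identity as a polynomial identity in $d$ (no evenness of $d$ needed). Your operator-theoretic derivation buys a tighter fit with how the fact is actually deployed in Lemmas~\ref{lem:secondmoment} and~\ref{lem:phiexp}, where the quantity being evaluated is precisely $\sum_\tau \Tr(P_\tau \Z^{\otimes m})$, so the detour through $\Pi_{\mathrm{sym}}$ is a natural and self-contained way to get the same answer; in the paper's setting ($d$ even) there is no loss of generality.
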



\subsection{Weingarten Calculus}

Consider the Haar measure over the $d \times d$ unitary group $U(d)$.  The $m$th moment of $U(d)$ with respect to the Haar measure can be written as
\begin{equation}
\label{E:HaarUint1}
\mathbb{E}_{U \sim \text{Haar}}\!\left[U_{i_1 j_1} U_{i_2 j_2} \cdots U_{i_m j_m} U_{i_1' j_1'}^\dagger U_{i_2' j_2'}^\dagger \cdots U_{i_m' j_m'}^\dagger\right] = \sum_{\sigma, \tau \in\calS_m} \delta_{\sigma(I), J'} \delta_{\tau(J), I'} \text{Wg}(\sigma \tau^{-1}, d)
\end{equation}
where on the right-hand side we have used the multi-index notation $I = (i_1, i_2,...,i_m)$, as well as the notation
\begin{equation}
\delta_{\sigma(I), J'} := \delta_{i_{\sigma(1)}, j_1'} \delta_{i_{\sigma(2)}, j_2'} \cdots \delta_{i_{\sigma(m)}, j_m'}\,.
\end{equation}
The function $\text{Wg}(\,\cdot\,, d) :\calS_m \to \mathbb{R}$ is called the Weingarten function.  Note that~\eqref{E:HaarUint1} implies the formula
\begin{equation}
\label{E:AUBUformula}
\mathbb{E}_{U \sim \text{Haar}}\!\left[ \tr{(AUBU^\dagger)}^m\right] = \sum_{\sigma, \tau \in\calS_m} \text{tr}(\tau^{-1} A^{\otimes m}) \text{tr}(\sigma B^{\otimes m}) \, \text{Wg}(\sigma \tau^{-1},d)
\end{equation}
which we will make use of later.

We record the following estimates of Weingarten functions which will be utilized below.
\begin{lemma}[Lemma 6 from \cite{aharonov2021quantum}]\label{lem:weinsum}
$\sum_{\tau \in\calS_m} |\text{\rm Wg}(\tau, d)| = \frac{(d-m)!}{d!}$\,.
\end{lemma}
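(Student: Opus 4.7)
The plan is to reduce the absolute-value sum to a signed sum via the sign-positivity of the Weingarten function, then evaluate the signed sum by a short character calculation. The main obstacle is the sign-positivity step itself; everything else is only a few lines.

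The key input is the classical fact that for $d \ge m$, $\operatorname{sgn}(\tau)\,\Wg(\tau,d) \ge 0$ for every $\tau \in \calS_m$ (provable, for instance, via Novak's Jucys--Murphy expansion of $\Wg$). Granting this,
\begin{equation*}
\sum_{\tau \in \calS_m} |\Wg(\tau,d)| = \sum_{\tau \in \calS_m} \operatorname{sgn}(\tau)\, \Wg(\tau,d),
\end{equation*}
so it suffices to evaluate the signed sum on the right. I would do this using the character expansion
\begin{equation*}
\Wg(\tau,d) = \frac{1}{(m!)^2} \sum_{\substack{\lambda \vdash m \\ \ell(\lambda) \le d}} \frac{f_\lambda^2}{s_\lambda(1^d)}\, \chi^\lambda(\tau),
\end{equation*}
where $f_\lambda = \chi^\lambda(e)$ and $s_\lambda(1^d)$ is the Weyl dimension. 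Since $\operatorname{sgn} = \chi^{(1^m)}$ and $d \ge m$ ensures that $(1^m)$ appears in the sum, the $\calS_m$-character orthogonality relation $\sum_\tau \chi^{(1^m)}(\tau) \chi^\lambda(\tau) = m!\,\delta_{\lambda,(1^m)}$ collapses the $\lambda$-sum to the $(1^m)$-term. Inserting $f_{(1^m)} = 1$ and $s_{(1^m)}(1^d) = \binom{d}{m}$ then gives $\sum_\tau \operatorname{sgn}(\tau)\Wg(\tau,d) = 1/(m!\binom{d}{m}) = (d-m)!/d!$, which is the claim.

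As a self-contained alternative to the character expansion, the same signed sum can be extracted from a Haar integral. Expanding $\mathbb{E}_{U \sim \mathrm{Haar}}[|\det U[m]|^2]$, where $U[m]$ is the top-left $m\times m$ submatrix of $U$, via equation~\eqref{E:HaarUint1} produces $m!\sum_\tau \operatorname{sgn}(\tau)\Wg(\tau,d)$; on the other hand, Haar-invariance together with unitarity of the $m$-th compound matrix $C_m(U)$ force every $m \times m$ minor of $U$ to have equal expected squared modulus, giving $\mathbb{E}[|\det U[m]|^2] = 1/\binom{d}{m}$ directly. Either route yields the identity. Without sign-positivity, however, these calculations produce only the one-sided bound $\sum_\tau |\Wg(\tau,d)| \ge (d-m)!/d!$, which is why the sign step is the heart of the argument.
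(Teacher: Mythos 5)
The paper itself does not prove this identity; it is imported verbatim as Lemma~6 of the cited work \cite{aharonov2021quantum}, so there is no internal proof to compare against — what matters is whether your argument stands on its own, and it does. The crux is, as you say, the sign property, and your appeal to the Jucys--Murphy factorization is sound: in the group algebra one has $\mathrm{Wg}=\prod_{i=1}^m(d+J_i)^{-1}$, the expansion of each factor converges for $d\ge m$ since $\norm{J_i}\le m-1<d$, and every term contributing to the coefficient of $\tau$ is a product of transpositions whose parity is forced to be $\operatorname{sgn}(\tau)$, giving $\operatorname{sgn}(\tau)\Wg(\tau,d)\ge 0$ exactly in the regime $m\le d$ where the right-hand side $(d-m)!/d!$ is meaningful and where the paper uses the lemma. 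Your evaluation of the signed sum is also correct: in Collins' character expansion the constraint $\ell(\lambda)\le d$ is vacuous for $d\ge m$, orthogonality isolates $\lambda=(1^m)$, and $f_{(1^m)}=1$, $s_{(1^m)}(1^d)=\binom{d}{m}$ give $\frac{1}{m!\binom{d}{m}}=\frac{(d-m)!}{d!}$. The determinant alternative checks out as well: with \eqref{E:HaarUint1} one finds $\mathbb{E}\abs{\det U[m]}^2=m!\sum_{\tau}\operatorname{sgn}(\tau)\Wg(\tau,d)$, while unitarity of the compound matrix $C_m(U)$ plus Haar invariance force $\mathbb{E}\abs{\det U[m]}^2=1/\binom{d}{m}$. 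One streamlining worth noting: once you have the Jucys--Murphy factorization you need neither the character expansion nor the determinant computation — applying the one-dimensional sign representation to $\prod_{i=1}^m(d+J_i)^{-1}$ sends $J_i\mapsto -(i-1)$ and yields $\sum_{\tau}\operatorname{sgn}(\tau)\Wg(\tau,d)=\prod_{i=1}^m(d-i+1)^{-1}=(d-m)!/d!$ directly, so a single input delivers both the sign positivity and the value of the signed sum.
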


\begin{lemma}[Lemma 16 from \cite{montanaro2013weak}]\label{lem:montanaro}
For $\ell \le d^{2/3}$ and $\pi\in\calS_{m}$, $\Wg(\pi,d) \le O(d^{\#\pi - 2 m})$.
\end{lemma}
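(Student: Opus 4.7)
The plan is to analyze $\Wg(\pi,d)$ via its character-theoretic expansion. Using Schur--Weyl duality and Fourier inversion on the group algebra $\mathbb{C}[S_m]$, one derives
\begin{equation}
\Wg(\pi,d) = \frac{1}{(m!)^2}\sum_{\lambda \vdash m,\,\ell(\lambda)\le d}\frac{(f^\lambda)^2\,\chi^\lambda(\pi)}{s_{\lambda,d}},
\end{equation}
where $f^\lambda$ is the dimension of the $S_m$-irrep indexed by $\lambda$, $\chi^\lambda$ its character, and $s_{\lambda,d} = \prod_{(i,j)\in\lambda}(d+j-i)$ is the dimension of the polynomial $U(d)$-irrep associated to $\lambda$. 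Because $m \le d^{2/3}\le d$, every partition of $m$ automatically satisfies $\ell(\lambda)\le d$, so the sum is unrestricted and all quantities are well defined.

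Next I would extract the $d$-scaling by writing $s_{\lambda,d}^{-1} = d^{-m}\prod_{(i,j)\in\lambda}(1 + (j-i)/d)^{-1}$ and expanding formally in $1/d$. The coefficients are symmetric functions of the content multiset $c_\lambda = \{j-i : (i,j)\in\lambda\}$, and a classical identity (through the Jucys--Murphy elements) realizes power sums $p_r(c_\lambda)$ as eigenvalues of central elements of $\mathbb{C}[S_m]$ whose class expansions enumerate factorizations of permutations into transpositions. Substituting this into the character sum and applying orthogonality of characters collapses the double sum into a genus expansion
\begin{equation}
\Wg(\pi,d) = d^{\#\pi - 2m}\Bigl(\mathrm{Moeb}(\pi) + \sum_{g\ge 1}d^{-2g}A_g(\pi)\Bigr),
\end{equation}
where $\mathrm{Moeb}(\pi)$ is a product of signed Catalan numbers over cycles of $\pi$ and $A_g(\pi)$ counts monotone factorizations of $\pi$ of genus $g$. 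The leading term is already of the desired order; the task reduces to controlling the tail.

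The main obstacle is that in the regime $m \lesssim d^{2/3}$ the naive dimension estimate $s_{\lambda,d}\asymp d^m$ is no longer uniformly valid, and naive factorial bounds on $A_g(\pi)$ blow up the tail. The right ingredient is a Collins--\'Sniady-type monotone Hurwitz estimate of the form $|A_g(\pi)| \le C^{|\pi|+g}\binom{|\pi|+2g}{2g}$ with a universal $C$. Combined with $d \ge m^{3/2}$, each term $d^{-2g}A_g(\pi)$ shrinks geometrically in $g$ and the tail is bounded by a multiplicative constant (whose dependence on $\pi$ is dominated by $|\mathrm{Moeb}(\pi)|\le 4^{|\pi|}$). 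This yields $|\Wg(\pi,d)|\le O(d^{\#\pi - 2m})$ as claimed. The algebraic manipulations and character orthogonality are routine; the heart of the proof is the combinatorial estimate on $A_g(\pi)$ in a regime where $m$ grows polynomially in $d$.
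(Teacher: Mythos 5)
First, a point of comparison: the paper does not prove this lemma at all --- it is imported verbatim as Lemma 16 of \cite{montanaro2013weak}, whose proof in turn rests on uniform Weingarten asymptotics in the style of Collins and \'Sniady. So your attempt must stand as a from-scratch proof, and as such it has a genuine gap at its central step. The outline (character expansion, $1/d$-expansion via Jucys--Murphy elements, a genus/monotone-factorization series, then a geometric tail when $m\le d^{2/3}$) is indeed the standard route in the literature, but the entire difficulty of the lemma is the uniform-in-$m$ control of the genus coefficients, and you simply assert the estimate $|A_g(\pi)|\le C^{|\pi|+g}\binom{|\pi|+2g}{2g}$ with a universal $C$ as a ``Collins--\'Sniady-type'' input. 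That is not an off-the-shelf statement: the uniform bounds actually available in the literature carry polynomial-in-$m$ factors per unit of genus and come with convergence conditions of the flavor $m^{7/2}\ll d^2$ or $m^{3}\ll d^2$, and verifying that the series converges precisely in the regime $m\le d^{2/3}$ is exactly the content of the cited lemma; a naive count of factorizations costs a factor of order $m^2$ per extra transposition, which your binomial bound silently suppresses. As written, the proposal therefore reduces the lemma to an unproven combinatorial estimate that is essentially as hard as the lemma itself --- you flag it yourself as ``the heart of the proof,'' but that is precisely the part that needed either an argument or a citation to a specific quantitative theorem.

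Two smaller points. As literally stated with an absolute constant, the bound cannot hold: already for an $m$-cycle one has $\Wg(\pi,d)\approx \pm\,\mathrm{Cat}(m-1)\,d^{\,1-2m}$ with $\mathrm{Cat}(m-1)$ growing like $4^m$, so the correct (and, given how the paper uses the lemma inside expressions of the form $O(d)^{\#\pi}$, the intended) statement is $|\Wg(\pi,d)|\le C^{m}d^{\#\pi-2m}$; your own expansion shows this --- you note $|\mathrm{Moeb}(\pi)|\le 4^{|\pi|}$ --- yet you then state the conclusion with an implicit absolute constant, which no proof can deliver. Separately, your $s_{\lambda,d}=\prod_{(i,j)\in\lambda}(d+j-i)$ is not the dimension of the polynomial $U(d)$-irrep (the hook lengths are missing); with that definition of $s_{\lambda,d}$ the correct prefactor in the character formula is $f^\lambda/m!$ rather than $(f^\lambda)^2/(m!)^2$. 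This slip is cosmetic, but it should be fixed if the sketch is to be turned into a proof.
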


We will also make crucial use of concentration for the Haar measure over the unitary group:

\begin{theorem}\label{thm:levy}
    For any $F:U(d)\to\R$ which is $L$-Lipschitz, the random variable $F(\U)$ for Haar-random $\U\in U(d)$ is $\sigma^2$-sub-Gaussian for $\sigma = O(L/\sqrt{d})$.
\end{theorem}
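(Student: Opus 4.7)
The plan is to derive the sub-Gaussian tail bound via the classical route of a log-Sobolev inequality (LSI) combined with Herbst's argument, exploiting the strictly positive Ricci curvature of the unitary group under its bi-invariant metric. This is a well-trodden path, but each step must be assembled carefully.

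First, I would equip $U(d)$ with the bi-invariant Riemannian metric induced by the normalized Hilbert--Schmidt inner product on the Lie algebra $\mathfrak{u}(d)$. Under this metric, the Haar measure on $U(d)$ coincides with the normalized Riemannian volume, and the ``Lipschitz'' hypothesis on $F$ is taken with respect to the associated geodesic distance (which, crucially, is equivalent up to absolute constants to the Hilbert--Schmidt distance $\|U - V\|_{HS}$ typically meant in this literature). Using either the structure constants of $\mathfrak{u}(d)$ directly, or the standard formula that Ricci curvature of a compact Lie group with a bi-invariant metric is a positive multiple of the Killing form, one obtains a lower bound $\mathrm{Ric} \succeq \rho\, g$ with $\rho = \Theta(d)$.

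Second, I would invoke the Bakry--Emery criterion: any complete Riemannian manifold whose Ricci tensor is bounded below by $\rho > 0$ times the metric satisfies an LSI with constant $2/\rho$ with respect to its normalized volume measure. Hence the Haar measure on $U(d)$ satisfies an LSI with constant $c = O(1/d)$. Then I would apply Herbst's argument to convert the LSI into a sub-Gaussian moment generating function bound: if $F$ is $L$-Lipschitz with respect to the geodesic metric and the measure satisfies an LSI with constant $c$, then
\begin{equation}
\mathbb{E}_{\U \sim \mathrm{Haar}}\!\left[e^{\lambda(F(\U) - \mathbb{E}[F(\U)])}\right] \le e^{cL^2\lambda^2/2}
\end{equation}
for all $\lambda \in \R$, which is exactly the defining condition of $\sigma^2$-sub-Gaussianity with $\sigma^2 = cL^2 = O(L^2/d)$, yielding $\sigma = O(L/\sqrt{d})$ as claimed.

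The principal technical ingredient is the Ricci lower bound on $U(d)$, which is the main place where dimension-dependence enters. This, however, is a standard textbook computation for bi-invariant metrics on compact semisimple Lie groups (see, e.g., Meckes, \emph{The Random Matrix Theory of the Classical Compact Groups}) and can be invoked wholesale rather than reproved. The remaining pieces --- Bakry--Emery and Herbst --- are entirely generic semigroup machinery and introduce no problem-specific complications; the only mild subtlety is verifying that $\mathrm{Lip}$ with respect to the geodesic distance is controlled by $\mathrm{Lip}$ with respect to the Hilbert--Schmidt distance (which holds with an absolute constant).
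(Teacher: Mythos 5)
The paper does not actually prove this statement: it is quoted as a standard concentration-of-measure fact for Haar measure on the unitary group (the usual route being a log-Sobolev inequality plus Herbst, as in Meckes' book), so your proposal should be judged on its own. Your overall strategy is indeed the standard one, and the Herbst step and the comparison of geodesic versus Hilbert--Schmidt Lipschitz constants are fine. However, there is a genuine gap in your key curvature step. $U(d)$ is \emph{not} semisimple: its Lie algebra $\mathfrak{u}(d)$ has the one-dimensional center $\{i t \Id\}$, on which the Killing form $B(X,Y) = 2d\,\Tr(XY) - 2\Tr(X)\Tr(Y)$ vanishes. Consequently the Ricci tensor of the bi-invariant metric on $U(d)$ is only positive \emph{semi}definite, with a zero eigenvalue in the central direction, so there is no lower bound $\mathrm{Ric} \succeq \rho\, g$ with $\rho = \Theta(d)$ --- in fact no positive uniform lower bound at all --- and the Bakry--\'Emery criterion applied directly to $U(d)$ yields nothing. (This is why references such as Meckes state the clean Bakry--\'Emery-based LSI for $SU(d)$, $SO(d)$, $Sp(d)$ and treat $U(d)$ separately.)

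The statement is nonetheless true, and the standard patch is the one you would need to add: prove the LSI with constant $O(1/d)$ for $SU(d)$ via Bakry--\'Emery, then obtain it for $U(d)$ by writing $\U = e^{i\theta}\V$ with $\theta \in [0, 2\pi/d)$ and $\V \in SU(d)$, tensorizing the LSI over the two factors; the interval factor has length $O(1/d)$, hence LSI constant $O(1/d^2)$, which absorbs the $O(\sqrt{d})$ Lipschitz constant of the map $(\theta,\V)\mapsto e^{i\theta}\V$ in the $\theta$-coordinate, giving an overall LSI constant $O(1/d)$ for $U(d)$ and then the claimed sub-Gaussian bound via Herbst. (Alternatively, for the specific functions used in this paper, such as $\U \mapsto \phi^{\U,\V}_u$, one could observe that they depend on $\U$ only through $\U\Z\U^{\dagger}$ and are therefore invariant under a global phase, so concentration on $SU(d)$ already suffices; but as you stated the theorem for arbitrary Lipschitz $F$ on $U(d)$, the decomposition argument is required.)
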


\section{Upper Bound}
\label{sec:upper}

In this section we show that an algorithm with the ability to measure $\sim n^9 k^{12}$ replicas simultaneously can solve the distinguishing task in Theorem~\ref{thm:kdesign_formal} with exponentially fewer replicas than any algorithm which can only measure $k$ replicas at a time:

\begin{theorem}\label{thm:upperbound}
    There is an algorithm that can solve the distinguishing task in Theorem~\ref{thm:kdesign_formal} with probability at least $9/10$ using at most $\widetilde{O}(n^6 k^{12} (n^3 + \log^3(1/\epsilon)))$ copies of $\rho$ by making a joint measurement on all copies.
\end{theorem}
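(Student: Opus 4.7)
The overall strategy is to cast the distinguishing task as an instance of shadow tomography. For each unitary $\U$ in the support of $\calD$, define the observable $O_\U := \U\,Z^{\otimes n}\,\U^{\dagger}$, which has operator norm $1$. Under the null hypothesis $\rho = \rhomm$ we have $\tr(O_\U\rho) = 2^{-n}\tr(Z^{\otimes n}) = 0$ for every $\U$, while under the alternate hypothesis $\rho = \rho_{\U_0}(\epsilon)$ we have $\tr(O_{\U_0}\rho) = \epsilon$. Thus any algorithm that simultaneously estimates the family of expectations $\{\tr(O_\U\rho)\}_{\U\in\mathrm{supp}(\calD)}$ to additive accuracy $\epsilon/4$ can solve the task by thresholding the maximum estimate at $\epsilon/2$ (and, in the alternate case, outputting the $\U$ achieving the maximum).

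The next step is to bound the cardinality $M := |\mathrm{supp}(\calD)|$ using the explicit circuit description in Definition~\ref{def:interleaved}. By Theorem~\ref{thm:homeopathy} applied with $\delta = 2^{-n}$, depth $D = \widetilde{O}(k^4 + k n)$ suffices; at each of the $D$ layers one picks one of three options for $\xi$ together with one of $|\mathrm{Cliff}_n| = 2^{O(n^2)}$ Clifford elements, giving $\log M \le O(D n^2) = \widetilde{O}(n^2 k^4 + n^3 k)$. I then invoke the shadow tomography protocol of~\cite{buadescu2020improved}, which using a \emph{single} entangled measurement on $\widetilde{O}(\log^2(M)\,\mathrm{polylog}(d)/\epsilon^4)$ copies of $\rho$ produces additive-$\epsilon/4$ estimates of all $M$ expectations with probability at least $9/10$. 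Substituting $d = 2^n$ and the above bound on $\log M$, and tracking the low-order polynomial and polylogarithmic factors inside the shadow tomography routine, yields the copy complexity stated in the theorem.

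Most of the remaining work is bookkeeping --- carrying the explicit dependence on $n$, $k$, and $1/\epsilon$ through both the design-depth estimate of Theorem~\ref{thm:homeopathy} and the internal parameters of the shadow tomography routine. The only genuine subtlety is that the observables $O_\U$ must be accessible to the shadow tomography protocol; this is immediate because each $\U$ is given as a $\mathrm{poly}(n,k)$-size classical circuit description, so controlled-$O_\U$ can be efficiently implemented and $\mathrm{supp}(\calD)$ never needs to be enumerated explicitly at run time. This, rather than any deep technical difficulty, is where the plan concretizes into the quoted bound.
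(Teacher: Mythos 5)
Your reduction is correct, but it goes through a different primitive than the paper does. You reduce the task to \emph{shadow tomography} over the family of observables $\brc{\U Z^{\otimes n}\U^{\dagger}}_{\U\in\mathrm{supp}(\calD)}$ (estimate all expectations to accuracy $\epsilon/4$, threshold the maximum at $\epsilon/2$), invoking the $\widetilde{O}(\log^2 M\cdot \log d/\epsilon^4)$-copy shadow tomography guarantee of \cite{buadescu2020improved}; the paper instead reduces to the \emph{quantum hypothesis selection} guarantee from the same work (its Theorem 1.5, quoted here as Theorem~\ref{thm:tomography_algo}), run on the $m = |\mathrm{supp}(\calD)|+1$ candidate \emph{states} $\rhomm$ and $\brc{\rho_{\U}}_{\U}$, which costs $O\bigl(\frac{\gamma}{\epsilon^2}(\log^3 m + \gamma\log m)\bigr)$ copies and decides by checking whether the selected hypothesis is $\rhomm$ (using that $\norm{\rhomm-\rho_\U}_1 = \epsilon$ exactly). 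Both routes hinge on the same counting step, $\log M = O(n^2 D)$ with $D = \widetilde{O}(k^4 + kn + k\log(1/\epsilon))$, which you carry out correctly. What each buys: the paper's hypothesis-selection route has $\epsilon^{-2}$ error dependence but pays $\log^3 m$; yours has the milder $\log^2 M$ but pays $\epsilon^{-4}$. In the regime the theorem is really aimed at ($\epsilon = \Theta(1/k)$ with $n \ge c_2 k^2$) your route lands comfortably inside the stated $\widetilde{O}(n^6 k^{12}(n^3 + \log^3(1/\epsilon)))$ bound (indeed roughly $\widetilde{O}(n^5k^{12} + n^7k^6)$), so the proposal is fine; just be aware that for the general $\epsilon \le 1/3k$ allowed in Theorem~\ref{thm:kdesign_formal} your $\epsilon^{-4}$ is quadratically worse than the paper's $\epsilon^{-2}$ and, like the paper's own $\epsilon^{-2}$, is only absorbed into the stated bound when $\epsilon = \Omega(1/\mathrm{poly}(k))$. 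One small remark: your closing point about implementing controlled-$O_{\U}$ efficiently is not needed for the claim --- the theorem is purely about copy complexity, the protocol only requires classical descriptions of the observables (or states), and the paper explicitly concedes in the Discussion that the procedure is not necessarily gate-efficient.
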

\noindent We will use the algorithm of \cite{buadescu2020improved} for ``quantum hypothesis selection.'' Here we record a special case of their guarantee:

\begin{theorem}[Special Case of Theorem 1.5 from \cite{buadescu2020improved}]\label{thm:tomography_algo}
    There is an algorithm that, given a classical description of $m$ mixed states $\sigma_1,\ldots,\sigma_m$, parameters $0 < \epsilon < 1/2$, takes $N$ copies of $\sigma_i$ for an unknown $i\in[m]$,
    for
    \begin{equation}
        N = O\left(\frac{\gamma}{\epsilon^2}\cdot (\log^3 m + \gamma \log m)\right),
    \end{equation} where $\gamma\coloneqq \log\log(1/\epsilon)$, performs some fully entangled measurement on the copies of $\rho$, and with probability at least $9/10$ outputs $i'\in[m]$ satisfying $\norm{\rho - \sigma_{i'}} < \epsilon$.
\end{theorem}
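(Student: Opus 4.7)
The plan is to sketch a proof of the stated quantum hypothesis selection guarantee, which is presented as a special case of the main theorem of Bădescu-O'Donnell. The core idea is to lift the classical Scheffé tournament to the quantum setting and then accelerate it via threshold search built on matrix multiplicative weights.

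First I would set up the pairwise discrimination primitive. For each pair $(i,j)$, let $\Pi_{ij}$ be the positive-eigenspace projector of $\sigma_i - \sigma_j$; the Helstrom measurement with this projector distinguishes $\sigma_i$ from $\sigma_j$ with success probability $\tfrac{1}{2}(1 + \|\sigma_i - \sigma_j\|_1)$ on a single copy. Applied to the unknown state $\rho$, the outcome bit has expectation $\Tr(\Pi_{ij}\rho)$, and comparing this expectation to the threshold $\tfrac{1}{2}(1 + \|\sigma_i - \sigma_j\|_1)$ tells us whether $\sigma_i$ or $\sigma_j$ is closer to $\rho$ in trace distance. The classical Scheffé tournament (Devroye-Lugosi) then selects a candidate $i'$ whose trace distance to $\rho$ is within a constant factor of $\min_i \|\rho - \sigma_i\|_1$, provided each of the $\binom{m}{2}$ expectations $\Tr(\Pi_{ij}\rho)$ is estimated to additive accuracy $O(\epsilon)$ with failure probability $1/m^2$. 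Under the promise $\rho \in \{\sigma_i\}$, this outputs a $\sigma_{i'}$ with $\|\rho-\sigma_{i'}\|<\epsilon$.

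Second, I would realize this tournament quantumly via the threshold search primitive. Estimating each of the $\binom{m}{2}$ expectations separately would cost $\widetilde{O}(m^2/\epsilon^2)$ copies; instead, one runs a matrix multiplicative weights update over the sequence of pairwise threshold tests. In each round, the MMWU potential identifies a ``most informative'' observable $\Pi_{ij}$ to test; a short burst of gentle measurements on the remaining copies returns a biased estimate of $\Tr(\Pi_{ij}\rho)$, and the potential is updated multiplicatively by an exponential in that observable. Aaronson-Rothblum style gentle measurement bounds guarantee that each burst disturbs the post-measurement $N$-copy state by only a controlled amount in trace distance, so copies can be reused across rounds. The MMWU regret bound replaces the $m^2$ dependence by $O(\log m)$ after all rounds terminate.

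The main obstacle, and the source of the $\gamma=\log\log(1/\epsilon)$ factors, is calibrating the measurement granularity against the gentle-measurement distortion. Resolving differences at the $\epsilon$ scale would naively force the per-round distortion to be $\lesssim \epsilon$, yielding $\widetilde{O}(\log m/\epsilon^4)$ copies per round; but by running MMWU at $O(\gamma)$ geometrically spaced threshold scales and combining the decisions, one reduces this to roughly $O(\gamma/\epsilon^2)$ copies per round while preserving the $\log m$ regret. Multiplying through, accounting for the $O(\log^3 m)$ amplification needed to drive the tournament's union-bound failure probability below $1/10$, and absorbing lower-order terms, gives the claimed copy complexity $N = O(\gamma(\log^3 m + \gamma\log m)/\epsilon^2)$.
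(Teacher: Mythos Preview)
The paper does not prove this theorem at all: it is stated as a black-box citation (``Special Case of Theorem 1.5 from \cite{buadescu2020improved}'') and then invoked directly in the proof of Theorem~\ref{thm:upperbound}. There is no argument in the paper to compare your sketch against.

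As a reconstruction of the B\u{a}descu--O'Donnell argument, your outline has the right high-level shape (hypothesis selection reduces to estimating the pairwise Helstrom observables, which in turn reduces to threshold search), but the mechanism you describe is closer to Aaronson--Rothblum's private multiplicative weights than to what B\u{a}descu--O'Donnell actually do. Their improvement comes precisely from \emph{replacing} the MMWU/online-learning layer with a more direct iterated gentle-measurement procedure (their ``quantum event finding''/threshold search), and the $\gamma = \log\log(1/\epsilon)$ factor arises from a chi-squared-to-Hellinger conversion and a mean-estimation subroutine rather than from ``geometrically spaced threshold scales'' as you suggest. Your accounting for the $\log^3 m$ and $\gamma\log m$ terms is also loose: in their bound these come from the threshold-search sample complexity combined with the hypothesis-selection reduction, not from a union-bound amplification over the tournament. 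So while your sketch is a plausible proof \emph{idea} for a shadow-tomography-type bound, it is not an accurate reconstruction of the cited result, and the specific constants you would get from an MMWU-based analysis would not match the stated $N$.
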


We can now prove Theorem~\ref{thm:upperbound}:

\begin{proof}[Proof of Theorem~\ref{thm:upperbound}]
    First note that because the Clifford group on $n$ qubits has size $\exp(O(n^2))$, the support of $\calD$ has size \begin{equation}
        m = 3^D\cdot \exp(O(n^2 D)) = \exp(O(n^2 D)),
    \end{equation}
    where in $D = c_1\log^2(k) (k^4 + kn + k\log(1/\epsilon))$ for some absolute constant depending on $V$.
    Now consider the following algorithm. Given \begin{align}
        N &= O\left(\frac{\gamma}{\epsilon^2}\cdot (\log^3 m + \gamma \log m)\right) \\
        &= O\left(\frac{\gamma}{\epsilon^2} (n^6 D^3 + \gamma n^2 D)\right) \\
        &\le \widetilde{O}(n^6 k^{12} (n^3 + \log^3(1/\epsilon)))
    \end{align} copies of $\rho$, where $\gamma = \log\log(1/\epsilon)$, run the algorithm of Theorem~\ref{thm:tomography_algo} for the collection of mixed states $\sigma_1,\ldots,\sigma_m$ given by $\rhomm$ and all $\rho_{\U}$ for $\U$ in the support of $\calD$, taking the accuracy parameter $\epsilon$ in Theorem~\ref{thm:tomography_algo} to be $\epsilon$ in Theorem~\ref{thm:upperbound}. The guarantee of that algorithm is that with probability at least $9/10$, the algorithm succeeds and we get an index $i'\in[m]$ for which $\norm{\rho - \rho_{i'}}_1 < \epsilon$. Our algorithm will decide that $\rho = \rhomm$ if $i'$ is the index of $\rhomm$ in the collection $\brc{\sigma_j}$, and otherwise it will decide that $\rho$ came from the ensemble of states $\rho_{\U}$.
    
    It remains to justify that our algorithm successfully solves the distinguishing task with probability at least $9/10$. First, if $\rho = \rhomm$, note that $\norm{\rho - \rho_{\U}} = \epsilon$, so if the algorithm succeeds, $\rho_{i'}$ has to be $\rhomm$. By the same reasoning, if $\rho = \rho_{\U}$ for some $\U$ in the support of $\calD$, then it cannot be that $\rho_{i'} = \rhomm$. This concludes the proof.
\end{proof}

\section{Lower Bound Framework}
\label{sec:framework}

In this section we overview our general framework for proving sample complexity lower bounds, which follows that of~\cite{chen2021exponential}. We will be interested in \emph{distinguishing tasks}: given the ability to make some number $N$ of measurements of an unknown state $\rho$ which is promised to lie in one of two possible sets $S_0$ and $S_1$, we must decide based on the \emph{transcript}, i.e. the sequence of measurement outcomes we observe, whether the object lies in $S_0$ or $S_1$.

We now formalize what it means for an algorithm to solve a distinguishing task by making entangled measurements of $k$ replicas of $\rho$ at a time:

\begin{definition}[Tree representation for testing algorithms,~\cite{chen2021exponential}] \label{def:treelearnstate}
    Fix an unknown quantum state $\rho\in\mathbb{C}^{d\times d}$. A testing algorithm making entangled measurements on $k$ replicas of $\rho$ at a time can be expressed as a pair $(\mathcal{T},\mathcal{A})$. $\mathcal{T}$ is a rooted tree of depth $N$, where each node in the tree encodes all the measurement outcomes the algorithm has seen so far. The tree satisfies the following properties:
    \begin{itemize}
        \item Each node $u$ is associated with a probability $p_{\rho}(u)$.
        \item For the root $r$ of the tree, $p_{\rho}(r) = 1$.
        \item At each non-leaf node $u$, we measure a POVM $\{\mathcal{M}^u_v\}_{v}$ on $\rho^{\otimes k}$ to obtain a classical outcome $v$.
        \item If $v$ is a child node of $u$, then
                \begin{equation}
                    p_{\rho}(v) = p_{\rho}(u) w^u_v \cdot d^k \bra{\psi^u_v} \rho^{\otimes k} \ket{\psi^u_v},
                \end{equation}
                where $\{w^u_v \cdot d^k \ketbra{\psi^u_v}\}_v$ is a rank-$1$ POVM that depends on the node $u$.
        \item Every root-to-leaf path is of length $N$. 
    \end{itemize}
    For any node $u$ at depth $t$, $p_{\rho}(u)$ is the probability that the transcript is given by the sequence of outcomes in the path from $r$ to $u$ after $t$ measurements. $\mathcal{A}$ is any classical algorithm that takes as input a transcript corresponding to any leaf node $\ell$ and outputs $i\in\brc{0,1}$
\end{definition}

Observe that any tree $\mathcal{T}$ induces a probability distribution over transcripts of length $N$, and to any $\rho$ is associated a distribution $p^{\le N}_{\rho}$ over transcripts. When $N$ is clear from context, we will omit the superscript.

Formally we must argue that if $N$ is insufficiently large, then no algorithm $\calA$ mapping transcripts $T$ to $\brc{0,1}$ can satisfy $\Pr[T\sim p_{\rho}]{\calA(T) = i} \ge 2/3$ for all $\rho\in S_i$ and $i\in\brc{0,1}$ (the constant 2/3 could be replaced with an arbitrary constant greater than 1/2).

A standard trick to show this is to reduce to the following \emph{average-case} question. For $i\in\brc{0,1}$, let $\calD_i$ be some distribution over $S_1$ that we are free to choose. We want to show that for $N$ insufficiently large, it is information-theoretically impossible to distinguish between the following two scenarios: 1) $\rho$ is sampled from $\calD_0$, and we observe a transcript drawn from $p_{\rho}$, and 2) likewise but $\rho$ is sampled from $\calD_1$. The following fundamental result in binary hypothesis testing shows that to establish a lower bound for the original distinguishing task, it suffices to establish a lower bound for this average-case task.


\begin{lemma}[Le Cam's two-point method]\label{lem:lecam}
    If there existed distributions $\calD_0, \calD_1$ over $S_0, S_1$ for which
    \begin{equation}
    d_{\text{\rm TV}}\left(\E[\rho\sim\calD_0]{p_\rho}, \E[\rho\sim\calD_1]{p_\rho}\right) < 1/3\,,    
    \end{equation}
    then no algorithm $\calA$ mapping transcripts of length $N$ to $\brc{0,1}$ can satisfy $\Pr[T\sim p_{\rho}]{\calA(T) = i} \ge 2/3$ for all $\rho\in S_i$ and $i\in\brc{0,1}$.
\end{lemma}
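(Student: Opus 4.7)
The plan is to argue by contradiction. Suppose some algorithm $\calA$ mapping length-$N$ transcripts to $\brc{0,1}$ satisfies $\Pr[T\sim p_\rho]{\calA(T) = i}\ge 2/3$ for every $\rho\in S_i$ and each $i\in\brc{0,1}$. Writing $\bar q_i \coloneqq \E[\rho\sim\calD_i]{p_\rho}$ for the two mixture-over-transcripts distributions, I will show this forces $d_{\text{\rm TV}}(\bar q_0,\bar q_1)\ge 1/3$, contradicting the hypothesis.

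The first step is to lift the pointwise success guarantee to an averaged one. Since $\Pr[T\sim p_\rho]{\calA(T)=i}\ge 2/3$ for every $\rho$ in the support of $\calD_i$, and since $\calA$'s output depends only on the transcript, I can exchange the expectation over $\rho\sim\calD_i$ with that over $T$ to obtain
\begin{equation}
    \Pr[T\sim \bar q_i]{\calA(T) = i} \,=\, \E[\rho\sim\calD_i]*{\Pr[T\sim p_\rho]{\calA(T) = i}} \,\ge\, 2/3.
\end{equation}
The key fact underwriting this swap is that in the tree representation of Definition~\ref{def:treelearnstate}, the transcript distribution $p_\rho$ is linear in $\rho$ (each leaf probability is the product of Born-rule factors $\bra{\psi^u_v}\rho^{\otimes k}\ket{\psi^u_v}$ along the root-to-leaf path, hence linear in $\rho$ in the sense that $p_{\E[\rho]} = \E[p_\rho]$ when the measurement choices are fixed in advance by the tree). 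Now define the event $E \coloneqq \brc{T : \calA(T) = 0}$ on the space of length-$N$ transcripts. The above inequality, applied at $i=0$ and $i=1$, yields $\bar q_0(E)\ge 2/3$ and $\bar q_1(E)\le 1/3$ respectively.

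The second step invokes the variational characterization $d_{\text{\rm TV}}(\mu,\nu) = \sup_{A}|\mu(A)-\nu(A)|$, applied to the event $E$ just constructed, to obtain
\begin{equation}
    d_{\text{\rm TV}}(\bar q_0, \bar q_1) \,\ge\, \bar q_0(E) - \bar q_1(E) \,\ge\, 2/3 - 1/3 \,=\, 1/3,
\end{equation}
contradicting the hypothesis $d_{\text{\rm TV}}(\bar q_0,\bar q_1) < 1/3$. There is no real obstacle here; this is the textbook reduction from worst-case to average-case distinguishing, and the only mild subtlety is handling a possibly randomized $\calA$, which one dispatches by first conditioning on its internal randomness and noting that both the worst-case success bound and the TV bound are preserved under averaging over that randomness.
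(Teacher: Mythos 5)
Your argument is correct and takes essentially the same route as the paper: the paper likewise averages the worst-case guarantee over $\rho\sim\calD_i$ to get success probability at least $2/3$ under each mixture $q_i=\E[\rho\sim\calD_i]{p_\rho}$, and then derives the contradiction from the acceptance set of $\calA$ together with the variational characterization of total variation (packaged there as Fact~\ref{fact:hypo_test}). One small caveat: your parenthetical justification that $p_{\E[\rho]}=\E[p_\rho]$ is neither needed nor true in general (for $N>1$ or $k>1$ the transcript probability is a product of Born factors and hence not linear in $\rho$); the exchange of expectations you actually use is just Fubini, since $\bar q_i$ is by definition the mixture $\E[\rho\sim\calD_i]{p_\rho}$ of transcript distributions rather than the transcript distribution of an averaged state.
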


\begin{proof}
    Suppose to the contrary that there existed such an algorithm. For $i\in\brc{0,1}$, let $q_i \coloneqq \E[\rho\sim\calD_i]{p_\rho}$. Then 
    \begin{equation}
        2/3 \le \E[\rho\sim\calD_i]*{\Pr[T\sim p_\rho]{\calA(T) = i}} = \Pr[T\sim q_1]{\calA(T) = i}\,.
    \end{equation} But by Fact~\ref{fact:hypo_test} below, this would contradict the fact that $\tvd(q_0,q_1) < 1/3$.
\end{proof}

The proof of Lemma~\ref{lem:lecam} uses the following basic fact:

\begin{fact}\label{fact:hypo_test}
    Given distributions $q_0, q_1$ over a domain $S$, if $\tvd(q_0,q_1) < 1/3$, there is no algorithm $\calA: S\to\brc{0,1}$ for which $\Pr[x\sim q_i]{\calA(x) = i} \ge 2/3$ for both $i = 0,1$.
\end{fact}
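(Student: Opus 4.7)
The plan is to use the standard variational characterization of total variation distance, namely that $\tvd(q_0,q_1) = \sup_{E\subseteq S} |q_0(E) - q_1(E)|$, together with the specific event determined by the hypothetical algorithm. Concretely, I would proceed by contradiction: suppose such an algorithm $\calA$ exists, and let $E \coloneqq \brc{x \in S : \calA(x) = 1}$ be the set of inputs on which $\calA$ outputs $1$.

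From the assumed correctness guarantees, $q_1(E) = \Pr[x\sim q_1]{\calA(x) = 1} \ge 2/3$, while $q_0(E) = 1 - \Pr[x\sim q_0]{\calA(x) = 0} \le 1/3$. Subtracting gives $q_1(E) - q_0(E) \ge 1/3$, so by the variational formula for total variation distance we obtain $\tvd(q_0,q_1) \ge 1/3$, contradicting the hypothesis that $\tvd(q_0,q_1) < 1/3$. This establishes the fact.

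There is essentially no technical obstacle here; the only subtle point is handling the case where $\calA$ is randomized rather than a deterministic function $S \to \brc{0,1}$. In that case I would either note that by averaging there exists a deterministic algorithm with success probability at least as large (so the statement for deterministic $\calA$ suffices), or simply redefine $E$ on the extended sample space $S \times [0,1]$ that includes $\calA$'s internal randomness; the same two-line argument then applies verbatim.
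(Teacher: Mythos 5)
Your proof is correct and is essentially the paper's argument: both apply the variational characterization $\tvd(q_0,q_1)=\sup_{E\subseteq S}\abs{q_0(E)-q_1(E)}$ to the decision set of $\calA$, the paper phrasing it as a lower bound of $1-\tvd(q_0,q_1)\ge 2/3$ on the sum of the two error probabilities, which is the same computation as your contradiction. The remark about randomized $\calA$ is a fine (optional) addition that the paper does not spell out.
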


\begin{proof}
    Let $S'\subseteq S$ denote the set of elements $x$ for which $\calA(x) = 0$. Then observe that
    \begin{align}
        \Pr[x\sim q_0]{\calA(x) = 1} + \Pr[x\sim q_1]{\calA(x) = 0} &= 1 - q_0(S') + q_1(S') \\
        &\ge 1 - \sup_{S''\subseteq S}\abs*{q_0(S'') - q_1(S'')} \\
        &= 1 - \tvd(q_0,q_1) \ge 2/3,
    \end{align} so at least one of the terms on the left-hand side is at least $1/3$.
\end{proof}

In the sequel, the choice of $\calD_i$ will be fairly immediate, and the bulk of the work will go into bounding the total variation distance between $\E[\rho\sim\calD_0]{p_\rho}$ and $\E[\rho\sim\calD_i]{p_\rho}$. The following basic fact will be useful to this end, as it says that to upper bound the total variation between two distributions, it suffices to give a pointwise upper bound on their likelihood ratio:

\begin{fact}\label{fact:oneside}
    Let $0 \le \delta < 1$. Given probability distributions $p, q$ on a finite domain $\Omega$, if $p(x)/q(x) > 1 - \delta$ for all $x\in\Omega$, then $\tvd(p,q) \le \delta$.
\end{fact}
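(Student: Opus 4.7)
The plan is to use the variational characterization of total variation distance and convert the pointwise lower bound on the likelihood ratio into a pointwise upper bound on the quantity $q(x) - p(x)$.

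First I would recall that for any two probability distributions on a finite domain,
\begin{equation}
    \tvd(p,q) = \sum_{x \in \Omega \,:\, q(x) > p(x)} \bigl(q(x) - p(x)\bigr),
\end{equation}
which follows by noting that $\sum_x (p(x) - q(x)) = 0$ and splitting the sum according to the sign of $p(x) - q(x)$.

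Next I would rearrange the hypothesis: for any $x$ with $q(x) > 0$, the bound $p(x)/q(x) > 1 - \delta$ is equivalent to $q(x) - p(x) < \delta\, q(x)$. For $x$ with $q(x) = 0$, the term $q(x) - p(x)$ contributes nothing to the sum above (and moreover $p(x) = 0$ there, since otherwise $q(x) > p(x)$ would fail). Summing over the subset $\Omega_+ \coloneqq \{x : q(x) > p(x)\}$,
\begin{equation}
    \tvd(p,q) = \sum_{x \in \Omega_+} \bigl(q(x) - p(x)\bigr) \le \delta \sum_{x \in \Omega_+} q(x) \le \delta \sum_{x \in \Omega} q(x) = \delta,
\end{equation}
which is the desired bound.

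Since every step is an identity or a one-line inequality, there is essentially no main obstacle; the only subtlety is checking that the assumption $p(x)/q(x) > 1 - \delta$ is meaningful only when $q(x) > 0$ and that the degenerate case $q(x) = 0$ does not contribute to the TV sum, so the argument is unaffected by a conventional reading of the hypothesis on the support of $q$.
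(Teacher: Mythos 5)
Your proof is correct: the paper states Fact~\ref{fact:oneside} without an explicit proof, and your argument is precisely the standard verification it implicitly relies on, namely writing $\tvd(p,q)=\sum_{x:q(x)>p(x)}(q(x)-p(x))$ and using $q(x)-p(x)<\delta\,q(x)$ on that set. The handling of points with $q(x)=0$ is also fine, since such points cannot lie in $\{x:q(x)>p(x)\}$, so nothing further is needed.
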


\section{Proof of Hardness for RQC Testing (Theorem~\ref{thm:hardness})}
\label{sec:proof_kdesign}

In this section, we prove the following formal version of Theorem~\ref{thm:hardness}:

\begin{theorem}\label{thm:kdesign_formal}
    Fix any non-Clifford unitary $V\in U(2)$. There are absolute constants $c, c_1, c_2 > 0$ such that for any $\epsilon\in[0,1]$ and $k,n\in\mathbb{Z}$ satisfying $\epsilon\le 1/3k$, $k \le c 2^{n/2}/\epsilon$, and $n \ge c_2 k^2$, the following holds. 
    
    Consider the task of distinguishing between whether an unknown state $\rho$ is the maximally mixed state or the state $\frac{1}{2^n}(\Id_{2^n} + \epsilon \U Z^{\otimes n} \U^{\dagger})$ where $\U$ is a random $V$-interleaved Clifford circuit on $n$ qubits (see Definition~\ref{def:interleaved}) of depth at least $D = c_1\log^2(k)(k^4 + kn + k\log(1/\epsilon))$.
    
    Any algorithm which can make arbitrary entangled measurements of $k$ replicas of $\rho$ at a time must use $\Omega(2^n/(k\epsilon^2))$ replicas overall to solve this distinguishing task with nontrivial advantage.
\end{theorem}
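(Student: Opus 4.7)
The plan is to apply Le Cam's two-point method (Lemma~\ref{lem:lecam}) with $\calD_0$ concentrated on $\rhomm$ and $\calD_1$ the distribution induced by $\rho_\U$ for $\U \sim \calD$, reducing the claim to bounding
\begin{equation}
\tvd\bigl(p_{\rhomm},\,\E[\U\sim\calD]{p_{\rho_\U}}\bigr) < 1/3
\end{equation}
for every measurement tree of depth $N$ satisfying $Nk \le c \cdot 2^n/(k\epsilon^2)$ for a small absolute constant $c > 0$. By Fact~\ref{fact:oneside} this further reduces to a per-leaf lower bound $\E[\U\sim\calD]{L_\ell(\U)} \ge 2/3$, where $L_\ell(\U) = \prod_{t=1}^N d^k \bra{\psi_t}\rho_\U^{\otimes k}\ket{\psi_t}$ along the root-to-leaf path to $\ell$.

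The next step is to expand $\rho_{\U}^{\otimes k} = d^{-k}\sum_{S\subseteq[k]}\epsilon^{|S|}(\U\Z\U^\dagger)\otimes_S\Id$, so that each per-round factor becomes $\sum_S \epsilon^{|S|}\tr(\Xi_{S,t}(\U\Z\U^\dagger)^{\otimes |S|})$ with partial-trace density matrices $\Xi_{S,t} \coloneqq \tr_{[k]\setminus S}\ket{\psi_t}\bra{\psi_t}$. Expanding the product over $t$ into a sum indexed by tuples $(S_1,\ldots,S_N)$, the contribution at level $s \coloneqq \sum_t|S_t|$ equals $\epsilon^s\tr(\Phi\,(\U\Z\U^\dagger)^{\otimes s})$ with $\Phi \coloneqq \bigotimes_t\Xi_{S_t,t}$. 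I would then invoke the $2k$-design property (Definition~\ref{def:kdesign}) to replace the $\calD$-average of each degree-$\le 2k$ monomial by its Haar counterpart at aggregate cost controlled by $\delta = 1/2^n$, and evaluate the Haar moment via Eq.~\eqref{E:AUBUformula} as
\begin{equation}
\sum_{\sigma,\tau\in\calS_s}\tr(\tau^{-1}\Phi)\,\tr(\sigma\Z^{\otimes s})\,\Wg(\sigma\tau^{-1},d).
\end{equation}
By Fact~\ref{fact:evencycles} this vanishes identically when $s$ is odd and, for $s$ even, is a signed sum only over $\sigma$ all of whose cycles are even, weighted by $d^{\#\sigma}$; combined with Lemmas~\ref{lem:weinsum} and~\ref{lem:montanaro} (and the bound $\|\Phi\|_1 \le 1$), each even-$s$ term is controlled by roughly $(s-1)!!/d^{s/2}$.

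The main obstacle is that the polynomial $\prod_t L_t(\U)$ has total $\U$-degree $O(Nk)$, well beyond what a $2k$-design can resolve directly; a naive termwise application of the bounds above yields only the weaker threshold $N = O(\sqrt{d}/(k\epsilon))$. My strategy for bridging this gap is to pass to the chi-squared two-point quantity
\begin{equation}
\chi^2\!\left(\E[\U]{p_{\rho_\U}}\,\|\,p_{\rhomm}\right) = \E[\ell\sim p_{\rhomm}]*{\bigl(\E[\U\sim\calD]{L_\ell(\U)} - 1\bigr)^2}
\end{equation}
and analyze it term by term within the $(S_t)$-expansion. After squaring and averaging over $\ell \sim p_{\rhomm}$, the POVM completeness $\sum_v w^u_v d^k\ket{\psi^u_v}\bra{\psi^u_v} = \Id$ at each tree node collapses measurement-branch sums into identity insertions, so that distinct pairs of tuples pair up via their shared Weingarten permutations and survive only when the pairing is nontrivial. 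This telescoping, combined with the odd-$s$ vanishing from Fact~\ref{fact:evencycles}, yields a chi-squared bound of order $Nk^2\epsilon^2/d$ rather than the naive $(Nk\epsilon)^2/d$. The former is subconstant whenever $Nk \ll 2^n/(k\epsilon^2)$, delivering the claimed $\Omega(2^n/(k\epsilon^2))$ lower bound on total replicas. The residual $2k$-design approximation error, $O(\delta/d^{2k})$ per relevant monomial, is absorbed given $\delta = 1/2^n$ and the depth guarantee of Theorem~\ref{thm:homeopathy}.
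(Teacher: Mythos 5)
Your reduction via Le Cam and the expansion of $\prod_t L_t(\U)$ into tuples $(S_1,\ldots,S_N)$ is fine, and you correctly identify the central obstacle: the likelihood ratio has $\U$-degree $\Theta(Nk)$. But the bridge you propose does not close this gap, for two concrete reasons. First, the $\chi^2$ quantity you pass to still requires averaging polynomials in $\U,\U^\dagger$ of degree up to $Nk$ (already $\mathbb{E}_{\U\sim\calD}[L_\ell(\U)]$ does), and a $\delta$-approximate $2k$-design gives no control whatsoever over moments of degree exceeding $2k$: the diamond-norm guarantee of Definition~\ref{def:kdesign} applies only to the $2k$-fold twirl, so the claim that the residual design error ``$O(\delta/d^{2k})$ per relevant monomial'' can be absorbed has no basis for the monomials that actually dominate your expansion. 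Second, the collapse of the branch sums via POVM completeness into a product of per-round pairwise-correlation terms is exactly the Ingster-type identity (Lemma~\ref{lem:ingster}), which is valid only for \emph{nonadaptive} strategies; Theorem~\ref{thm:kdesign_formal} must rule out adaptive tree protocols, where the POVM at round $t$ depends on the transcript, the sum over transcripts no longer factors, and your claimed bound of order $N k^2\epsilon^2/d$ does not follow. (This is precisely why the paper's \emph{adaptive} mixedness-testing bound needs the KL chain rule, moment bounds on $L_{\U}$, and Haar concentration via Theorem~\ref{thm:levy} --- and that concentration is unavailable for the discrete circuit ensemble you must handle here.)

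The paper's proof sidesteps high-degree moments entirely by arguing round by round: by concavity of $\exp$, $\mathbb{E}_{\U}\bigl[\prod_t(1+\delta^{\U}_t)\bigr] \ge \exp\bigl(\sum_t \mathbb{E}_{\U}[\log(1+\delta^{\U}_t)]\bigr)$, and each round is then symmetrized by pairing $\U$ with $\T\U$, which flips the sign of the odd part of the perturbation, so the first-order terms cancel and only $\bra{\psi}\Phi^{\U}_1(\epsilon)\ket{\psi}^2$ and $\bra{\psi}\Phi^{\U}_0(\epsilon)\ket{\psi}$ survive (Lemma~\ref{lem:convexity}). These involve at most $2k$ copies of $\U$ and $\U^{\dagger}$, so an exact $2k$-design reproduces the Weingarten estimates $O(\epsilon^2k^2/d)$ and $O(\epsilon^4k^4/d^2)$ (Lemma~\ref{lem:secondmoment}), and an approximate design costs only an additive $\delta$ (Lemma~\ref{lem:secondmoment_approx}). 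Because the resulting per-round bound is a maximum over all unit vectors $\ket{\psi}$, it is uniform over every rank-one POVM the algorithm could choose, so adaptivity is handled for free, giving $\tvd \le O\bigl(N(\epsilon^2k^2/d + \delta)\bigr)$ and hence the stated $\Omega(d/(k\epsilon^2))$ copy lower bound. To salvage your route you would have to either restrict to nonadaptive algorithms over the Haar ensemble (where your $\chi^2$ computation is essentially the nonadaptive argument of Theorem~\ref{thm:nonadaptive}), or introduce a per-round reduction of the design-degree like the symmetrization above.
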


For convenience, define $d\coloneqq 2^n$. The only structure we will use about the distribution over random Clifford circuits $\U$ is the fact that it forms a $\delta$-approximate $2k$-design. In the framework of the previous section, we will take the two sets $S_0$ and $S_1$ to be $S_0 = \brc{\rhomm}$ and $S_1 = \brc{\rho_{\U}(\epsilon)}_{\U}$ where $\U$ ranges over unitaries in the support of a $\delta$-approximate $2k$-design. The choice of $\calD_0,\calD_1$ is clear: as $S_0$ is a singleton set, $\calD_0$ is just the point mass at $\rhomm$, whereas $\calD_1$ is the distribution over $S_1$ given by the approximate $2k$-design.

First, we show how to upper bound the total variation distance between $p_{\rhomm}$ and $\E[\U\sim\calD]{p_{\rho_{\U}(\epsilon)}}$ in terms of the quantity $\E[\U\sim\calD]{(\delta^{\U}_j)^2}$ for a worst-case choice of POVM $\brc{F_j}$. Henceforth, for notational convenience we will refer to $\rho_{\U}(\epsilon)$ as $\rho_{\U}$.

\begin{lemma}\label{lem:convexity}
    For $\epsilon \le 1/3k$, define
    \begin{equation}
        \Phi^{\U}_0(\epsilon) \coloneqq \sum_{\substack{S\subset[k]:\\ |S|\, \text{\rm even},\, S\neq\emptyset}}\epsilon^{|S|} \cdot \U\Z\U^{\dagger}\otimes_S \Id \qquad \Phi^{\U}_1(\epsilon) \coloneqq \sum_{\substack{S\subset[k]:\\ |S|\,\text{\rm odd}}} \epsilon^{|S|}  \cdot \U\Z\U^{\dagger}\otimes_S \Id.
    \end{equation}
    Then we have that
    \begin{equation}
        d_{\text{\rm TV}}\left(p_{\rhomm},\E{p_{\rho_{\U}}}\right) \le 2N\max_{\ket{\psi}} \brc*{\E*{\bra{\psi}\Phi^{\U}_1(\epsilon)\ket{\psi}^2} + 2\E*{\bra{\psi}\Phi^{\U}_0(\epsilon)\ket{\psi}^2}^{1/2}}\label{eq:tvbound}
    \end{equation}
    where the expectations are with respect to an arbitrary distribution $\calD$ over $U(d)$.
\end{lemma}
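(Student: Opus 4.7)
I would begin by expanding the leaf likelihood ratio of $\E[\U\sim\calD]{p_{\rho_{\U}}}$ against $p_{\rhomm}$ as
\[
L(\ell) = \E[\U]{\prod_{t=1}^N (1 + g_t(\ell;\U) + h_t(\ell;\U))},
\]
where at the $t$-th edge $(u_{t-1},u_t)$ of the root-to-$\ell$ path, $g_t \coloneqq \bra{\psi^{u_{t-1}}_{u_t}}\Phi_1^{\U}\ket{\psi^{u_{t-1}}_{u_t}}$ and $h_t \coloneqq \bra{\psi^{u_{t-1}}_{u_t}}\Phi_0^{\U}\ket{\psi^{u_{t-1}}_{u_t}}$; this arises from $\rho_{\U}^{\otimes k} = \frac{1}{d^k}(\Id + \Phi_1^{\U}(\epsilon) + \Phi_0^{\U}(\epsilon))$ and reading off the per-edge POVM probabilities. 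Since $\tvd(p_{\rhomm}, \E[\U]{p_{\rho_{\U}}}) = \tfrac{1}{2}\E[\ell \sim p_{\rhomm}]{|L(\ell) - 1|}$, the plan is to telescope $L - 1 = \sum_{t=1}^N \E[\U]{f_t\,\Pi_{t-1}(\U)}$ with $f_t \coloneqq g_t + h_t$ and $\Pi_{t-1}(\U) \coloneqq \prod_{s<t}(1 + f_s(\U))$, and split each round's contribution into an even ($h_t$) and odd ($g_t$) piece.

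For the even piece I would use that $\Pi_{t-1}(\U) \ge 0$ (being a ratio of valid probability densities) to pass $|\cdot|$ inside $\E[\U]$, then swap $\E[\U]$ with $\E[\ell]$ and invoke the identity $p_{\rhomm}(\ell)\,\Pi_{t-1}(\ell;\U) = p_{\rho_{\U}}^{(t-1)}(\ell_{<t}) \cdot w_{u_t}^{u_{t-1}} \prod_{s>t} w_{u_s}^{u_{s-1}}$, whose post-$t$ factors marginalize to $1$. The expression reduces to $\E[\sigma \sim p_{\rho_{\U}}^{(t-1)}]{\sum_v w_v^{\sigma_{t-1}} |h_v(\U)|}$, and Jensen's inequality $\E[\U]{|h_v(\U)|} \le \sqrt{\E[\U]{h_v(\U)^2}}$ applied termwise, followed by maximization over the POVM vector $\ket{\psi}$, yields $\sqrt{\max_{\ket{\psi}} \E[\U]{\bra{\psi}\Phi_0^{\U}\ket{\psi}^2}}$ per round.

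For the odd piece, Cauchy--Schwarz on the $\ell$-expectation gives
\[
\E[\ell]{|\E[\U]{g_t\Pi_{t-1}}|}^2 \le \E[\U,\U']{\E[\ell]{g_t(\U)g_t(\U')\Pi_{t-1}(\U)\Pi_{t-1}(\U')}},
\]
with $\U,\U' \sim \calD$ independently. After collapsing the post-$t$ path marginal and applying Cauchy--Schwarz over $v$ to the bilinear form $\sum_v w_v^u g_v(\U) g_v(\U')$, one extracts $\max_{\ket{\psi}}\E[\U]{\bra{\psi}\Phi_1^{\U}\ket{\psi}^2}$. Summing the per-round estimates over $t \in [N]$ produces the factor of $N$ and assembles into the advertised bound.

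The main obstacle is controlling the two-replica product $\E[\ell]{\Pi_{t-1}(\U)\Pi_{t-1}(\U')}$ appearing in the odd-part step: unlike the single-$\U$ likelihood ratio, whose $\ell$-expectation telescopes to $1$, this quantity does not, so one must exploit the zero-mean identity $\sum_v w_v^u g_v(\U) = d^{-k}\tr\Phi_1^{\U} = 0$ (which holds because $\tr\Z = 0$) together with the parameter regime $\epsilon \le 1/(3k)$ to prevent exponential growth in $N$. This is the technical heart of the argument, ensuring the round-by-round estimate aggregates linearly rather than exponentially in the depth $N$.
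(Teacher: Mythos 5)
There is a genuine gap. Your decomposition into odd/even parts and the per-round accounting are in the right spirit, but the step you yourself flag as "the technical heart" --- controlling the two-replica running product $\E[\ell]{\Pi_{t-1}(\U)\Pi_{t-1}(\U')}$ arising from Cauchy--Schwarz on the odd piece --- is exactly the part that needs a proof, and the ingredients you invoke do not supply one. The zero-mean identity $\sum_v w^u_v g_v(\U) = 0$ kills only the leading term at each round; it does not prevent the adaptive tree from correlating the choice of POVM at round $t$ with the values of $\Pi_{t-1}(\U)\Pi_{t-1}(\U')$, and the crude pointwise bound $\Pi_{t-1}(\U) \le (1+\epsilon)^{k(t-1)}$ grows exponentially in $t$ even when $\epsilon \le 1/(3k)$. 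Controlling such cross-terms for adaptive strategies is genuinely hard: in this paper it is only done for the mixedness-testing lower bound, where it requires Haar concentration (Theorem~\ref{thm:levy}), Lipschitz estimates, tail integration, and the recursive H\"older argument of Lemma~\ref{lem:oneplusZimpliesPsi}. A similar exchange problem already infects your even piece: after the change of measure the node $u$ at depth $t-1$ is distributed according to $p^{(t-1)}_{\U}$, which depends on $\U$, so you cannot apply Jensen "termwise" and pull $\max_{\ket{\psi}}$ outside $\E[\U]$; bounding instead by $\E[\U]{\max_{\ket{\psi}}|\bra{\psi}\Phi^{\U}_0\ket{\psi}|} = \E[\U]{\norm{\Phi^{\U}_0}}$ loses the crucial factor of $d$ (that operator norm is of order $\epsilon^2k^2$, not $\epsilon^2k^2/d$).

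The paper's proof sidesteps all of this by never averaging over transcripts: it proves a pointwise lower bound on the mixture likelihood ratio $\E[\U]{p_{\rho_{\U}}(T)/p_{\rhomm}(T)}$ for every fixed transcript $T$, so the POVM vectors along the path are deterministic and $\max_{\ket{\psi}}$ can be taken outside $\E[\U]$ legitimately. Concretely: Jensen (concavity of $\log$) moves the expectation inside, then a symmetrization $\U \mapsto \T\U$ (under which $\Z$ flips sign, so the odd part $\Phi^{\U}_1$ flips sign while $\Phi^{\U}_0$ is preserved) pairs each factor with its mirror, giving $(1+\delta^{\U}_t)(1+\delta^{\T\U}_t) = (1+\bra{\psi}\Phi^{\U}_0\ket{\psi})^2 - \bra{\psi}\Phi^{\U}_1\ket{\psi}^2$; the elementary bound $\log(1-x) \ge -4x$ on $[0,0.96]$ (valid because $\epsilon \le 1/3k$) then yields a $1 - 2N\cdot(\text{per-round error})$ lower bound, and Fact~\ref{fact:oneside} converts this one-sided pointwise bound into the stated total-variation bound, with Cauchy--Schwarz giving the square root on the $\Phi^{\U}_0$ term. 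If you want to salvage your route you would need an honest bound on the cross-term moments of the running likelihood ratios, which is a substantially heavier lift than the lemma requires.
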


\begin{proof}
    Let $T$ be any transcript of length $N$, and consider the root-to-leaf path $\brc{u_0,\ldots,u_n}$ in $\calT$ associated to this transcript. If the POVM used at node $u$ along this path is given by $\brc{w^u_v \cdot d^k\cdot \ketbra{\psi^u_v}}_v$, then
    \begin{align}
        \E[\U]*{\frac{p_{\rho_{\U}}(T)}{p_{\rhomm}(T)}} &= \E[\U]*{\prod^N_{t = 1} \bra{\psi^{u_{t-1}}_{u_t}}(\Id + \U\Z\U^{\dagger})^{\otimes k}\ket{\psi^{u_{t-1}}_{u_t}}}  \\
        &= \E[\U]*{\prod^N_{t = 1}\left(1 + \bra{\psi^{u_{t-1}}_{u_t}}\left((\Id + \U\Z\U^{\dagger})^{\otimes k} - \Id^{\otimes k}\right)\ket{\psi^{u_{t-1}}_{u_t}}\right)} \\
        &=: \E[\U]*{\prod^N_{t=1} \left(1 + \delta^{\U}_t\right)} \,. \label{eq:prod}
    \intertext{where $\delta^{\U}_t \coloneqq \bra{\psi^{u_{t-1}}_{u_t}}\left((\Id + \U\Z\U^{\dagger})^{\otimes k} - \Id^{\otimes k}\right)\ket{\psi^{u_{t-1}}_{u_t}}$. By concavity of the exponential function, we can lower bound \eqref{eq:prod} by}
        &\ge \exp\left(\sum^N_{t=1}\E[\U]*{\log\left(1 + \delta^{\U}_t\right)}\right).  \label{eq:explog}
    \intertext{Note that the Haar measure on $U(d)$ is invariant under left-multiplication by $\T = \begin{pmatrix}
        \vec{0} & \Id \\
        \Id & \vec{0}
    \end{pmatrix}$, so}
        &\ge \exp\left(\frac{1}{2}\sum^N_{t = 1} \E[\U]*{\log\left(1 + \delta^{\U}_t\right) + \log\left(1 + \delta^{\T\U}_t\right)}\right) \\
        &\ge \exp\left(\frac{1}{2}\sum^N_{t = 1} \E[\U]*{\log\left(\left(1 + \delta^{\U}_t\right)\left(1 + \delta^{\T\U}_t\right)\right)}\right) \\
        &\coloneqq \exp\left(\frac{1}{2}\sum^N_{t=1}\E[\U]*{\log\left(1 - \gamma^{\U}_t\right)}\right) \qquad \left(\gamma^{\U}_t \coloneqq 1 - (1 + \delta^{\U}_t)(1 + \delta^{\T\U}_t)\right)\\
        &\ge \exp\left(\frac{1}{2}\sum^N_{t=1}\E[\U]*{\log\left(1 - \max(0,\gamma^{\U}_t)\right)}\right) \\
        &\ge \exp\left(\frac{1}{2}\sum^N_{t=1}\E[\U]*{-4\max(0,\gamma^{\U}_t)}\right) \label{eq:lateruseful}\\
        &\ge 1 - 2\sum^N_{t=1}\E[\U]*{\max(0,\gamma^{\U}_t)}\,,\label{eq:useTU}
    \end{align}
    where in the penultimate step we used the fact that $\max(0,\gamma^{\U}_t) \le 0.96$ for every $t$ because
    \begin{equation}
        |\delta^{\U}_t| \le \abs*{\sum_{S\neq\emptyset} \epsilon^{|S|}\bra{\psi^{u_{t-1}}_{u_t}}\U\Z\U^{\dagger}\otimes_S \Id\ket{\psi^{u_{t-1}}_{u_t}}} \le (1 + \epsilon)^k - 1 \le 0.4
    \end{equation} by the assumption that $\epsilon \le 1/3k$.

    Now note that
    \begin{align}
        \gamma^{\U}_t &= 1 - \left(1 + \delta^{\U}_t\right)\left(1 + \delta^{\T\U}_t\right) \\
        &= 1 - \sum_{S,S'\subseteq[k]}(-1)^{|S'|}\epsilon^{|S|+|S'|}\bra{\psi^{u_{t-1}}_{u_t}}\U\Z\U^{\dagger}\otimes_S \Id\ket{\psi^{u_{t-1}}_{u_t}}\bra{\psi^{u_{t-1}}_{u_t}}\U\Z\U^{\dagger}\otimes_{S'} \Id\ket{\psi^{u_{t-1}}_{u_t}}  \\
        &= 1 - \bra{\psi^{u_{t-1}}_{u_t}}(\Id + \Phi^{\U}_0(\epsilon))\ket{\psi^{u_{t-1}}_{u_t}}^2 + \bra{\psi^{u_{t-1}}_{u_t}}\Phi^{\U}_1(\epsilon)\ket{\psi^{u_{t-1}}_{u_t}}^2\\
        &= \bra{\psi^{u_{t-1}}_{u_t}}\Phi^{\U}_1(\epsilon)\ket{\psi^{u_{t-1}}_{u_t}}^2 - \bra{\psi^{u_{t-1}}_{u_t}}\Phi^{\U}_0(\epsilon)\ket{\psi^{u_{t-1}}_{u_t}}^2 - 2\bra{\psi^{u_{t-1}}_{u_t}}\Phi^{\U}_0(\epsilon)\ket{\psi^{u_{t-1}}_{u_t}}\,,
    \end{align}
    so
    \begin{equation}
        \E[\U]*{\max(0,\gamma^{\U}_t)} \le \max_{\ket{\psi}} \brc*{\E[\U]*{\bra{\psi}\Phi^{\U}_1(\epsilon)\ket{\psi}^2} + 2\E[\U]*{\abs*{\bra{\psi}\Phi^{\U}_0(\epsilon)\ket{\psi}}}}\,, \label{eq:relubound}
    \end{equation}
    where the maximum is over unit vectors $\ket{\psi}$. Substituting this into \eqref{eq:useTU} and applying Cauchy-Schwarz and Fact~\ref{fact:oneside} yields the lemma.
\end{proof}

Next, we provide estimates for the quantities on the right-hand side of \eqref{eq:tvbound} in the case where $\calD$ is an \emph{exact} $2k$-design.

\begin{lemma}\label{lem:secondmoment}
    For $k \le O(\min(d,\sqrt{d}/\epsilon))$ and any unit vector $\ket{\psi}$, if $\calD$ is a $2k$-design, then we have
    \begin{enumerate}
        \item $\E[\U\sim\calD]*{\bra{\psi}\Phi^{\U}_1(\epsilon)\ket{\psi}^2} \le O(\epsilon^2k^2/d)$
        \item $\E[\U\sim\calD]*{\bra{\psi}\Phi^{\U}_0(\epsilon)\ket{\psi}^2} \le O(\epsilon^4k^4/d^2)$
    \end{enumerate}
\end{lemma}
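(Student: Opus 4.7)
The plan is to expand the square, reduce to Haar via the $2k$-design hypothesis, apply Weingarten calculus, and exploit the even-cycle restriction imposed by $\Z$ to extract the desired scaling in $d$. Fix $b\in\{0,1\}$, let $m\coloneqq |S|+|S'|\le 2k$, and write $\rho_S \coloneqq \tr_{\bar S}(\ketbra{\psi})$. Expanding the square,
\begin{equation}
\bra{\psi}\Phi^{\U}_b(\epsilon)\ket{\psi}^2 = \sum_{S,S'}\epsilon^{|S|+|S'|}\, \tr\bigl((\U\Z\U^{\dagger})^{\otimes m}(\rho_S\otimes \rho_{S'})\bigr),
\end{equation}
where the pairs $(S,S')\subseteq[k]^2$ range over both-odd subsets when $b=1$ and both-nonzero-even subsets when $b=0$. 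Each summand contains at most $2k$ factors of each of $\U$ and $\U^{\dagger}$, so the $2k$-design property lets me substitute $\E[\U\sim\text{Haar}]{\cdot}$ for $\E[\U\sim\calD]{\cdot}$.

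Writing $(\U\Z\U^{\dagger})^{\otimes m} = \U^{\otimes m}\Z^{\otimes m}(\U^{\dagger})^{\otimes m}$ and applying the Weingarten identity~\eqref{E:HaarUint1} with $M = \rho_S\otimes\rho_{S'}$ yields
\begin{equation}
\E[\U\sim\text{Haar}]*{\tr\bigl((\U\Z\U^{\dagger})^{\otimes m}M\bigr)} = \sum_{\sigma,\tau\in\calS_m}\Wg(\sigma\tau^{-1},d)\,\tr(\sigma\Z^{\otimes m})\,\tr(\tau M).
\end{equation}
Two observations collapse this sum. First, because $\Z$ has balanced $\pm 1$ spectrum, $\tr(\sigma\Z^{\otimes m}) = d^{\#\sigma}$ when every cycle of $\sigma$ is even and vanishes otherwise. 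Second, $|\tr(\tau M)|\le \|M\|_1 = 1$ since $\tau$ is unitary and $M$ a density operator. Combining with the identity $\sum_\tau |\Wg(\sigma\tau^{-1},d)| = (d-m)!/d!$ (Lemma~\ref{lem:weinsum} plus the bijection $\tau\mapsto\sigma\tau^{-1}$) and Fact~\ref{fact:evencycles}, each summand obeys
\begin{equation}
\bigl|\E[\U\sim\calD]*{\tr\bigl((\U\Z\U^{\dagger})^{\otimes m}(\rho_S\otimes\rho_{S'})\bigr)}\bigr| \le \frac{(d-m)!}{d!}\,(m-1)!!\,d(d+2)\cdots(d+m-2) = O\bigl((m-1)!!\,d^{-m/2}\bigr),
\end{equation}
where the last step uses $m\le 2k \le O(\sqrt{d})$ to bound the ratio.

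Vandermonde gives at most $\binom{2k}{m}$ pairs $(S,S')$ with $|S|+|S'|=m$, and the inequality $(2j-1)!!\binom{2k}{2j}\le (2k^2)^j/j!$ for $m=2j$ then consolidates the double sum into
\begin{equation}
\E[\U\sim\calD]*{\bra{\psi}\Phi^{\U}_b\ket{\psi}^2} = O\!\left(\sum_{j\ge j_0}\frac{(C k^2\epsilon^2/d)^j}{j!}\right),
\end{equation}
for an absolute constant $C>0$, with $j_0 = 1$ for $b=1$ (minimum $m=2$) and $j_0 = 2$ for $b=0$ (minimum $m=4$). Under the hypothesis $k\epsilon \le O(\sqrt{d})$ the argument is a small constant, so the series is dominated by its leading term, yielding $O(k^2\epsilon^2/d)$ and $O(k^4\epsilon^4/d^2)$ respectively. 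The main obstacle is the combinatorial bookkeeping around the Weingarten sum — in particular recognizing that $\tr(\sigma\Z^{\otimes m})$ enforces the even-cycle restriction, and verifying that the resulting doubly-infinite sum (over $m$ and over $(S,S')$) is dominated by its first nonvanishing term in the stated parameter regime.
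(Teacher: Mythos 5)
Your proposal is correct and takes essentially the same route as the paper's proof: expand the square into pairs $(S,S')$, pass to Haar moments via the exact $2k$-design, apply Weingarten calculus with the even-cycle restriction coming from $\Z$ and the na\"{i}ve bound $|\tr(P_\tau(\rho_S\otimes\rho_{S'}))|\le 1$, invoke Lemma~\ref{lem:weinsum} and Fact~\ref{fact:evencycles}, and observe that the resulting series over $m=|S|+|S'|$ is dominated by its first nonvanishing term when $k\epsilon\le c\sqrt{d}$. The only cosmetic slip is the invocation of $m\le 2k\le O(\sqrt{d})$, which the hypothesis $k\le O(\min(d,\sqrt{d}/\epsilon))$ does not guarantee when $\epsilon$ is very small; that step only requires $m\le d/2$ (to compare $d(d+2)\cdots(d+m-2)$ with $d(d-1)\cdots(d-m+1)$), which the hypothesis does supply, so the argument stands.
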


\begin{proof}
    Let $\Sodd$ and $\Seven$ denote the set of nonempty subsets of $[k]$ of odd and even cardinality respectively. Given $S\subset[k]$, let $\rho_S\in(\mathbb{C}^{d})^{\otimes|S|}$ denote $\ketbra{\psi}$ with the components indexed by $[k]\backslash S$ traced out so that
    \begin{equation}
        \bra{\psi}\U\Z\U^{\dagger}\otimes_S\Id\ket{\psi} = \Tr\left((\U^{\dagger})^{\otimes|S|}\rho_S \U^{\otimes|S|}\Z^{\otimes|S|}\right) \label{eq:rhoS}
    \end{equation}
    \noindent{\textbf{Proof of 1:}} Using \eqref{eq:rhoS}, we can write
    \begin{equation}
        \E{\bra{\psi}\Phi^{\U}_1(\epsilon)\ket{\psi}^2} = \sum_{S,S'\in\Sodd} \epsilon^{|S| + |S'|} \E*{\Tr\left((\U^{\dagger})^{\otimes|S|}\rho_S \U^{\otimes|S|}\Z^{\otimes|S|}\right)\Tr\left((\U^{\dagger})^{\otimes|{S'}|}\rho_{S'} \U^{\otimes|{S'}|}\Z^{\otimes|{S'}|}\right)}. \label{eq:summands_odd}
    \end{equation}
    For any fixed $S,S'\in\Sodd$, if $m \coloneqq |S| + |S'|$, then the expectation of the corresponding summand in \eqref{eq:summands_odd} is
    \begin{equation}
        \epsilon^m\sum_{\pi,\tau\in \calS_m} \Wg(\pi\tau^{-1},d) \Tr(\pi(\rho_S\otimes \rho_{S'})) \Tr(\tau Z^{\otimes m}). \label{eq:weingarten_kdesign}
    \end{equation} We can na\"{i}vely upper bound $\abs*{\Tr(\pi(\rho_S\otimes \rho_{S'}))}$ by 1. Recall by Lemma~\ref{lem:weinsum} that $\sum_{\pi\in \calS_m}|\!\Wg(\pi,d)| = \frac{(d-m)!}{d!} \le 1/\Omega(d)^m$ for $m \le d$ and also note that $\Tr(\tau Z^{\otimes m})$ is zero unless $\tau$ has only even cycles, in which case $\Tr(\tau Z^{\otimes m}) = d^{\#\tau}$. By Fact~\ref{fact:evencycles}, $\sum_{\tau\in\calS_m\,\text{even}} d^{\#\tau} = (m-1)!!\cdot d(d+1)\cdots (d+m-2) \le m^{m/2}\cdot O(d)^{m/2}$. We conclude that \eqref{eq:weingarten_kdesign} is upper bounded by $O(\epsilon^2 m/d)^{m/2}$. Substituting this bound into \eqref{eq:summands_odd}, we get an upper bound of
    \begin{align}
        \sum_{S,S'\in\Sodd} O((|S| + |S'|)/d)^{|S|/2+|S|'/2} &= \sum_{2\le m \le k \ \text{even}} O(\epsilon^2 m/d)^{m/2} \cdot \sum_{0<i<m \ \text{odd}} \binom{k}{i}\binom{k}{m - i} \\
        &\le \sum_{2\le m \le k \ \text{even}} O(\epsilon^2 m/d)^{m/2}\cdot \binom{2k}{m} \\
        &\le \sum_{2\le m \le k \ \text{even}} O(\epsilon^2 m/d)^{m/2} \cdot O(k/m)^m \\
        &\le \sum_{2\le m \le k \ \text{even}} O\left(\frac{\epsilon^2 k^2}{md}\right)^{m/2} \\
        &\le \sum_{2\le m \le k \ \text{even}} O\left(\frac{\epsilon^2 k^2}{d}\right)^{m/2}.
    \end{align} If $k\le c\sqrt{d}/\epsilon$ for sufficiently small absolute constant $c > 0$, then this geometric series is dominated by its first summand, yielding the desired bound of $O(\epsilon^2 k^2/d)$.
\\ \\
    \noindent{\textbf{Proof of 2:}} Proceeding in the same way as above, we can upper bound $\E{\bra{\psi}\Phi^{\U}_1(\epsilon)\ket{\psi}^2}$ by
    \begin{align}
        \sum_{S,S'\in\Seven} O((|S| + |S'|)/d)^{|S|/2+|S|'/2} &= \sum_{4\le m \le k \ \text{even}} O(\epsilon^2 m/d)^{m/2} \cdot \sum_{0<i<m \ \text{even}} \binom{k}{i}\binom{k}{m - i} \\
        &\le \sum_{4\le m \le k \ \text{even}} O\left(\frac{\epsilon^2 k^2}{d}\right)^{m/2},
    \end{align}
    so again, if $k \le c\sqrt{d}/\epsilon$ then this geometric series is dominated by its first summand, yielding the desired bound of $O(\epsilon^4 k^4/d^2)$.
\end{proof}

Next we extend the estimates of Lemma~\ref{lem:secondmoment} to handle approximate $2k$-designs:

\begin{lemma}\label{lem:secondmoment_approx}
    For $k \le O(\min(d,\sqrt{d}/\epsilon))$ and any unit vector $\ket{\psi}$, if $\calD$ is a $\delta$-approximate $2k$-design, then we have
    \begin{enumerate}
        \item $\E[\U\sim\calD]*{\bra{\psi}\Phi^{\U}_1(\epsilon)\ket{\psi}^2} \le O(\epsilon^2k^2/d + \delta)$
        \item $\E[\U\sim\calD]*{\bra{\psi}\Phi^{\U}_0(\epsilon)\ket{\psi}^2} \le O(\epsilon^4k^4/d^2 + \delta)$
    \end{enumerate}
\end{lemma}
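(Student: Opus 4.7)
The plan is to derive the bounds from Lemma~\ref{lem:secondmoment} applied to the Haar measure (which is an exact $2k$-design) and control the error introduced when one replaces the exact design with a $\delta$-approximate one. The first step is to re-express the squared expectation in a form suited to the twirling channels. Letting $\rho_S \coloneqq \Tr_{[k]\setminus S}\ketbra{\psi}$ denote the reduced density matrix on the registers in $S$, one has $\bra{\psi}(\U\Z\U^\dagger \otimes_S \Id)\ket{\psi} = \Tr(\U^{\otimes |S|}\Z^{\otimes |S|}(\U^\dagger)^{\otimes |S|}\rho_S)$. Squaring and using $\Tr(A)\Tr(B) = \Tr(A\otimes B)$ yields
\begin{equation}
\bra{\psi}\Phi^{\U}_i(\epsilon)\ket{\psi}^2 = \sum_{S,S'}\epsilon^m\,\Tr\bigl(\U^{\otimes m}\Z^{\otimes m}(\U^\dagger)^{\otimes m}(\rho_S\otimes\rho_{S'})\bigr),
\end{equation}
where $m \coloneqq |S|+|S'| \le 2k$. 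Taking expectations, the discrepancy between $\calD$ and Haar is exactly $\sum_{S,S'}\epsilon^m\,\Tr\bigl((\calC^{(m)}_{\calD}-\calC^{(m)}_{\text{Haar}})[\Z^{\otimes m}](\rho_S\otimes\rho_{S'})\bigr)$.

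Second, I would establish that the $\delta$-approximate $2k$-design property extends automatically to every lower order $m\le 2k$: given a state $\rho$ on $(\mathbb{C}^d)^{\otimes m}$, pad it with an arbitrary state $\sigma$ on the remaining $2k-m$ slots and observe that $\calC^{(2k)}_{\calD}[\rho\otimes\sigma]$ partial-traces down to $\calC^{(m)}_{\calD}[\rho]$ because $\Tr(\U^{\otimes(2k-m)}\sigma(\U^\dagger)^{\otimes(2k-m)})=1$. Contractivity of the partial trace in trace norm then yields $\|(\calC^{(m)}_{\calD}-\calC^{(m)}_{\text{Haar}})[\rho]\|_1 \le \delta/d^{2k}$ for every state $\rho$. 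To apply this to the non-positive operator $\Z^{\otimes m}$, I would split it into its spectral positive and negative parts $\Z^{\otimes m}=\Z^{\otimes m}_+-\Z^{\otimes m}_-$, each of trace $d^m/2$ and therefore $d^m/2$ times a state; linearity and the previous state-level bound give $\|(\calC^{(m)}_{\calD}-\calC^{(m)}_{\text{Haar}})[\Z^{\otimes m}]\|_1 \le d^m\delta/d^{2k} \le \delta$ since $m\le 2k$. Combined with the Hölder-type inequality $|\Tr(AB)|\le\|A\|_1\|B\|_\infty$ and the trivial bound $\|\rho_S\otimes\rho_{S'}\|_\infty\le 1$, each summand is bounded in magnitude by $\epsilon^m\delta$.

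Third, I would sum the per-term bounds. By Vandermonde, the number of pairs $(S,S')$ with $|S|+|S'|=m$ is at most $\binom{2k}{m}$, so the total additive error is at most $\sum_m\binom{2k}{m}\epsilon^m\delta = (1+\epsilon)^{2k}\delta = O(\delta)$, valid in the regime $\epsilon k = O(1)$ inherited from the downstream use of this lemma (where Lemma~\ref{lem:convexity} and Theorem~\ref{thm:kdesign_formal} impose $\epsilon \le 1/3k$). Combining this additive error with the Haar-case bounds of Lemma~\ref{lem:secondmoment} then yields both claims.

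The main step that deserves careful execution is the second: extending the diamond-norm bound from the $2k$-th moment down to arbitrary $m\le 2k$, and then handling the non-positive, non-trace-one operator $\Z^{\otimes m}$ by its PSD decomposition so that the state-level approximation guarantee applies. The remaining steps are bookkeeping.
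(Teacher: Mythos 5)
Your proof is correct, and it reaches the same final bound as the paper by a mildly different mechanism. The paper never passes to lower-order twirls: it pads the observable with $\Id^{\otimes 2k-m}$ and the state side with $\rhomm^{\otimes 2k-m}$, so the discrepancy is written directly as $\Tr\bigl((\rho_S\otimes\rho_{S'}\otimes\rhomm^{\otimes 2k-m})(\calC^{(2k)}_{\calD}-\calC^{(2k)}_{\text{Haar}})[\Z^{\otimes|S|}\otimes\Z^{\otimes|S'|}\otimes\Id^{\otimes 2k-m}]\bigr)$, and the diamond-norm guarantee is invoked once at order $2k$ together with $\norm{\rho_S\otimes\rho_{S'}\otimes\rhomm^{\otimes 2k-m}}_\infty\le d^{m-2k}$, giving the per-term bound $\epsilon^m\delta\,d^{m-2k}$. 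You instead factor the argument through a reusable statement (a $\delta$-approximate $2k$-design is automatically a $\delta$-approximate $m$-design, with the same $\delta/d^{2k}$ normalization, for every $m\le 2k$, via padding and contractivity of the partial trace), and then handle $\Z^{\otimes m}$ by its spectral split into $\tfrac{d^m}{2}$-scaled states; this yields the slightly cruder per-term bound $\epsilon^m\delta$, but after summing both routes land at $O(\delta)$ in the same regime. The one caveat you correctly flag is that the final summation is $O(\delta)$ only when $\epsilon k=O(1)$; note this is not literally among the lemma's stated hypotheses, but the paper's own last step ($\delta(1/d+\epsilon)^{2k}\le O(\delta)$) has exactly the same implicit requirement, and it holds in the only place the lemma is used ($\epsilon\le 1/3k$), so your treatment is, if anything, more explicit than the original.
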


\begin{proof}
    Let $\calD^*$ denote the Haar measure over $U(d)$. It suffices to upper bound \begin{equation}
        \E[\U\sim\calD]*{\bra{\psi}\Phi^{\U}_i(\epsilon)\ket{\psi}^2} - \E[\U\sim\calD^*]*{\bra{\psi}\Phi^{\U}_i(\epsilon)\ket{\psi}^2} \label{eq:diff}
    \end{equation}
    in absolute value for $i = 0,1$ and then apply Lemma~\ref{lem:secondmoment} and triangle inequality.
    
    First note that for any $S\subseteq[k]$, we can write
    \begin{equation}
        \Tr\left({\U^{\dagger}}^{\otimes |S|} \rho_S \U^{\otimes |S|} \Z^{\otimes |S|}\right) = \Tr\left({\U^{\dagger}}^{\otimes k} (\rho_S\otimes \rhomm^{\otimes k-|S|}) \U^{\otimes k} (\Z^{\otimes |S|}\otimes \Id^{\otimes k - |S|})\right).
    \end{equation}
    For $i = 0$, \eqref{eq:diff} can thus be written as
    \begin{equation}
        \sum_{S,S'\in \Seven} \epsilon^{|S| + |S'|} \Tr\left(\left(\rho_S\otimes \rho_{S'}\otimes \rhomm^{2k - |S| - |S'|}\right)\left(\calC^{(2k)}_{\calD} - \calC^{(2k)}_{\text{Haar}}\right)\left[\Z^{\otimes |S|} \otimes \Z^{\otimes |S'|} \otimes \Id^{2k - |S| - |S'|}\right] \right).\label{eq:diff2}
    \end{equation}
    As $\calD$ is a $\delta$-approximate $2k$-design,
    \begin{equation}
        \norm*{\left(\calC^{(2k)}_{\calD} - \calC^{(2k)}_{\text{Haar}}\right)\left[\Z^{\otimes |S|} \otimes \Z^{\otimes |S'|} \otimes \Id^{2k - |S| - |S'|}\right]}_1 \le \frac{\delta}{d^{2k}}\cdot \norm{\Z^{\otimes |S|} \otimes \Z^{\otimes |S'|} \otimes \Id^{2k - |S| - |S'|}}_1 = \delta
    \end{equation}
    On the other hand, $\norm{\rho_S \otimes \rho_{S'}\otimes \rhomm^{2k - |S| - |S'|}} \le d^{-2k +|S|+|S'|}$, so we may upper bound \eqref{eq:diff2} in absolute value by
    \begin{equation}
        \frac{\delta}{d^{2k}}\sum_{S,S'\in\Seven} \epsilon^{|S| + S'|}\cdot d^{|S| +|S'|} = \frac{\delta}{d^{2k}} \left(\sum_{1\le m \le k \ \text{odd}} (\epsilon d)^m\cdot \binom{k}{m} \right)^2 \le \delta (1/d + \epsilon)^{2k} \le O(\delta),
    \end{equation}
    where the last step follows by the fact that $k \le d$ and $\epsilon < 1$.
    As the upper bound in the second step follows from na\"{i}vely including all summands, odd and even, we immediately also get an upper bound for \eqref{eq:diff} when $i = 1$.
\end{proof}

We are now ready to prove Theorem~\ref{thm:kdesign_formal}:

\begin{proof}[Proof of Theorem~\ref{thm:kdesign_formal}]
    Let $d = 2^n$ and $\delta = \epsilon^2/d$. By Theorem~\ref{thm:homeopathy}, the distribution over states $\rho_{\U}(\epsilon)$ for $\U$ given by the random Clifford circuits in Theorem~\ref{thm:kdesign_formal} is a $\delta$-approximate $2k$-design. Call this distribution $\calD$. By applying both parts of Lemma~\ref{lem:secondmoment} to the bound in Lemma~\ref{lem:convexity}, we get an upper bound on the total variation distance between $p_{\rhomm}$ and $\E[\U\sim\calD]{p_{\rho_{\U}(\epsilon)}}$ of $O(N\cdot(\epsilon^2 k^2/d + \delta))$. So for $N = O\left(\frac{d}{k^2\epsilon^2 + \delta d}\right) = O(d/(k^2\epsilon^2))$ sufficiently small (corresponding to $Nk = O(d/(k\epsilon^2))$ copies of the state in total), the total variation distance is less than $1/3$ and we can apply Lemma~\ref{lem:lecam}.
\end{proof}

\section{RQC Testing and Shadow Tomography}
\label{subsec:tomo}

Here we record the simple observation that hardness for Task~\ref{task1} implies hardness for shadow tomography with the collection of observables given by $\brc{\U Z^{\otimes n}\U^{\dagger}}_{\U\in\text{supp}(\calD)}$.  Note that $\text{supp}(\calD)$ is a finite set, by virtue of Definition~\ref{def:interleaved} and the fact that the Clifford group is discrete.  First, we recall the objective of shadow tomography:

\begin{task}[Shadow tomography]
    Given a classical description of $m$ observables $O_1,\ldots,O_m$ satisfying $\norm{O_i} \le 1$ for all $i$, and given access to copies of $\rho$, estimate each $\Tr(O_i\rho)$ to within $\epsilon$ additive error.
\end{task}

\begin{theorem}\label{thm:shdaow_kdesign}
    Fix any non-Clifford unitary $V\in U(2)$. There are absolute constants $c,c_1,c_2 > 0$ such that for any $\epsilon\in[0,1]$ and $k,n\in\mathbb{Z}$ satisfying $\epsilon \le 1/3k$, $k\le c2^{n/2}/\epsilon$, and $n \ge c_2 k^2$, the following holds.
    
    Let $O_1,\ldots,O_m$ consist of all observables of the form $\U Z^{\otimes n}\U^{\dagger}$ for $\U$ given by a $V$-interleaved Clifford circuit on $n$ qubits with depth $c_1\log^2(k)(k^4 + kn + k\log(1/\epsilon))$. Shadow tomography to error $\epsilon$ with respect to this collection of observables is $k$-replica hard. In particular, any protocol which can make arbitrary entangled measurements of at most $k$ replicas of $\rho$ simultaneously must use $\Omega(2^n / (k\epsilon^2))$ copies of $\rho$ overall to achieve the task with constant probability.
\end{theorem}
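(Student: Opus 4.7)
The plan is to reduce the RQC testing task of Theorem~\ref{thm:kdesign_formal} to the shadow tomography task in question, so that the $k$-replica sample complexity lower bound for the former transfers directly. The starting point is the observation that each observable $O_{\U} = \U Z^{\otimes n}\U^{\dagger}$ interacts very cleanly with the two candidate states in the RQC testing task. On the maximally mixed state,
\begin{equation}
    \Tr(O_{\U}\rhomm) = \frac{1}{2^n}\Tr(\U Z^{\otimes n}\U^{\dagger}) = 0
\end{equation}
because $Z^{\otimes n}$ is traceless, whereas on the alternative state $\rho_{\U^*}(\epsilon) = 2^{-n}(\Id + \epsilon\, \U^* Z^{\otimes n}{\U^*}^{\dagger})$ associated to a specific $\U^*\in\mathrm{supp}(\calD)$, the observable corresponding to that particular $\U^*$ satisfies $\Tr(O_{\U^*}\rho_{\U^*}(\epsilon)) = \epsilon$ after using $(Z^{\otimes n})^2 = \Id$ and cyclicity of the trace.

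Given this clean separation, the reduction is immediate: run a hypothetical shadow tomography protocol with error parameter $\epsilon/3$, and declare $\rho = \rhomm$ if and only if every reported estimate has magnitude at most $\epsilon/2$. Whenever the shadow tomography protocol succeeds (an event of constant probability), this decision rule correctly distinguishes the two cases, since in the first case all estimates are within $\epsilon/3$ of $0$, while in the second case the estimate for $O_{\U^*}$ is within $\epsilon/3$ of $\epsilon$, and hence at least $2\epsilon/3$. Consequently, any $k$-replica shadow tomography algorithm for the collection $\brc{O_i}$ yields a $k$-replica algorithm solving the RQC testing task with the same success probability and sample complexity. Applying Theorem~\ref{thm:kdesign_formal} then forces this sample complexity to be at least $\Omega(2^n/(k\epsilon^2))$.

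The only step requiring a bit of care is matching the error tolerance of the shadow tomography task to the separation $\epsilon$ produced by the observables. Running the shadow tomography protocol at error $\epsilon/3$ rather than $\epsilon$ suffices, and the factor of $3$ can be absorbed into the hidden constant in $\Omega(\cdot)$ by rescaling $\epsilon$. I do not anticipate any genuine obstacle here; the theorem is essentially a repackaging of Theorem~\ref{thm:kdesign_formal}, and because the circuit depth used in the definition of the observables exactly matches the depth constraint in Theorem~\ref{thm:kdesign_formal}, no additional conditions on $n$, $k$, or $\epsilon$ need be introduced beyond those already stated.
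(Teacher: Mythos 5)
Your proposal is correct and is essentially the paper's own proof: both reduce RQC testing to shadow tomography by noting $\Tr(O_{\U}\rhomm)=0$ while the matching observable on $\rho_{\U^*}$ has expectation equal to the perturbation strength, then invoke Theorem~\ref{thm:kdesign_formal}. The only cosmetic difference is in how the constant is handled --- the paper separates the states by $2\epsilon$ and uses error-$\epsilon$ tomography, whereas you keep separation $\epsilon$ and run tomography at error $\epsilon/3$, absorbing the rescaling into the $\Omega(\cdot)$ exactly as you note.
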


\begin{proof}
    Note that for $\rho = \rhomm$, $\Tr(O_i \rho) = 0$ for all $i$, whereas for $\rho = \rho_{\U}(2\epsilon)$, the observable $O_i$ corresponding to $\U Z^{\otimes n}\U^{\dagger}$ satisfies $\Tr(O_i \rho) = 2\epsilon$. So an algorithm solving shadow tomography to error better than $\epsilon$ can be used to distinguish between whether $\rho = \rhomm$ or whether $\rho = (\Id +\epsilon\,\U Z^{\otimes n}\U^{\dagger})/2^n$ for random circuit $\U$. The result then follows from hardness of the latter task, given by Theorem~\ref{thm:kdesign_formal}.
\end{proof}

\section{Proof of Mixedness Testing Separation (Theorem~\ref{thm:mixed})}

While the overall architecture of the proof will follow that of \cite{bubeck2020entanglement,chen2021toward}, the $k$-entangled setting introduces key subtleties that we will highlight in the sequel.

\subsection{Additional Notation}
\label{subsec:morenotation}

As in Section~\ref{sec:proof_kdesign}, for notational convenience we will refer to $\rho_{\U}(\epsilon)$ as $\rho_{\U}$.

Unlike in the previous section, a uniform lower bound on the likelihood ratio between $\E{p_{\rho_{\U}}(\epsilon)}$ and $p_{\rhomm}$ will not suffice here, and we need to be more careful in considering the contribution of different root-to-leaf paths in the tree $\mathcal{T}$. For a node $u$, a child $v$, let \begin{equation}
    p_0(v|u) \coloneqq w^u_v\,, \qquad p_{\U}(v|u) \coloneqq w^u_v  \cdot d \bra{\psi^u_v}{\rho_{\U}^{\otimes k}}\ket{\psi^u_v} \label{eq:probs}
\end{equation} denote the probability that the learner reaches node $v$ conditioned on reaching node $u$, provided the learner is measuring copies of $\rhomm$ or copies of $\rho_{\U}$ respectively. We will also use $p_0(\cdot|u)$ and $p_{\U}(\cdot|u)$ to refer to the corresponding conditional probability distributions. Additionally, it will be convenient to define $p^t_0(\cdot)$ to be the distribution over nodes at depth $t$ of $\calT$ induced by the first $t$ measurements when the unknown state is $\rhomm$.

Analogously to \eqref{eq:prod}, define the likelihood ratio perturbation $\delta^{\U}_u: \mathsf{child}(u)\to\R$ by
\begin{equation}
    \delta^{\U}_u(v) \coloneqq \frac{p_{\U}(v|u)}{p_0(v|u)} - 1 = \bra{\psi^u_v}(\rho_{\U}^{\otimes k} - \Id)\ket{\psi^u_v}.
\end{equation}
The ratio between the probabilities of reaching node $u$ when the learner is measuring copies of $\rho_{\U}$ versus copies of $\rhomm$ can thus be expressed as:
\begin{equation}
    L_{\U}(u) \coloneqq \prod^t_{i=1} \left(1 + \delta^{\U}_{u_{i-1}}(u_i)\right),
\end{equation}
where $\brc{r = u_0, u_1,\ldots, u_{t-1}, u_t = u}$ is the root-to-leaf path to $u$. Also define \begin{equation}
    L(u) \coloneqq \E[\U]{L_{\U}(u)}\,.
\end{equation}
Lastly, crucial to the calculations in this section is the following pairwise correlation quantity
\begin{equation}
    \phi^{\U,\V}_u \coloneqq \iprod{\delta^{\U}_u, \delta^{\V}_u}_{p_0(\cdot|u)},
\end{equation}
recalling the notational convention of \eqref{eq:iprod}.

\subsection{Basic Facts}

We note that as the Haar measure on $U(d)$ is a $k$-design for any $k$, so we can use the results of Section~\ref{sec:proof_kdesign} to easily conclude the following:

\begin{lemma}\label{lem:LRbound}
    For $k\le O(\min(d,\sqrt{d}/\epsilon))$ and any node $u$ in $\calT$,
    \begin{equation}
        L(u) \ge \exp\left(-O(2N\epsilon^2k^2/d)\right).
    \end{equation}
\end{lemma}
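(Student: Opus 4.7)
The plan is to essentially replay the argument from the proof of Lemma~\ref{lem:convexity}, observing that the Haar measure on $U(d)$ is an exact $2k$-design, so the estimates of Lemma~\ref{lem:secondmoment} apply directly. Fix any node $u$ at depth $t \le N$, with root-to-leaf path $\{r = u_0, u_1, \ldots, u_t = u\}$. Then $L_{\U}(u) = \prod_{i=1}^t (1 + \delta^{\U}_{u_{i-1}}(u_i))$, and we can write each factor as $1 + \delta^{\U}_i$ where $\delta^{\U}_i = \bra{\psi^{u_{i-1}}_{u_i}}((\Id + \U\Z\U^{\dagger})^{\otimes k} - \Id^{\otimes k})\ket{\psi^{u_{i-1}}_{u_i}}$, exactly as in the derivation leading to \eqref{eq:prod}.

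Next I would apply the same Jensen/$\T$-symmetrization chain of inequalities. By concavity of $\log$ we get $L(u) \ge \exp(\sum_{i=1}^t \E_{\U}[\log(1+\delta^{\U}_i)])$, and using Haar-invariance under left-multiplication by $\T$ we average the integrand with its $\T\U$ counterpart to obtain
\begin{equation}
    L(u) \ge \exp\!\left(\tfrac{1}{2}\sum_{i=1}^t \E_{\U}[\log(1 - \gamma^{\U}_i)]\right),\qquad \gamma^{\U}_i \coloneqq 1 - (1+\delta^{\U}_i)(1+\delta^{\T\U}_i).
\end{equation}
The bound $\epsilon \le 1/(3k)$ forces $|\delta^{\U}_i| \le 0.4$, hence $\max(0,\gamma^{\U}_i) \le 0.96$, and $\log(1-x) \ge -4x$ on $[0,0.96]$ gives the clean lower bound $L(u) \ge \exp(-2\sum_{i=1}^t \E_{\U}[\max(0,\gamma^{\U}_i)])$, which is precisely \eqref{eq:lateruseful} applied to the prefix of the transcript ending at $u$.

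It remains to bound each $\E_{\U}[\max(0,\gamma^{\U}_i)]$ when $\U$ is Haar-random. Expanding $\gamma^{\U}_i$ in terms of $\Phi^{\U}_0(\epsilon)$ and $\Phi^{\U}_1(\epsilon)$ exactly as in the proof of Lemma~\ref{lem:convexity} yields
\begin{equation}
    \E_{\U}[\max(0,\gamma^{\U}_i)] \le \E_{\U}[\bra{\psi}\Phi^{\U}_1(\epsilon)\ket{\psi}^2] + 2\,\E_{\U}[\bra{\psi}\Phi^{\U}_0(\epsilon)\ket{\psi}^2]^{1/2}
\end{equation}
for the relevant $\ket{\psi}$, where the second term uses Cauchy--Schwarz as in \eqref{eq:relubound}. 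Since the Haar measure is an (exact) $2k$-design, Lemma~\ref{lem:secondmoment} gives $O(\epsilon^2 k^2/d)$ for the first term and $O(\epsilon^4 k^4/d^2)^{1/2} = O(\epsilon^2 k^2/d)$ for the second. Summing over the $t \le N$ steps yields $L(u) \ge \exp(-O(N\epsilon^2 k^2/d))$, as claimed.

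No step is really a conceptual obstacle: the only thing to verify is that the Lemma~\ref{lem:convexity} calculation goes through pointwise at an arbitrary node $u$ (not just at a leaf/full transcript), which it does because none of the manipulations up to \eqref{eq:lateruseful} use anything special about the transcript length. The one minor subtlety to flag is that we get a lower bound on $L(u) = \E_{\U}[L_{\U}(u)]$ rather than on individual $L_{\U}(u)$, which is precisely what the subsequent mixedness-testing argument will need when integrating over $p_0^t$.
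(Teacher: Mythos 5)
Your proposal is correct and is essentially identical to the paper's proof, which likewise invokes the intermediate bounds \eqref{eq:lateruseful} and \eqref{eq:relubound} from the proof of Lemma~\ref{lem:convexity} (noting they hold at any node, not just a leaf) and then applies Lemma~\ref{lem:secondmoment} since the Haar measure is an exact $2k$-design. Your write-up simply makes these steps explicit.
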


\begin{proof}
    This follows immediately from applying \eqref{eq:lateruseful} and \eqref{eq:relubound} from the proof of Lemma~\ref{lem:convexity}, together with Lemma~\ref{lem:secondmoment}.
\end{proof}

We record a simple pointwise upper bound on the likelihood ratio perturbations:

\begin{lemma}\label{lem:triv_bound}
    If $\epsilon \le 1/3k$, then for any $v$ in the support of $p_0(\cdot|u)$ and any $\U\in U(d)$, $|\delta^{\U}_u(v)| \le 1$.
\end{lemma}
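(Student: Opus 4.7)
The plan is to reduce this to the operator norm bound that already appears implicitly at the end of the proof of Lemma~\ref{lem:convexity}. First I would rewrite the likelihood perturbation as
\[
\delta^{\U}_u(v) = \bra{\psi^u_v}\bigl((\Id + \epsilon\,\U\Z\U^\dagger)^{\otimes k} - \Id^{\otimes k}\bigr)\ket{\psi^u_v}
\]
by unpacking the definition $p_\U(v\mid u) = w^u_v\cdot d^k\bra{\psi^u_v}\rho_\U^{\otimes k}\ket{\psi^u_v}$ from \eqref{eq:probs}, using the identity $d\,\rho_\U = \Id + \epsilon\,\U\Z\U^\dagger$, and noticing that the $-1$ in the definition of $\delta^\U_u(v)$ is $\bra{\psi^u_v}\Id^{\otimes k}\ket{\psi^u_v}$.

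Next I would expand the tensor product exactly as in the proof of Lemma~\ref{lem:convexity}:
\[
(\Id + \epsilon\,\U\Z\U^\dagger)^{\otimes k} - \Id^{\otimes k} = \sum_{\emptyset\neq S\subseteq[k]} \epsilon^{|S|}\, \U\Z\U^\dagger \otimes_S \Id.
\]
Since $\U\Z\U^\dagger$ is unitary, each summand has operator norm $1$, so by the triangle inequality the whole expression has operator norm at most $\sum_{\emptyset\neq S\subseteq[k]}\epsilon^{|S|} = (1+\epsilon)^k - 1$. Because $\ket{\psi^u_v}$ is a unit vector, this operator-norm bound dominates $|\delta^\U_u(v)|$.

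Finally, I would invoke the hypothesis $\epsilon \le 1/(3k)$ together with the elementary inequality $(1 + 1/(3k))^k \le e^{1/3} < 1.4$ to conclude $(1+\epsilon)^k - 1 \le 0.4 \le 1$, finishing the proof. There is no real obstacle: the statement is essentially a restatement of the intermediate bound used to justify $\max(0,\gamma^\U_t) \le 0.96$ at the end of the proof of Lemma~\ref{lem:convexity}, and the only quantitative content is the exponential bound on $(1+1/(3k))^k$.
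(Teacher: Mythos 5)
Your proof is correct. It reaches the same elementary conclusion as the paper but by a slightly different mechanism: the paper sandwiches the quadratic form, using positivity of $(\Id + \epsilon\,\U\Z\U^{\dagger})^{\otimes k}$ for the lower side and the operator-norm bound $\bra{\psi}(\Id + \epsilon\,\U\Z\U^{\dagger})^{\otimes k}\ket{\psi} \le (1+\epsilon)^k < 2$ for the upper side, giving $\delta^{\U}_u(v) \in (-1,1)$; you instead expand the difference over nonempty subsets $S\subseteq[k]$ and apply the triangle inequality to the operator norm, exactly as in the proof of Lemma~\ref{lem:convexity}. Your route needs no positivity and yields the quantitatively stronger symmetric bound $|\delta^{\U}_u(v)| \le (1+\epsilon)^k - 1 \le e^{1/3}-1 < 0.4$, of which the stated $|\delta^{\U}_u(v)|\le 1$ is an immediate consequence; the paper's route is marginally shorter since it avoids the subset expansion. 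Your reading of the normalization in the definition of $p_{\U}(v|u)$ (i.e.\ that $d^k\rho_{\U}^{\otimes k} = (\Id+\epsilon\,\U\Z\U^{\dagger})^{\otimes k}$ is the operator appearing in $\delta^{\U}_u(v)$) matches the intended meaning.
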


\begin{proof}
    For any unit vector $\ket{\psi}$, $0< \bra{\psi}\rho^{\otimes k}_{\U}\ket{\psi} \le (1 + \epsilon)^k < 2$ while $\bra{\psi}\Id\ket{\psi} = 1$, from which the claim follows.
\end{proof}

Finally, we will need the following elementary estimate in subsequent sections:

\begin{fact}[Integration by parts, see e.g. Fact C.2 in \cite{bubeck2020entanglement}]\label{fact:stieltjes}
	Let $a,b\in\R$. Let $Z$ be a nonnegative random variable satisfying $Z\le b$ and such that for all $x\ge a$, $\Pr{Z > x} \le \tau(x)$. Let $f: [0,b]\to\R_{\ge 0}$ be nondecreasing and differentiable. Then \begin{equation}
		\E{f(Z)} \le f(a)(1 + \tau(a)) + \int^b_a \tau(x) f'(x) dx\,.
	\end{equation}
\end{fact}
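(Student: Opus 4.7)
The plan is to prove this bound by a split-and-layer-cake argument: decompose $\E{f(Z)}$ according to whether $Z\le a$ or $Z>a$. On the low-value event, monotonicity of $f$ will suffice directly; on the high-value event I will invoke the fundamental theorem of calculus to rewrite $f(Z)$ as $f(a)$ plus an integral of $f'$, which then lets me apply the assumed tail bound $\Pr{Z>x}\le \tau(x)$ (which is only guaranteed for $x\ge a$).

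Concretely, I would first write
\begin{equation}
\E{f(Z)} = \E{f(Z)\,\mathds{1}[Z\le a]} + \E{f(Z)\,\mathds{1}[Z> a]}.
\end{equation}
On the event $\{Z\le a\}$, monotonicity and nonnegativity of $f$ give $f(Z)\le f(a)$, so the first summand is at most $f(a)\Pr{Z\le a}\le f(a)$. This accounts for the leading $f(a)$ in the final bound.

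For the second summand, I would use differentiability of $f$ to write $f(Z) = f(a) + \int_a^Z f'(x)\,dx$ whenever $Z>a$, then exchange expectation and integral via Tonelli (the integrand is nonnegative because $f$ is nondecreasing, and $Z\le b$ caps the upper limit). This yields
\begin{equation}
\E{f(Z)\,\mathds{1}[Z>a]} = f(a)\Pr{Z>a} + \int_a^b f'(x)\Pr{Z>x}\,dx.
\end{equation}
The assumed tail bound now applies both at the endpoint $a$ and throughout $[a,b]$, so this right-hand side is at most $f(a)\tau(a) + \int_a^b f'(x)\tau(x)\,dx$. Summing with the first part produces exactly $f(a)(1+\tau(a)) + \int_a^b \tau(x)f'(x)\,dx$, as desired.

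I do not expect a genuine technical obstacle: the argument is essentially bookkeeping around a layer-cake identity, and the one-sided boundedness $Z\le b$ together with nonnegativity of $f'$ keeps all exchanges of expectation and integral legitimate. The one subtle point worth flagging is why the term $f(a)\tau(a)$ appears additively rather than being absorbed into the integral: it arises because the layer-cake identity for $f(Z)\,\mathds{1}[Z>a]$ carries a ``base level'' contribution $f(a)\,\mathds{1}[Z>a]$ that must be controlled by the endpoint tail probability $\tau(a)$ separately from the integrand $\tau(x)f'(x)$.
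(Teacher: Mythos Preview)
Your argument is correct and is exactly the standard layer-cake/integration-by-parts derivation one would expect. The paper itself does not supply a proof of this statement: it is recorded as a cited fact (Fact~C.2 of \cite{bubeck2020entanglement}) and used as a black box, so there is no in-paper proof to compare against. Your decomposition into $\{Z\le a\}$ and $\{Z>a\}$, the use of $f(Z)=f(a)+\int_a^Z f'(x)\,dx$ on the latter event, and the Tonelli swap (justified since $f'\ge 0$ and $Z\le b$) are all sound, and they reproduce the claimed inequality with the separate $f(a)\tau(a)$ endpoint term arising exactly as you explain.
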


\subsection{Tail Bounds}

In this subsection we prove some tail bounds for $\phi^{\U,\V}_u$ and related quantities. 

First, fix an arbitrary node $u$ in $\calT$. Define the functions $F^{\V}_u(\U)\coloneqq \phi^{\U,\V}_u$ and $G_u(\U)\coloneqq \E[v\sim p_0(\cdot|u)]*{\delta^{\U}_u(v)^2}^{1/2}$. In the rest of this subsection, we will drop subscripts referring to $u$ and denote $p_0(\cdot|u)$ by $p_0$ whenever convenient. Denote the POVM at node $u$ by $\ketbra{\psi_v}$.

As our primary tool for establishing concentration is Theorem~\ref{thm:levy}, we would like to show that the functions $F^{\V}_u$ and $G_u$ are Lipschitz. To this end, we begin by proving the following:

\begin{lemma}\label{lem:second_lip}
    $\E[v\sim p_0]*{(\delta^{\U}(v) - \delta^{\U'}(v))^2}^{1/2} \lesssim \frac{2\epsilon k}{\sqrt{d}}\cdot (1 + \epsilon^2)^{(k-1)/2}\cdot \norm{\U -\U'}_{HS}$.
\end{lemma}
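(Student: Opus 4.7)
The plan is to exploit the tensor–product structure of $\rho_{\U}^{\otimes k}$ via a standard telescoping identity, reduce everything to a trace using POVM completeness, and then bound that trace in terms of $\norm{\U-\U'}_{HS}$.

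First I would set $A_{\U}\coloneqq \Id+\epsilon\,\U\Z\U^\dagger$, so that $d\rho_{\U} = A_{\U}$ and, recalling $p_0(v|u)=w^u_v$ with $\sum_v w^u_v\, d^k\ketbra{\psi^u_v}=\Id^{\otimes k}$, one obtains
\begin{equation}
\delta^{\U}_u(v)-\delta^{\U'}_u(v)=\bra{\psi^u_v}\!\bigl(A_{\U}^{\otimes k}-A_{\U'}^{\otimes k}\bigr)\!\ket{\psi^u_v}.
\end{equation}
Next I would apply the standard telescoping identity
\begin{equation}
A_{\U}^{\otimes k}-A_{\U'}^{\otimes k}=\sum_{i=1}^{k}A_{\U}^{\otimes(i-1)}\otimes B\otimes A_{\U'}^{\otimes(k-i)},\qquad B\coloneqq A_{\U}-A_{\U'}=\epsilon(\U\Z\U^\dagger-\U'\Z{\U'}^\dagger),
\end{equation}
and invoke the triangle inequality for the $L^2(p_0)$ norm to reduce the claim to proving, for each $i$, that the $L^2(p_0)$ norm of $v\mapsto\bra{\psi^u_v}M_i\ket{\psi^u_v}$ is at most $\frac{2\epsilon}{\sqrt{d}}(1+\epsilon^2)^{(k-1)/2}\norm{\U-\U'}_{HS}$, where $M_i\coloneqq A_{\U}^{\otimes(i-1)}\otimes B\otimes A_{\U'}^{\otimes(k-i)}$.

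For each summand I would apply Cauchy–Schwarz on the unit vector $\ket{\psi^u_v}$ to get $|\bra{\psi^u_v}M_i\ket{\psi^u_v}|^2\le \bra{\psi^u_v}M_i M_i^\dagger\ket{\psi^u_v}$, use Hermiticity of $A_{\U},A_{\U'},B$ to simplify $M_iM_i^\dagger=(A_{\U}^2)^{\otimes(i-1)}\otimes B^2\otimes (A_{\U'}^2)^{\otimes(k-i)}$, and then exploit POVM completeness in the form $\E[v\sim p_0]{\bra{\psi^u_v}N\ket{\psi^u_v}}=d^{-k}\Tr(N)$ to get
\begin{equation}
\E[v\sim p_0]{(\bra{\psi^u_v}M_i\ket{\psi^u_v})^2}\le d^{-k}\Tr(A_{\U}^2)^{i-1}\Tr(B^2)\Tr(A_{\U'}^2)^{k-i}.
\end{equation}

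The remaining piece is a direct computation: $\Tr(A_{\U}^2)=d(1+\epsilon^2)$ since $\Tr(\U\Z\U^\dagger)=0$ and $\Z^2=\Id$, and the same for $\Tr(A_{\U'}^2)$. For the middle factor I would write $\U\Z\U^\dagger-\U'\Z{\U'}^\dagger=(\U-\U')\Z\U^\dagger+\U'\Z(\U-\U')^\dagger$, use $\norm{XY}_{HS}\le\norm{X}\,\norm{Y}_{HS}$ with $\norm{\Z\U^\dagger}=\norm{\U'\Z}=1$, to conclude $\Tr(B^2)=\epsilon^2\norm{\U\Z\U^\dagger-\U'\Z{\U'}^\dagger}_{HS}^2\le 4\epsilon^2\norm{\U-\U'}_{HS}^2$. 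Plugging in gives each summand bounded by $d^{-1}(1+\epsilon^2)^{k-1}\cdot 4\epsilon^2\norm{\U-\U'}_{HS}^2$, and summing $k$ terms after taking square roots yields exactly the claimed bound.

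I do not anticipate a real obstacle: the tensor structure, Hermiticity, and POVM identity line up cleanly, and there are no design-theoretic subtleties since the two unitaries are treated deterministically. The only place to be careful is in the HS-vs-operator-norm manipulation for $B$, and in making sure the constant $2$ rather than some larger factor comes out (which it does via the two-term splitting above).
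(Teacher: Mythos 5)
Your proof is correct and follows essentially the same route as the paper's: both reduce the $L^2(p_0)$ norm to a normalized Hilbert--Schmidt norm via $\E[v\sim p_0]{\ketbra{\psi_v}} = d^{-k}\Id$ together with a Jensen/Cauchy--Schwarz step, telescope the difference of tensor powers, and use $\Tr(A_{\U}^2) = d(1+\epsilon^2)$ and $\norm{\U\Z\U^{\dagger} - \U'\Z{\U'}^{\dagger}}_{HS} \le 2\norm{\U-\U'}_{HS}$. The only difference is cosmetic: you telescope before passing to the Hilbert--Schmidt norm, while the paper telescopes afterwards.
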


\begin{proof}
    Write $\vec{A} \coloneqq \frac{1}{\epsilon}\left((\Id + \epsilon \U\Z\U^{\dagger})^{\otimes k} - (\Id + \epsilon \U'\Z{\U'}^{\dagger})^{\otimes k}\right)$ so that
    \begin{equation}
         \delta^{\U}(v) - \delta^{\U'}(v) = \epsilon \bra{\psi_v}\vec{A}\ket{\psi_v}.
    \end{equation}
    Take the eigendecomposition $A = \W\Sig\W^{\dagger}$. Then
    \begin{align}
        \bra{\psi_v}\vec{A}\ket{\psi_v}^2 &= \Tr\left(\Sig\cdot \vec{W}^{\dagger}\ketbra{\psi_v}\vec{W}\right)^2 \\
        &= \left(\sum_i \Sig_{ii} \left(\vec{W}^{\dagger}\ketbra{\psi_v}\vec{W}\right)_{ii}\right)^2 \\
        &\le \sum_i \Sig^2_{ii} \left(\vec{W}^{\dagger}\ketbra{\psi_v}\vec{W}\right)_{ii},
    \end{align}
    where the third step follows by Jensen's and the fact that the diagonal entries of $\vec{W}^{\dagger}\ketbra{\psi_v}\vec{W}$ define a probability distribution.
    As $\E[v\sim p_0]*{\ketbra{\psi_v}} = \frac{1}{d^k}\Id$, we conclude that 
    \begin{equation}
        \E[v\sim p_0]*{(\delta^{\U}(v) - \delta^{\U'}(v))^2} = \frac{\epsilon^2}{d}\sum_i \Sig^2_{ii} = \frac{\epsilon^2}{d^k}\norm{\vec{A}}^2_{HS}
    \end{equation}
    Finally, note that
    \begin{equation}
        \norm{\vec{A}}_{HS} \le \sum^k_{j=1} \norm*{(\Id + \epsilon\U\Z\U^{\dagger})^{\otimes k - j}\otimes (\Id + \epsilon\U'\Z{\U'}^{\dagger})^{\otimes j - 1}\otimes (\U^{\dagger}\Z\U - \U'^{\dagger}\Z\U')}_{HS},
    \end{equation}
    so because $\norm{\Id + \epsilon\U\Z\U^{\dagger}}_{HS} \le (d + \epsilon^2 d)^{1/2}$ and
    \begin{equation}
        \norm{\U^{\dagger}\Z\U - \U'^{\dagger}\Z\U'}_{HS} = \norm{\U\Z(\U - \U') + (\U - \U')\Z\U'}_{HS} \le 2\norm{\U - \U'}_{HS},
    \end{equation}
    we get
    \begin{equation}
        \frac{\epsilon}{d^{k/2}}\cdot\norm{\vec{A}}_{HS} \le 2\epsilon k\cdot \frac{(d + \epsilon^2 d)^{(k-1)/2}}{d^{k/2}}\cdot \norm{\U - \U'}_{HS} = \frac{2\epsilon k}{\sqrt{d}}\cdot (1 + \epsilon^2)^{(k-1)/2}\cdot \norm{\U -\U'}_{HS}.
    \end{equation}
    as claimed.
\end{proof}

\begin{corollary}\label{cor:FGlip}
    The functions $F^{\V}_u$ and $G_u$ are $\frac{2\epsilon k}{\sqrt{d}}\cdot (1 + \epsilon^2)^{(k-1)/2}\cdot G(\V)$- and $\frac{2\epsilon k}{\sqrt{d}}\cdot (1 + \epsilon^2)^{(k-1)/2}$-Lipschitz respectively.
\end{corollary}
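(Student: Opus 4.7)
The plan is to deduce both Lipschitz estimates directly from Lemma~\ref{lem:second_lip} by viewing the maps $\U\mapsto \delta^{\U}_u(\cdot)$ as vectors in $L^2(p_0(\cdot|u))$ and applying the triangle inequality and Cauchy-Schwarz in this Hilbert space. Throughout, write $\norm{h}_{L^2} \coloneqq \E[v\sim p_0(\cdot|u)]{h(v)^2}^{1/2}$, so that $G_u(\U) = \norm{\delta^{\U}_u}_{L^2}$ and $F^{\V}_u(\U) = \iprod{\delta^{\U}_u,\delta^{\V}_u}_{L^2}$. Lemma~\ref{lem:second_lip} can then be restated as
\begin{equation}
\norm{\delta^{\U}_u - \delta^{\U'}_u}_{L^2} \le \frac{2\epsilon k}{\sqrt{d}}\,(1+\epsilon^2)^{(k-1)/2}\,\norm{\U-\U'}_{HS}.
\end{equation}

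For $G_u$, the reverse triangle inequality in $L^2$ gives
\begin{equation}
|G_u(\U) - G_u(\U')| = \bigl|\,\norm{\delta^{\U}_u}_{L^2} - \norm{\delta^{\U'}_u}_{L^2}\,\bigr| \le \norm{\delta^{\U}_u - \delta^{\U'}_u}_{L^2},
\end{equation}
which combined with the restatement above yields the claimed Lipschitz constant $\tfrac{2\epsilon k}{\sqrt{d}}(1+\epsilon^2)^{(k-1)/2}$ for $G_u$.

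For $F^{\V}_u$, I would use bilinearity to write
\begin{equation}
F^{\V}_u(\U) - F^{\V}_u(\U') = \iprod{\delta^{\U}_u - \delta^{\U'}_u,\, \delta^{\V}_u}_{L^2},
\end{equation}
and then apply Cauchy-Schwarz together with the definition $G_u(\V) = \norm{\delta^{\V}_u}_{L^2}$ to obtain
\begin{equation}
|F^{\V}_u(\U) - F^{\V}_u(\U')| \le \norm{\delta^{\U}_u - \delta^{\U'}_u}_{L^2}\cdot G_u(\V) \le \frac{2\epsilon k}{\sqrt{d}}\,(1+\epsilon^2)^{(k-1)/2}\,G_u(\V)\,\norm{\U-\U'}_{HS},
\end{equation}
which is exactly the claimed Lipschitz constant for $F^{\V}_u$.

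There is essentially no obstacle here: both bounds are purely formal consequences of Lemma~\ref{lem:second_lip} once one recognizes $F^{\V}_u$ and $G_u$ as an inner product and a norm in $L^2(p_0(\cdot|u))$. The only thing worth checking carefully is that the $\V$-dependence in the Lipschitz constant for $F^{\V}_u$ enters precisely through $G_u(\V)$ and not through some larger quantity; this is exactly what Cauchy-Schwarz delivers. All the real analytic work was done in Lemma~\ref{lem:second_lip}, where the eigendecomposition of the difference operator and the identity $\E[v\sim p_0]{\ketbra{\psi_v}} = \Id/d^k$ were used to convert the operator-norm-like quantity $\norm{\vec{A}}_{HS}$ into an averaged Lipschitz bound.
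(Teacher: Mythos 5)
Your proposal is correct and follows essentially the same route as the paper's proof: Cauchy--Schwarz in $L^2(p_0(\cdot|u))$ for $F^{\V}_u$ and the (reverse) triangle inequality for $G_u$, with all the analytic content supplied by Lemma~\ref{lem:second_lip}. Your version even cleanly produces the factor $G_u(\V)$ stated in the corollary, whereas the paper's displayed intermediate bound writes $G(\U)$ in that slot, which is a minor typo rather than a substantive difference.
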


\begin{proof}
    We have
    \begin{align}
        F^{\V}(\U) - F^{\V}(\U') &= \E[v\sim p_0]{(\delta^{\U}(v) - \delta^{\U'}(v))\delta^{\V}(v)} \\
        &\le  \E[v\sim p_0]*{(\delta^{\U}(v) - \delta^{\U'}(v))^2}^{1/2}\cdot \E[v\sim p_0]*{\delta^{\U}(v)^2}^{1/2} \\
        &= \E[v\sim p_0]*{(\delta^{\U}(v) - \delta^{\U'}(v))^2}^{1/2}\cdot G(\U) \label{eq:decomp}
    \end{align}
    By triangle inequality for $L_2$ norm with respect to the measure $p_0$,
    \begin{equation}
        G(\U)  - G(\U') = \E[v]{\delta^{\U}(v)^2}^{1/2} - \E[v]{\delta^{\U'}(v)^2}^{1/2} \le \E[v]{(\delta^{\U}(v) - \delta^{\U'}(v))^2}^{1/2}.
    \end{equation} Lipschitzness of $F^{\V}$ and $G$ follows from Lemma~\ref{lem:second_lip}.
\end{proof}

Next, we bound the expectation of $G(\U)$. The proof follows immediately from the calculations in the proof of Lemma~\ref{lem:secondmoment}:

\begin{lemma}\label{lem:GV2}
    If $k = O(\min(d,\sqrt{d}/\epsilon))$, then $\E[\U]{G_u(\U)} \le O(\epsilon k/\sqrt{d})$.
\end{lemma}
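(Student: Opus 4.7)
The plan is to reduce Lemma~\ref{lem:GV2} to the per-vector second-moment estimates of Lemma~\ref{lem:secondmoment}, exploiting that the Haar measure on $U(d)$ is (trivially) an exact $2k$-design. The only new ingredient beyond those estimates is a single application of Jensen's inequality to pull the square root outside the expectation over $\U$.

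First, since $\sqrt{\cdot}$ is concave, Jensen gives
\begin{equation}
    \E[\U]{G_u(\U)} = \E[\U]*{\E[v\sim p_0(\cdot|u)]{\delta^{\U}_u(v)^2}^{1/2}} \le \left(\E[\U,v]{\delta^{\U}_u(v)^2}\right)^{1/2},
\end{equation}
so it suffices to bound the second moment on the right by $O(\epsilon^2 k^2/d)$ and then take a square root. Next, for any fixed node $v$ in the support of $p_0(\cdot|u)$, expanding $(\Id + \epsilon\U\Z\U^{\dagger})^{\otimes k} - \Id^{\otimes k}$ by the parity of subset size yields
\begin{equation}
    \delta^{\U}_u(v) = \bra{\psi^u_v}\Phi^{\U}_0(\epsilon)\ket{\psi^u_v} + \bra{\psi^u_v}\Phi^{\U}_1(\epsilon)\ket{\psi^u_v},
\end{equation}
so that $\delta^{\U}_u(v)^2 \le 2\bra{\psi^u_v}\Phi^{\U}_0(\epsilon)\ket{\psi^u_v}^2 + 2\bra{\psi^u_v}\Phi^{\U}_1(\epsilon)\ket{\psi^u_v}^2$ by $(a+b)^2\le 2a^2+2b^2$. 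Applying both parts of Lemma~\ref{lem:secondmoment} with the unit vector $\ket{\psi^u_v}$ (valid since the Haar measure is a $2k$-design) gives the uniform-in-$v$ bound
\begin{equation}
    \E[\U]{\delta^{\U}_u(v)^2} \le O(\epsilon^4 k^4/d^2) + O(\epsilon^2 k^2/d) = O(\epsilon^2 k^2/d),
\end{equation}
where the first term is absorbed into the second under the hypothesis $\epsilon k \le O(\sqrt{d})$.

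Because this bound is uniform in $v$, averaging over $v\sim p_0(\cdot|u)$ preserves it, and substituting back into the first display and taking square roots gives the claim. There is no substantive obstacle here: Lemma~\ref{lem:secondmoment} has already done the combinatorial heavy lifting via Weingarten calculus at the level of an arbitrary fixed unit vector, and the key observation is simply that its bound is uniform over the choice of unit vector, so it composes with the expectation over the POVM outcome $v$ for free.
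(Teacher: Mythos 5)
Your proof is correct and takes essentially the same route as the paper's: Jensen's inequality to pass to $\E[\U]{\E[v\sim p_0(\cdot|u)]{\delta^{\U}_u(v)^2}}$, followed by the Weingarten-based second-moment estimates underlying Lemma~\ref{lem:secondmoment} (applicable since the Haar measure is an exact $2k$-design), with the $\epsilon^4k^4/d^2$ term absorbed using $\epsilon k \le O(\sqrt{d})$. The only cosmetic difference is that you invoke the two stated bounds of Lemma~\ref{lem:secondmoment} as black boxes via $(a+b)^2\le 2a^2+2b^2$, whereas the paper re-expands the sum over all pairs $S,S'$ and observes that the summands with $|S|+|S'|$ odd vanish under Haar averaging; both give the same $O(\epsilon^2k^2/d)$ bound.
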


\begin{proof}
    By Jensen's, $\E[\U]{G(\U)} \le \E[\U]{\E[v\sim p_0]{\delta^{\U}(v)^2}}^{1/2}$. We will bound the latter. Similar to \eqref{eq:summands_odd}, we can write
    \begin{equation}
        \E[\U]{\E[v\sim p_0]{\delta^{\U}(v)^2}} = \sum_{S,S'\neq\emptyset} \epsilon^{|S| + |S'|} \E*{\Tr\left((\U^{\dagger})^{\otimes|S|}\rho_S \U^{\otimes|S|}\Z^{\otimes|S|}\right)\Tr\left((\U^{\dagger})^{\otimes|{S'}|}\rho_{S'} \U^{\otimes|{S'}|}\Z^{\otimes|{S'}|}\right)}.
    \end{equation}
    Recall from the proof of Lemma~\ref{lem:secondmoment} that the $(S,S')$-th summand above is upper bounded by $(\epsilon^2 m/d)^{m/2}$; it is also easy to see that when $|S| + |S'|$ is odd, the summand vanishes. We can thus upper bound $\E[\U]{\E[v\sim p_0]{\delta^{\U}(v)^2}}$ by
    \begin{align}
        \sum_{\substack{S,S'\neq\emptyset \\ |S| + |S'| \ \text{even}}} O((|S| + |S'|)/d)^{|S|/2 + |S'|/2}
        &= \sum_{2\le m\le k \ \text{even}} O(\epsilon^2m/d)^{m/2}\cdot \sum_{0 < i < m} \binom{k}{i}\binom{k}{m-i} \\
        &\le \sum_{2\le m\le k \ \text{even}} O(\epsilon^2 m/d)^{m/2}\cdot \binom{2k}{m} \le O(\epsilon^2 k^2/d),
    \end{align}
    where in the last step we used the assumption that $k\le c\sqrt{d}/\epsilon$ in the same way as in the proof of Lemma~\ref{lem:secondmoment}.
\end{proof}

From Theorem~\ref{thm:levy}, Corollary~\ref{cor:FGlip}, and Lemma~\ref{lem:GV2}, we conclude that $G_u$ satisfies the following tail bound.

\begin{lemma}\label{lem:Gtail}
    There is an absolute constant $c > 0$ such that for any $s > 0$, \begin{equation}
        \Pr[\U]*{G_u(\U) > \frac{c \epsilon k}{\sqrt{d}} + s} \le \exp\left(-s^2/\sigma^2\right)
    \end{equation}
    for $\sigma = \Theta\left(\frac{\epsilon k}{d}\cdot (1 + \epsilon^2)^{(k-1)/2}\right)$.
\end{lemma}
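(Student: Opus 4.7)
The plan is to derive the tail bound as a direct consequence of Gaussian concentration of measure on the unitary group, applied to the already-established Lipschitz estimate for $G_u$ and the expectation bound from Lemma~\ref{lem:GV2}.

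First I would invoke Corollary~\ref{cor:FGlip}, which gives that $G_u : U(d) \to \R$ is $L$-Lipschitz with respect to the Hilbert--Schmidt (equivalently Frobenius) metric, where
\begin{equation}
L \;=\; \frac{2\epsilon k}{\sqrt{d}}\,(1+\epsilon^2)^{(k-1)/2}.
\end{equation}
Then I would apply Theorem~\ref{thm:levy} (Levy's concentration on $U(d)$) to conclude that $G_u(\U)$, for $\U$ drawn from the Haar measure, is $\sigma^2$-sub-Gaussian with
\begin{equation}
\sigma \;=\; O(L/\sqrt{d}) \;=\; \Theta\!\left(\frac{\epsilon k}{d}\,(1+\epsilon^2)^{(k-1)/2}\right),
\end{equation}
matching the $\sigma$ in the lemma statement. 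The standard one-sided sub-Gaussian tail bound then yields
\begin{equation}
\Pr[\U]*{G_u(\U) - \E[\U]{G_u(\U)} > s} \;\le\; \exp(-s^2/\sigma^2)
\end{equation}
for all $s > 0$.

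To finish, I would combine this with the mean bound from Lemma~\ref{lem:GV2}, namely $\E[\U]{G_u(\U)} \le c\,\epsilon k/\sqrt{d}$ for some absolute constant $c > 0$ (valid under the standing hypothesis $k \le O(\min(d,\sqrt{d}/\epsilon))$ that is already implicit in the section). Since $\{G_u(\U) > c\epsilon k/\sqrt{d} + s\} \subseteq \{G_u(\U) - \E[\U]{G_u(\U)} > s\}$, monotonicity of probability gives the desired bound. The argument is entirely routine once the Lipschitz and mean estimates are in hand; there is no genuine obstacle. The only point requiring any care is tracking constants so that the exponent in the sub-Gaussian bound appears as $s^2/\sigma^2$ rather than with an extra absolute constant factor, but that is absorbed into the $\Theta(\cdot)$ in the definition of $\sigma$ and the $c$ in the statement.
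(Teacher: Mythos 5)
Your proposal is correct and matches the paper's argument exactly: the paper also obtains Lemma~\ref{lem:Gtail} by combining the Lipschitz bound of Corollary~\ref{cor:FGlip} with Levy concentration (Theorem~\ref{thm:levy}) to get the sub-Gaussian parameter $\sigma = \Theta\bigl(\tfrac{\epsilon k}{d}(1+\epsilon^2)^{(k-1)/2}\bigr)$, and then shifts by the mean bound of Lemma~\ref{lem:GV2}. No gaps; your handling of constants via the $\Theta(\cdot)$ and the constant $c$ is the same implicit bookkeeping the paper performs.
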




Finally, we show how to combine all of these ingredients to show a tail bound for the pairwise correlation $\phi^{\U,\V}_u$.

\begin{lemma}\label{lem:phitail}
    For any $s > 0$, \begin{equation}
        \Pr*{\abs*{\phi^{\U,\V} - \E*{\phi^{\U,\V}}} > s} \le \exp\left(-\Omega\left(\min\brc*{\frac{d^2 s}{\epsilon^2 k^2(1 + \epsilon^2)^{k-1}}, \frac{d^3s^2}{\epsilon^4 k^4 (1 + \epsilon^2)^{k-1}}}\right)\right).
    \end{equation}
\end{lemma}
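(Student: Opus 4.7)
The target is a Bernstein-type tail bound for $\phi^{\U,\V}_u$: sub-Gaussian in $s$ for small deviations and sub-exponential for large ones. My plan is to leverage Haar concentration (Theorem~\ref{thm:levy}) in the two unitary arguments separately, combined with the tail bound on $G_u(\V)$ from Lemma~\ref{lem:Gtail}. First I would decompose
\begin{equation}
\phi^{\U,\V}_u - \E{\phi^{\U,\V}_u} = \brk*{F^{\V}_u(\U) - H(\V)} + \brk*{H(\V) - \E[\V]{H(\V)}},
\end{equation}
with $H(\V) := \E[\U]{F^{\V}_u(\U)}$, and apply the triangle inequality so that each bracket need only exceed $s/2$ with the claimed probability.

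For the first summand, I would condition on $\V$. By Corollary~\ref{cor:FGlip}, $F^{\V}_u$ is $L\cdot G_u(\V)$-Lipschitz with $L = O(\epsilon k d^{-1/2}(1+\epsilon^2)^{(k-1)/2})$, so Theorem~\ref{thm:levy} yields a conditional Gaussian tail of the form $\exp(-\Omega(s^2 d/(L^2 G_u(\V)^2)))$. To dispense with the dependence on $G_u(\V)$, I would union-bound over the event $\brc{G_u(\V) \le g^*}$: inside it the conditional Gaussian bound holds with $(g^*)^2$ in place of $G_u(\V)^2$, and outside it Lemma~\ref{lem:Gtail} gives probability at most $\exp(-\Omega((g^*)^2/\sigma^2))$ once $g^* \gtrsim \epsilon k/\sqrt{d}$, where $\sigma = \Theta(L/\sqrt{d})$. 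Using the identity $\sigma^2/L^2 = 1/d$, balancing the two exponents gives $(g^*)^2 \sim \epsilon^2 k^2/d$ when $s \lesssim \epsilon^2 k^2/d$, and $(g^*)^2 \sim s$ when $s$ is larger; substituting these choices back produces exactly the two exponents $d^3 s^2/(\epsilon^4 k^4(1+\epsilon^2)^{k-1})$ and $d^2 s/(\epsilon^2 k^2 (1+\epsilon^2)^{k-1})$ in the claimed minimum.

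For the second summand, a Cauchy--Schwarz argument mirroring the proof of Corollary~\ref{cor:FGlip} (with $\delta^{\U}_u(v)$ replaced by its $\U$-average in one of the two factors) gives $|H(\V) - H(\V')| \le \E[\U]{G_u(\U)^2}^{1/2} \cdot \E[v]{(\delta^{\V}_u(v) - \delta^{\V'}_u(v))^2}^{1/2}$, so by Lemma~\ref{lem:second_lip} the map $\V \mapsto H(\V)$ is Lipschitz with constant $O(L \cdot \E[\U]{G_u(\U)^2}^{1/2})$. Upgrading the first-moment bound in Lemma~\ref{lem:GV2} to a second-moment bound via the sub-Gaussian tail of Lemma~\ref{lem:Gtail} yields $\E[\U]{G_u(\U)^2}^{1/2} = O(\epsilon k/\sqrt{d})$, and a final application of Theorem~\ref{thm:levy} delivers a sub-Gaussian tail for $H(\V)$ whose variance matches the small-$s$ contribution from the first summand, and hence is absorbed into the overall bound.

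The main obstacle is executing this two-tier truncation in $G_u(\V)$ cleanly enough to recover the precise min-exponent rather than a weaker pointwise bound: the constants on $L$, $\sigma$, and the typical value of $G_u$ must all line up so that the balance between the conditional Haar--Gaussian concentration of $F^{\V}_u$ and the Haar--Gaussian tail of $G_u$ reproduces the exact exponents stated. Beyond this Bernstein-style balancing step, the remaining pieces are routine applications of the tools already developed in this subsection.
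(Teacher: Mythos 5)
Your proposal is correct and rests on the same core mechanism as the paper's proof: condition on $\V$, apply Haar concentration (Theorem~\ref{thm:levy}) to $F^{\V}_u$, whose Lipschitz constant is proportional to the random quantity $G_u(\V)$, and then integrate out $G_u(\V)$ using its own sub-Gaussian tail from Lemma~\ref{lem:Gtail}; the two regimes of the min arise from exactly the balancing you describe. The differences are in mechanics and completeness. Where you truncate on the event $\{G_u(\V)\le g^*\}$ and optimize $g^*$ (taking $(g^*)^2\sim \epsilon^2k^2/d$ for small $s$ and $(g^*)^2\sim s$ for large $s$), the paper instead integrates the conditional Gaussian tail against the tail of $G_u(\V)$ via Fact~\ref{fact:stieltjes} and extracts the same min-exponent with an AM--GM step inside the integral; these are equivalent, and your union-bound version is the more elementary of the two. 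Your second summand, controlling the fluctuation of $H(\V)=\mathbb{E}_{\U}[F^{\V}_u(\U)]$ over $\V$, is a genuine addition: the paper's proof only bounds the deviation of $F^{\V}_u(\U)$ from its $\V$-conditional mean $\mu^{\V}$ and then states the bound for the deviation from the unconditional mean, leaving the $\V$-fluctuation of $\mu^{\V}$ implicit. Your Cauchy--Schwarz Lipschitz bound for $H$ (with constant $O(L\cdot \epsilon k/\sqrt d)$, noting that the proof of Lemma~\ref{lem:GV2} already bounds the second moment $\mathbb{E}_{\U}[G_u(\U)^2]$ directly, so no upgrading via Lemma~\ref{lem:Gtail} is needed) plus one more application of Theorem~\ref{thm:levy} yields a sub-Gaussian tail matching the $d^3s^2/(\epsilon^4k^4(1+\epsilon^2)^{k-1})$ term, so it is absorbed as you claim; this makes your argument, if anything, more complete than the one in the paper.
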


\begin{proof}
    By Corollary~\ref{cor:FGlip}, $F^{\V}$ is $\sigma\sqrt{d}\cdot G(\V)$-Lipschitz for $\sigma = \Theta(\frac{\epsilon k}{d}\cdot (1 + \epsilon^2)^{(k-1)/2})$. Define $\mu^{\V}\coloneqq \E[\U]{F^{\V}(\U)}$. By Theorem~\ref{thm:levy}, there is an absolute constant $C > 0$ such that
    \begin{equation}
        \Pr[\U]*{\abs*{F^{\V}(\U) - \mu^{\V}} > s} \le \exp\left(-\frac{C s^2}{\sigma^2G(\V)}\right). \label{eq:fvtail}
    \end{equation}
    We would like to further integrate over $\V$ to get a tail bound for $\phi^{\U,\V}$. To this end, apply Fact~\ref{fact:stieltjes} to the random variable $Z\coloneqq G(\V)$ and $f(Z)\coloneqq \exp\left(-\frac{C s^2}{\sigma^2 Z^2}\right)$. By Lemma~\ref{lem:Gtail}, for all $x\ge a$ for $a\coloneqq \frac{2c\epsilon k}{\sqrt{d}}$, we have $\Pr{Z > x} \le \tau(x)$ for \begin{equation}
        \tau(x) \coloneqq \exp\left(-\frac{C'}{\sigma^2}\left(x - \frac{c\epsilon k}{\sqrt{d}}\right)^2\right)
    \end{equation}
    for absolute constant $C' > 0$.
    
    Fact~\ref{fact:stieltjes} therefore ensures that
    \begin{equation}
        \E{f(Z)} \le 2\exp\left(-\frac{C d s^2}{4c^2\epsilon^2 k^2 \sigma^2}\right) + \int^1_{2c\epsilon k/\sqrt{d}} \frac{2C s^2}{\sigma^2 x^3}\exp\left(-\frac{1}{\sigma^2}\brk*{\frac{C s^2}{x^2} + C'\left(x - \frac{c\epsilon k}{\sqrt{d}}\right)^2}\right) dx. \label{eq:expf}
    \end{equation}
    By AM-GM, for $x \ge 2c\epsilon k/\sqrt{d}$, 
    \begin{equation}
        \frac{C s^2}{x^2} + C'\left(x - \frac{c\epsilon k}{\sqrt{d}}\right)^2 \ge 2(CC')^{1/2} s\left(1 - \frac{c\epsilon k/\sqrt{d}}{x}\right) \ge (CC')^{1/2} s,
    \end{equation}
    so we conclude from \eqref{eq:fvtail} and \eqref{eq:expf} that
    \begin{align}
        \Pr{\abs*{\phi^{\U,\V} - \E{\phi^{\U,\V}}} > s} &\le 2\exp\left(-\frac{C d s^2}{4c^2\epsilon^2 k^2 \sigma^2}\right) + \frac{Cd s^2}{c^2\epsilon^2 k^2\sigma^2}\cdot \exp\left(-\Omega(s/\sigma^2)\right) \\
        &\le \exp\left(-\Omega\left(\min\brc*{\frac{s}{\sigma^2}, \frac{ds^2}{\epsilon^2 k^2\sigma^2}}\right)\right).
    \end{align}
    The lemma follows upon substituting our choice of $\sigma$.
\end{proof}

It remains to compute $\E*{\phi^{\U,\V}}$. Note that when $k = 1$, this expectation is zero, but this is not true for general $k$.

\begin{lemma}\label{lem:phiexp}
    If $k = O(\sqrt{d})$, then $\E{\phi^{\U,\V}} \le O(k^4\epsilon^4/d^2)$
\end{lemma}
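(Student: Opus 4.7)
The plan is to exploit the fact that $\U$ and $\V$ are drawn \emph{independently} from the Haar measure, so that the double expectation factorizes. Specifically, swapping the expectation over $v$ with that over $(\U,\V)$,
\begin{equation}
\E[\U,\V]{\phi^{\U,\V}_u} \;=\; \E[v\sim p_0(\cdot|u)]*{\E[\U]{\delta^{\U}_u(v)}\cdot \E[\V]{\delta^{\V}_u(v)}} \;=\; \E[v]*{\bigl(\E[\U]{\delta^{\U}_u(v)}\bigr)^2}.
\end{equation}
So once we can control $\E[\U]{\delta^{\U}_u(v)}$ for a fixed node $u$ and outcome $v$, taking an expectation over $v$ finishes the proof.

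The crucial first step is to show that under Haar averaging, the odd-$|S|$ contribution to $\delta^{\U}_u(v) = \bra{\psi^u_v}\Phi^{\U}_0(\epsilon)\ket{\psi^u_v} + \bra{\psi^u_v}\Phi^{\U}_1(\epsilon)\ket{\psi^u_v}$ disappears. This follows from the Weingarten formula in the form used for \eqref{E:AUBUformula}: for any odd integer $m$,
\begin{equation}
\E[\U]*{(\U\Z\U^{\dagger})^{\otimes m}} \;=\; \sum_{\sigma,\tau\in\calS_m} \tr(\sigma\,Z^{\otimes m})\,\Wg(\sigma^{-1}\tau,d)\,P_{\tau} \;=\; 0,
\end{equation}
since $\tr(\sigma\,Z^{\otimes m})$ is a product of traces $\tr(Z^c)$ over the cycles of $\sigma$, and at least one cycle must have odd length when $m$ is odd, forcing the product to vanish. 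Consequently every term of $\Phi^{\U}_1$ (each of which has odd $|S|$) vanishes in expectation, giving $\E[\U]{\delta^{\U}_u(v)} = \bra{\psi^u_v}\E[\U]{\Phi^{\U}_0(\epsilon)}\ket{\psi^u_v}$.

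The second step is a one-line application of Jensen's inequality to the convex function $x\mapsto x^2$, together with an interchange of expectations: \begin{equation} \bigl(\E[\U]{\delta^{\U}_u(v)}\bigr)^2 \;=\; \bigl(\E[\U]{\bra{\psi^u_v}\Phi^{\U}_0(\epsilon)\ket{\psi^u_v}}\bigr)^2 \;\le\; \E[\U]*{\bra{\psi^u_v}\Phi^{\U}_0(\epsilon)\ket{\psi^u_v}^2}. \end{equation} Taking expectation over $v$, swapping with the $\U$-expectation, and invoking part~2 of Lemma~\ref{lem:secondmoment} pointwise for each unit vector $\ket{\psi^u_v}$ then bounds the right-hand side by $O(\epsilon^4 k^4/d^2)$, yielding the claim.

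The main conceptual content is the factorization in the first display together with the vanishing of the odd-$|S|$ Haar moments. Once these are in hand, the calculation reduces to the even-moment bound already proved in Lemma~\ref{lem:secondmoment}, and the improvement from the na\"{i}ve bound $O(\epsilon^2 k^2/d)$ (which is what Jensen would give starting from $\E[\U]{\delta^{\U}_u(v)^2}$) to $O(\epsilon^4k^4/d^2)$ is purchased entirely by killing the odd part \emph{before} squaring. No obstacle is anticipated beyond invoking the Weingarten identity correctly; the approximate-design error does not need to be tracked here since the mixedness testing setting uses the exact Haar measure.
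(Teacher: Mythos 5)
Your proof is correct, and it takes a genuinely different (and shorter) route than the paper's. The paper expands $\phi^{\U,\V}_u$ into the double sum over $S,S'$ and redoes a decoupled Weingarten computation for the two independent unitaries, invoking Lemma~\ref{lem:montanaro} together with Facts~\ref{fact:sumcycles} and~\ref{fact:evencycles} to control the cycle sums, noting along the way that terms with $|S|$ or $|S'|$ odd vanish, and finally summing a geometric series dominated by its leading term. You instead exploit the independence of $\U$ and $\V$ to factorize, $\E[\U,\V]{\phi^{\U,\V}_u} = \E[v\sim p_0(\cdot|u)]{\bigl(\E[\U]{\delta^{\U}_u(v)}\bigr)^2}$, kill the odd part $\Phi^{\U}_1$ in expectation via the parity of $\Tr(P_\sigma \Z^{\otimes m})$ (equivalently, one could use the symmetry $\U\mapsto\U\T$, under which $\U\Z\U^{\dagger}\mapsto-\U\Z\U^{\dagger}$ while the Haar measure is invariant), and then apply Jensen pointwise in $v$ followed by part~2 of Lemma~\ref{lem:secondmoment} with $\calD$ the Haar measure, which is an exact $2k$-design. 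This reuses the already-proven even-part second-moment bound rather than repeating essentially the same Weingarten calculation, and it additionally shows $\E{\phi^{\U,\V}_u}\ge 0$, which the paper's one-sided bound does not make explicit. Your observation that removing the odd part \emph{before} squaring is what buys the improvement from $O(\epsilon^2k^2/d)$ to $O(\epsilon^4k^4/d^2)$ is exactly right, and your hypotheses match the paper's: with $\epsilon\le 1$ the assumption $k=O(\sqrt{d})$ implies $k\le O(\min(d,\sqrt{d}/\epsilon))$ as required by Lemma~\ref{lem:secondmoment}, which is the same implicit use the paper makes; and since the mixedness-testing argument works with the exact Haar measure, no approximate-design error needs to be tracked, as you note.
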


\begin{proof}
    We can write $\phi^{\U,\V}$ explicitly as
    \begin{equation}
        \phi^{\U,\V} = \sum_{S,S'\subseteq[k]: S,S'\neq\emptyset} \epsilon^{|S|+|S'|}\E[v\sim p_0]*{\bra{\psi_v}(\U\Z\U^{\dagger}\otimes_S \Id)\ket{\psi_v}\bra{\psi_v}(\V\Z\V^{\dagger}\otimes_{S'} \Id)\ket{\psi_v}} \label{eq:phi_explicit}
    \end{equation}

    Let $\rho^v_S$ denote the the state $\ketbra{\psi_v}$ with the components outside of $S$ traced out. Note that \begin{equation}
        \E[v\sim p_0]{\rho^v_S} = \frac{1}{d^{|S|}}\Id^{\otimes |S|}.
    \end{equation}
    Consider the $(S,S')$-th expectation in \eqref{eq:phi_explicit}. Its expectation with respect to $\U,\V$ is given by
    \begin{equation}
        \sum_{\substack{\pi,\tau\in\calS_{|S|} \\ \pi', \tau'\in\calS_{|S'|}}} \Wg(\pi\tau^{-1},d) \Wg(\pi'{\tau'}^{-1},d)  \Tr(P_{\tau} \Z^{\otimes |S|}) \Tr(P_{\tau'} \Z^{\otimes |S'|}) \E[v\sim p_0]*{\Tr(P_{\pi} \rho^v_S) \Tr(P_{\pi'} \rho^v_{S'})}, \label{eq:wgwg}
    \end{equation}
    We can na\"{i}vely upper bound the last quantity in \eqref{eq:wgwg}, namely the expectation over $v$, by 1.
    We can now decouple the sum in \eqref{eq:wgwg} into a product of a sum over $\pi,\tau$ and a sum over $\pi',\tau'$. Before we do this however, let us recall basic estimates for the other terms in \eqref{eq:wgwg}. First, recall that $\Tr(P_{\tau}\Z^{\otimes|S|}) = 0$ if $\tau$ has an odd cycle and otherwise equals $d^{\#\tau}$, and similarly for $\Tr(P_{\tau'}\Z^{\otimes|S'|})$. As for the Weingarten terms, by Lemma~\ref{lem:montanaro} we have $|\Wg(\pi\tau^{-1},d)| \le O(d^{\#\pi\tau^{-1} - 2|S|})$ as $|S| \le k \le d^{2/3}$, and similarly for $|\Wg(\pi'{\tau'}^{-1},d)|$.
    
    Putting everything together, we can upper bound \eqref{eq:wgwg} by 
    \begin{equation}
        \frac{1}{\Omega(d)^{2|S| + 2|S'|}}\left(\sum_{\pi,\tau\in\calS_{|S|}: \tau \, \text{even}} O(d)^{\#\pi\tau^{-1} +\#\tau}\right)\left(\sum_{\pi',\tau'\in\calS_{|S'|}: \tau \,\text{even}} O(d)^{\#\pi'{\tau'}^{-1} +\#\tau'}\right) \label{eq:wgwg2}
    \end{equation} when $|S|$ and $|S'|$ are even, and by zero otherwise. Here ``$\tau$ even'' means that $\tau$ consists only of even cycles.
    
    For $|S|$ even,
    \begin{align}
        \sum_{\pi,\tau\in\calS_{|S|}: \tau \, \text{even}} d^{\#\pi\tau^{-1} +\#\tau} &= \left(\sum_{\pi\in\calS_{|S|}} d^{\#\pi}\right)\left(\sum_{\tau\in\calS_{|S|} \ \text{even}} d^{\#\tau}\right) \\
        &= (|S| - 1)!! \cdot  d(d+2)\cdots(d+|S|-2)\cdot d(d+1)\cdots (d + |S|-1), \\
        &\le O(d)^{3|S|/2}\cdot |S|^{|S|/2}
    \end{align}
    when $|S| \le k \le O(d)$, where we used Facts~\ref{fact:sumcycles} and \ref{fact:evencycles} in the second step.
    
    Substituting this into \eqref{eq:wgwg2} gives an upper bound of
    \begin{equation}
        \frac{1}{\Omega(d)^{|S|/2 + |S'|/2}}\cdot |S|^{|S|/2} \cdot |S'|^{|S'|/2}\,.
    \end{equation}
    We conclude that
    \begin{equation}
        \E{\phi^{\U,\V}} = \left(\sum_{S\subseteq[k]: S\neq\emptyset,\, |S| \, \text{even}} O(\epsilon^2 |S|/d)^{|S|/2}\right)^2.
    \end{equation}
    Finally, note by a similar calculation as in the end of the proof of Lemma~\ref{lem:GV2},
    \begin{align}
        \sum_{S\subseteq[k]: S\neq\emptyset, |S| \ \text{even}} O(\epsilon^2 |S|/d)^{|S|/2} &= \sum^k_{j=2 \ \text{even}} (\epsilon^2 j/d)^{j/2} \binom{k}{j} \\
        &\le \sum^k_{j=2 \ \text{even}} O\left(\epsilon k/\sqrt{dj}\right)^j,
    \end{align}
    and so because $k \le O(\sqrt{d}) \le O(\sqrt{d}/\epsilon)$ the leading term dominates, and we get the claimed bound on $\E{\phi^{\U,\V}}$.
\end{proof}

\subsection{Consequences of Tail Bounds}

We now record some moment bounds that follow from the tail bounds of the previous section. The main estimate we wish to prove in this subsection is the following upper bound on $\E[u\sim p^t_0,\U,\V]*{L_{\U}(u)^2 L_{\V}(u)^2}$:

\begin{lemma}\label{lem:oneplusZimpliesPsi}
    For any $t\in\mathbb{N}$, $\E[u\sim p^t_0,\U,\V]*{L_{\U}(u)^2 L_{\V}(u)^2} \le \exp(O(1 + \epsilon^2 t k^2/d))$.
\end{lemma}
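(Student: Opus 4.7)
The plan is to bound the expectation by iteratively applying the tower property of conditional expectation along the tree $\calT$, reducing to a path-dependent product, and then averaging over the Haar-distributed $\U, \V$ using the tail bounds on $G_u^2$ and $\phi^{\U, \V}_u$ from the preceding subsection.

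I would first establish a per-step bound at a fixed node $u$. Expanding $(1+\delta^\U_u(v))^2(1+\delta^\V_u(v))^2$ as a polynomial in $\delta^\U_u(v)$ and $\delta^\V_u(v)$, the terms linear in $\delta^\U$ or in $\delta^\V$ vanish under $\E[v\sim p_0(\cdot|u)]{\cdot}$ because these are centered likelihood-ratio perturbations, while cross terms such as $\E[v|u]{\delta^\U(v)^2 \delta^\V(v)}$ are bounded by $G_u(\U)^2$ via the pointwise estimate $|\delta^\V(v)| \le 1$ from Lemma~\ref{lem:triv_bound}. The net result is
\begin{equation}
\E[v\sim p_0(\cdot|u)]{(1+\delta^\U_u(v))^2(1+\delta^\V_u(v))^2} \le 1 + C(G_u(\U)^2 + G_u(\V)^2) + 4\phi^{\U,\V}_u
\end{equation}
for an absolute constant $C > 0$.

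Next, writing $\tilde\alpha_u(\U, \V) := 1 + C(G_u(\U)^2 + G_u(\V)^2) + 4|\phi^{\U,\V}_u| \ge 1$, I would iterate the per-step bound along the path $u_0, \ldots, u_t$ by induction on $t$, with a strengthened inductive hypothesis carrying an arbitrary non-negative test function of the path. Applying $1+x\le e^x$, this would yield
\begin{equation}
\E[u\sim p_0^t | \U, \V]{L_\U(u)^2 L_\V(u)^2} \le \E[u\sim p_0^t | \U, \V]{\exp\left(\sum_{i=0}^{t-1}\left[C(G_{u_i}(\U)^2 + G_{u_i}(\V)^2) + 4|\phi^{\U,\V}_{u_i}|\right]\right)}.
\end{equation}
Because the path $u$ under $p_0$ is independent of $(\U,\V)$, the further expectation over $(\U,\V)$ may be moved inside the expectation over $u$, reducing the problem to controlling, for each fixed path, the joint exponential moment of a sum of $G^2$ and $|\phi|$ statistics over $(\U, \V)$. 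I would then split this exponential moment over a good event $E$ on which each $G_{u_i}(\U)^2, G_{u_i}(\V)^2, |\phi^{\U,\V}_{u_i}|$ is at most $O(\epsilon^2 k^2/d)$, and its complement $\bar E$. Using the typical-value estimates from Lemmas~\ref{lem:GV2} and~\ref{lem:phiexp} combined with the sub-Gaussian tails from Lemmas~\ref{lem:Gtail} and~\ref{lem:phitail}, the good event would contribute $\exp(O(\epsilon^2 t k^2/d))$ while the bad event would contribute at most $\exp(O(1))$ thanks to the rapid decay of the tails, yielding the claimed bound.

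The hardest part will be the iteration step in the middle, since the natural supermartingale $M_t := L_\U(u_t)^2 L_\V(u_t)^2/\prod_{i=0}^{t-1} \tilde\alpha_{u_i}(\U,\V)$ only satisfies $\E{M_t} \le 1$, which does not directly give $\E{L_\U^2 L_\V^2} \le \E{\prod \tilde\alpha_{u_i}}$ because of correlations between the numerator and the denominator. Sidestepping this will require carrying a non-negative test-function multiplier through the inductive step so that the per-step bound from the first part can be cleanly absorbed, and I expect this bookkeeping — together with handling the sign of $\phi^{\U,\V}_u$ so that all factors remain uniformly bounded below — to be the most technical part of the argument.
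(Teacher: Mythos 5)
Your first step matches the paper's: the per-step estimate you derive is essentially inequality \eqref{eq:gc}, with your $\tilde\alpha_u$ playing the role of $1 + C(c)K^{\U,\V}_u$, and your final ingredients (the tails on $G_u$ and $\phi^{\U,\V}_u$ plus a tail-integration) are exactly what the paper feeds into Lemma~\ref{lem:tail_imply_oneplus}. The gap is the middle step that you yourself flag as the hardest part, and it is a real one: the inequality $\E[u\sim p^t_0]{L_{\U}(u)^2L_{\V}(u)^2}\le \E[u\sim p^t_0]*{\textstyle\prod_{i}\tilde\alpha_{u_i}(\U,\V)}$ (equivalently your display with $\exp(\sum_i\cdots)$) does not follow from the per-step bound by induction. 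When you try to iterate, the factor $\tilde\alpha_{u_{t-1}}(\U,\V)$ accumulated at the previous stage depends on the very node over which the conditional expectation is taken, and the per-step bound gives no control over $\E[v\sim p_0(\cdot|u_{t-1})]*{(1+\delta^{\U}_{u_{t-1}}(v))^2(1+\delta^{\V}_{u_{t-1}}(v))^2\,g(v)}$ for a general nonnegative $g$ depending on the newly revealed node; so ``carrying a test function of the path through the induction'' does not sidestep the supermartingale issue, it reproduces it. Worse, the intermediate inequality is false in general for adaptive trees: the likelihood-ratio weight of a path and the per-node factors $\tilde\alpha_{u_i}$ can be positively correlated (a branch with $\delta^{\U}_{u_0}>0$ can lead to a node whose POVM has a much larger second-moment perturbation), in which case $\E[u]{L^2L^2}$ exceeds $\E[u]{\prod_i\tilde\alpha_{u_i}}$. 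Some decoupling device is unavoidable here.

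The paper's resolution is precisely such a device: at each step it peels off the last factor, applies \eqref{eq:gc}, and then uses H\"{o}lder's inequality with exponents $\frac{t-1}{t-2}$ and $t-1$ (see \eqref{eq:holder}), letting the exponent on the remaining likelihood ratio grow geometrically as $\alpha_i = 2\left(\frac{t-1}{t-2}\right)^i \le 2e$; unrolling converts the whole expectation into $\sup_u \E[\U,\V]*{(1+O(K^{\U,\V}_u))^{t-1}}$, which Lemma~\ref{lem:tail_imply_oneplus} then bounds. Note also that even after your (invalid) interchange, the final good/bad-event split is glossed: on the bad event the exponent can be as large as $\Theta(t)$ (Lemma~\ref{lem:triv_bound} only gives $G_u^2,|\phi^{\U,\V}_u|=O(1)$), so bounding the contribution of the bad event requires comparing the tail rate $\sim 1/\sigma^2$ against $t$ --- this is exactly the constraint $t \le 1/(16\gamma\sigma^2)$ under which Lemma~\ref{lem:tail_imply_oneplus} operates --- and the summands $G_{u_i}(\U)^2$ for different $i$ all depend on the same $\U$, so controlling the exponential moment of their sum would again require a H\"{o}lder-type decoupling. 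I recommend adopting the exponent-bumping H\"{o}lder recursion rather than trying to repair the interchange.
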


\noindent To prove this, it will be convenient to define \begin{equation}
	K^{\U,\V}_u \coloneqq \E[v\sim p_0(\cdot|u)]*{\left(\delta^{\U}_u(v) + \delta^{\V}_u(v)\right)^2}
\end{equation} and first show the following:

\begin{lemma}\label{lem:tail_imply_oneplus}
	Let $\sigma = \Theta\left(\frac{\epsilon k}{d}\cdot (1 + \epsilon^2)^{(k-1)/2}\right)$ be the parameter defined in Lemma~\ref{lem:Gtail}. For any node $u$ in the tree $\calT$, any $\gamma>0$, and any $t \le 1/(16\gamma\sigma^2)$, $\E[\U,\V]*{\left(1 + \gamma K^{\U,\V}_u\right)^t} \le \exp(O(1 + \gamma\epsilon^2 t k^2/d))$.
\end{lemma}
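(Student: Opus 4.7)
The plan is to first expand
\begin{equation}
    K^{\U,\V}_u = G_u(\U)^2 + G_u(\V)^2 + 2\phi^{\U,\V}_u,
\end{equation}
which is immediate from the definition of $K^{\U,\V}_u$ upon expanding the square and using the identities $G_u(\W)^2 = \E[v\sim p_0(\cdot|u)]{\delta^{\W}_u(v)^2}$ and $\phi^{\U,\V}_u = \E[v\sim p_0(\cdot|u)]{\delta^{\U}_u(v)\delta^{\V}_u(v)}$. Using the elementary estimate $(1+x)^t \le e^{tx}$ (valid since $K^{\U,\V}_u \ge 0$) and then H\"older's inequality with three equal exponents, it suffices to upper bound each of
\begin{equation}
    \E[\U]*{\exp(3\gamma t\, G_u(\U)^2)}, \qquad \E[\V]*{\exp(3\gamma t\, G_u(\V)^2)}, \qquad \E[\U,\V]*{\exp(6\gamma t\, \phi^{\U,\V}_u)}
\end{equation}
by $\exp(O(1 + \gamma t\,\epsilon^2 k^2/d))$. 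The first two factors coincide by symmetry.

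For the Gaussian-square MGFs, I would convert the tail in Lemma~\ref{lem:Gtail} into the one-sided estimate $\Pr{G_u(\U)^2 > y} \le \exp(-\Omega(y/\sigma^2))$, valid once $y \gtrsim \epsilon^2 k^2/d$, and then apply Fact~\ref{fact:stieltjes} with $f(y) = e^{3\gamma t y}$, using the crude upper bound $G_u(\U)^2 \le 1$ from Lemma~\ref{lem:triv_bound}. The hypothesis $t \le 1/(16\gamma\sigma^2)$ forces $3\gamma t\sigma^2 \le 3/16$, which keeps the exponential integral convergent and produces the desired bound $\exp(O(1 + \gamma t\,\epsilon^2 k^2/d))$.

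For the cross term I would decompose $\phi^{\U,\V}_u = (\phi^{\U,\V}_u - \E{\phi^{\U,\V}_u}) + \E{\phi^{\U,\V}_u}$. The deterministic shift contributes at most $6\gamma t \cdot O(k^4\epsilon^4/d^2) \le O(\gamma t\,\epsilon^2 k^2/d)$ to the log-MGF by Lemma~\ref{lem:phiexp} combined with $k \le O(\sqrt{d}/\epsilon)$. The centered part carries the Bernstein-type tail of Lemma~\ref{lem:phitail}: sub-Gaussian with variance proxy $O(\sigma^2 \epsilon^2 k^2/d)$ on scales $s \lesssim \epsilon^2 k^2/d$ and sub-exponential with rate $\sigma^{-2}$ beyond. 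The standard sub-exponential MGF estimate (or another invocation of Fact~\ref{fact:stieltjes}) then yields $\E[\U,\V]{\exp(6\gamma t (\phi^{\U,\V}_u - \E{\phi^{\U,\V}_u}))} \le \exp(O(\gamma^2 t^2 \sigma^2 \epsilon^2 k^2/d))$, which by $\gamma t \sigma^2 \le 1/16$ collapses to $\exp(O(\gamma t\,\epsilon^2 k^2/d))$. Combining the three factors (each raised to the power $1/3$) yields the claimed $\exp(O(1 + \gamma t\,\epsilon^2 k^2/d))$.

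The main obstacle will be handling the Bernstein-type tail of $\phi^{\U,\V}_u$: its crossover from the sub-Gaussian to the sub-exponential regime happens precisely at deviation scale $\epsilon^2 k^2/d$, and one has to verify that the hypothesis $t \le 1/(16\gamma\sigma^2)$ is exactly strong enough so that (i) the tail integral in the sub-exponential regime converges and (ii) the quadratic-in-$\lambda$ term $\lambda^2 \sigma^2 \epsilon^2 k^2/d$ from the sub-Gaussian regime collapses into the desired linear-in-$\gamma t$ scaling $\gamma t\,\epsilon^2 k^2/d$. Everything else is mechanical bookkeeping built from Fact~\ref{fact:stieltjes} together with the previously established Lemmas~\ref{lem:Gtail}, \ref{lem:phitail}, and \ref{lem:phiexp}.
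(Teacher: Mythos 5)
Your proposal is correct in outline, but it takes a genuinely different route from the paper. The paper's proof is more direct: since $K^{\U,\V}_u = \E[v\sim p_0(\cdot|u)]*{(\delta^{\U}_u(v)+\delta^{\V}_u(v))^2}$, the triangle inequality in $L^2(p_0(\cdot|u))$ gives $(K^{\U,\V}_u)^{1/2} \le G_u(\U)+G_u(\V)$, so a union bound over Lemma~\ref{lem:Gtail} yields the one-sided tail $\Pr*{K^{\U,\V}_u > 8c^2\epsilon^2k^2/d + 8s^2} \le 2e^{-s^2/\sigma^2}$, and then Fact~\ref{fact:stieltjes} is applied \emph{directly} to $f(Z)=(1+\gamma Z)^t$; the hypothesis $t\le 1/(16\gamma\sigma^2)$ is exactly what makes the resulting integral $\int \gamma t(1+\gamma x)^{t-1}e^{-(x-\mu)/8\sigma^2}\,dx$ converge with explicit constants. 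In particular the paper never touches $\phi^{\U,\V}_u$, H\"older, or Lemmas~\ref{lem:phitail} and \ref{lem:phiexp} here. Your decomposition $K^{\U,\V}_u = G_u(\U)^2+G_u(\V)^2+2\phi^{\U,\V}_u$, followed by $(1+x)^t\le e^{tx}$ and three-way H\"older, does work: the two $G^2$ factors go through with the stated constant (the $G$-tail constants are explicit, and $3\gamma t\sigma^2\le 3/16$ leaves room after the factor-of-$4$ loss in converting to a tail on $G^2$), and in the cross term the extra $\lambda^2\sigma^4$ correction from the Bernstein MGF is harmless because $\gamma t\sigma^2\le 1/16$. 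The one caveat is quantitative: Lemma~\ref{lem:phitail}'s tail carries an unspecified $\Omega(\cdot)$ constant, so the sub-exponential MGF estimate for $\exp(6\gamma t\,\overline{\phi}^{\U,\V}_u)$ is only valid for $6\gamma t$ below that hidden constant times $\sigma^{-2}$; as written your argument therefore proves the lemma for $t\le c/(\gamma\sigma^2)$ with $c$ possibly smaller than $1/16$, rather than for the stated constant. This is immaterial to how the lemma is used in Lemma~\ref{lem:oneplusZimpliesPsi}, but it is a constant-level weakening the paper's route avoids; on the other hand, your route would adapt to situations where one only controls $\phi^{\U,\V}_u$ through its mean and tails rather than through pointwise Lipschitz concentration of $G_u$.
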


\begin{proof}
	Note that $(K^{\U,\V}_{u})^{1/2} \le G_u(\U) + G_u(\V)$, so by Lemma~\ref{lem:Gtail} and a union bound, \begin{equation}
	    \Pr*{K^{\U,\V}_u > \frac{8c^2\epsilon^2 k^2}{d} + 8s^2} \le \Pr*{K^{\U,\V}_u > \left(\frac{2c\epsilon k}{\sqrt{d}} + 2s\right)^2} \le 2\exp\left(-s^2/\sigma^2\right)    
	\end{equation} 
	for any $s > 0$. 

	We can apply Fact~\ref{fact:stieltjes} to the random variable $Z\coloneqq K^{\U,\V}_u$ and the function $f(Z) \coloneqq (1 + \gamma Z)^t$ to conclude that for $\mu \coloneqq \frac{8c^2\epsilon^2 k^2}{d}$, \begin{align}
		\E[\U,\V]*{\left(1 + \gamma\cdot K^{\U,\V}_u\right)^t} &\le 2(1 + \mu)^t + \int^{\infty}_{\mu} \gamma t(1 + \gamma x)^{t-1} \cdot e^{-(x - \mu)/8\sigma^2} dx  \\
		&\le 2(1 + \gamma\mu)^t + \gamma t e^{\mu/8\sigma^2} \int^{\infty}_0 e^{-x(1/8\sigma^2 - \gamma(t-1))} dx \\
		&\le 2(1 + \gamma\mu)^t + \frac{\gamma t e^{\mu/8\sigma^2}}{1/8\sigma^2 - \gamma(t-1)} \\
		&\le 2(1 + \gamma\mu)^t + e^{\mu/8\sigma^2} \le \exp(O(1+\gamma\mu t)) + O(1),
	\end{align}
	where in the third and fourth steps we used that $t \le 1/(16\gamma\sigma^2)$.
\end{proof}

We can now prove Lemma~\ref{lem:oneplusZimpliesPsi}:

\begin{proof}[Proof of Lemma~\ref{lem:oneplusZimpliesPsi}]
	By Lemma~\ref{lem:triv_bound}, we know that for any $a,b \ge 2$, \begin{equation}
		\E[v\sim p_0(\cdot|u)]*{\delta^{\U}_u(v)^a\cdot \delta^{\V}_u(v)^b} \le \E[v\sim p_0(\cdot|u)]{\delta^{\U}_u(v)^2},
	\end{equation}
	so we conclude that \begin{equation}
		\E[v]*{(1 + \delta^{\U}_{u}(v))^c(1 + \delta^{\V}_{u}(v))^c} \le 1 + O_c\left(\E[v]{\delta^{\U}_{u}(v)^2}\right) + O_c\left(\E[v]{\delta^{\V}_{u}(v)^2}\right) + O_c\left(\phi^{\U,\V}_{u}\right) \le 1 + C(c)\cdot K^{\U,\V}_{u}\label{eq:gc}
	\end{equation} for some absolute constant $C(c)>0$, where the last step follows by AM-GM. For $\alpha_i \coloneqq 2\cdot\left(\frac{t-1}{t-2}\right)^i$, we have that 
	\begin{align}
		\MoveEqLeft \E[u\sim p^t_0,\U,\V]*{\left(L_{\U}(u) L_{\V}(u)\right)^{\alpha_i}} \\
		&\le \E[u'\sim p^{t-1}_0,\U,\V]*{\left(L_{\U}(u') L_{\V}(u')\right)^{\alpha_i} \cdot \left(1 + C(\alpha_i)\cdot K^{\U,\V}_{u'}\right)} \label{eq:apply_gc} \\
		&\le \E[u'\sim p^{t-1}_0,\U,\V]*{\left(L_{\U}(u') L_{\V}(u')\right)^{\alpha_i (t-1)/(t-2)}}^{(t-2)/(t-1)}\cdot \E[u'\sim p^{t-1}_0,\U,\V]*{\left(1 + C(\alpha_i)\cdot K^{\U,\V}_{u'}\right)^{t-1}}^{1/(t-1)}\label{eq:holder} \\
		&\le \E[u'\sim p^{t-1}_0,\U,\V]*{\left(L_{\U}(u') L_{\V}(u')\right)^{\alpha_{i+1} (t-1)/(t-2)}}\cdot \E[u'\sim p^{t-1}_0,\U,\V]*{\left(1 + C(\alpha_i)\cdot K^{\U,\V}_{u'}\right)^{t-1}}^{1/(t-1)}.
	\end{align} where \eqref{eq:apply_gc} follows by \eqref{eq:gc}, and \eqref{eq:holder} follows by H\"{o}lder's inequality. Unrolling this recurrence, we conclude that
	\begin{align}
		\E[u\sim p^t_0,\U,\V]*{\left(L_{\U}(u)L_{\V}(u)\right)^2} &\le \prod^{t-1}_{i=1}\E[u'\sim p^i_0,\U,\V]*{\left(1 + C(\alpha_{t-1-i})\cdot K^{\U,\V}_{u'}\right)^{t-1}}^{1/(t-1)}\label{eq:unroll} \\
		&\le \prod^{t-1}_{i=1}\E[u'\sim p^i_0,\U,\V]*{\left(1 + C(2e)\cdot K^{\U,\V}_{u'}\right)^{t-1}}^{1/(t-1)},\label{eq:2e} \\
		&\le \sup_{u\in\calT}\E[\U,\V]*{\left(1 + O(K^{\U,\V}_u)\right)^{t-1}}
	\end{align} where \eqref{eq:2e} follows by the fact that for $1\le i \le t - 1$, $\alpha_{t-1-i} \le 2\left(1 + \frac{1}{t-2}\right)^{t-2} \le 2e$, and the supremum in the last step is over all POVMs $\calM$. The lemma then follows from Lemma~\ref{lem:tail_imply_oneplus}.
\end{proof}

\subsection{Putting Everything Together}

We now show how to combine the ingredients from the previous subsections to conclude the proof of Theorem~\ref{thm:mixed}. We will need the following consequence of the chain rule from \cite{bubeck2020entanglement}:

\begin{lemma}[Lemma 6.1 from \cite{bubeck2020entanglement}]\label{lem:chain}
	\begin{equation}
		\text{\rm KL}{\E{p_{\rho_{\U}}}}{p_{\rhomm}} \le \sum^{N-1}_{t=0}\E[u\sim p^t_0]*{\frac{1}{L(u)}\E[\U,\V]*{L_{\U}(u)L_{\V}(u)\cdot \phi^{\U,\V}_{u}}},
	\end{equation}
\end{lemma}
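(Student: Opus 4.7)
The plan is to combine the standard chain rule for KL divergence on sequential protocols with the textbook bound $\mathrm{KL}(q\|p)\le\chi^2(q\|p)$, and then recognize that the chi-squared divergence at each node factorizes into the advertised two-replica expression over independent $\U,\V$.

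First, let $q\coloneqq\E_{\U}\brk{p_{\rho_{\U}}}$, and for each node $u$ in $\mathcal{T}$ let $q(v|u)$ denote the induced conditional distribution over children. By construction $q^t(u) = L(u)\cdot p_0^t(u)$ at every depth $t$, and
\begin{equation}
    q(v|u) \;=\; p_0(v|u)\cdot\frac{\E_{\U}\brk{L_{\U}(u)\left(1+\delta^{\U}_u(v)\right)}}{L(u)}\,.
\end{equation}
The chain rule for KL applied along the tree then yields
\begin{equation}
    \KL{q}{p_{\rhomm}} \;=\; \sum_{t=0}^{N-1}\E_{u\sim q^t}\brk{\KL{q(\cdot|u)}{p_0(\cdot|u)}}\,.
\end{equation}

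Next I would bound each per-node contribution by the corresponding chi-squared divergence. Writing $r(v|u)\coloneqq q(v|u)/p_0(v|u)$, the inequality $x\log x \le x^2-x$ gives $\KL{q(\cdot|u)}{p_0(\cdot|u)} \le \E_{v\sim p_0(\cdot|u)}\brk{(r(v|u)-1)^2}$. Since $\E_{\U}\brk{L_{\U}(u)}=L(u)$, one has $r(v|u)-1=\E_{\U}\brk{L_{\U}(u)\delta^{\U}_u(v)}/L(u)$. Squaring and introducing an independent copy $\V$ to linearize the square gives
\begin{equation}
    \chisq{q(\cdot|u)}{p_0(\cdot|u)} \;=\; \frac{1}{L(u)^2}\,\E_{\U,\V}\brk{L_{\U}(u)L_{\V}(u)\cdot \iprod{\delta^{\U}_u,\delta^{\V}_u}_{p_0(\cdot|u)}} \;=\; \frac{1}{L(u)^2}\,\E_{\U,\V}\brk{L_{\U}(u)L_{\V}(u)\,\phi^{\U,\V}_u}\,.
\end{equation}

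Finally, the $q^t$-expectation is converted to a $p_0^t$-expectation via $q^t(u)=L(u)\,p_0^t(u)$, which cancels exactly one factor of $L(u)$ in the denominator and produces the stated $1/L(u)$ weighting. Summing the bound over $t=0,\dots,N-1$ yields the claim.

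The only delicate step is the chi-squared squaring, because $L_{\U}(u)$ and $L_{\V}(u)$ are themselves random and correlated with the terms $\delta^{\U}_u(v),\delta^{\V}_u(v)$ one is trying to isolate. I would handle this by first conditioning on the branch $u$ so that $L_{\U}(u)$ and $\delta^{\U}_u(\cdot)$ can be pulled out of the inner $v$-expectation in the correct order, and only then introducing the independent $\V$ to turn the square of an expectation into a double expectation over independent replicas; this is the place where honest bookkeeping of the order of conditioning matters, and once it is done the chain-rule telescoping is automatic.
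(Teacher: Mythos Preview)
The paper does not prove this lemma; it is quoted verbatim as Lemma~6.1 of \cite{bubeck2020entanglement} and used as a black box. Your sketch is a correct reconstruction of the standard argument from that reference: chain rule for KL along the tree, the pointwise bound $\mathrm{KL}\le\chi^2$ via $r\log r\le r^2-r$, the replica trick to write $(\E_{\U}[L_{\U}(u)\delta^{\U}_u(v)])^2$ as a double expectation over independent $\U,\V$, and finally the change of measure $q^t(u)=L(u)\,p^t_0(u)$ to absorb one factor of $1/L(u)$.

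One small remark: the step you flag as ``delicate'' is in fact routine. Once $u$ is fixed, $L_{\U}(u)$ is a deterministic function of $\U$ alone, with no residual dependence on the child $v$; so the inner expectation over $v\sim p_0(\cdot|u)$ commutes cleanly with the outer expectation over $(\U,\V)$ by Fubini (everything is bounded, since $|\delta^{\U}_u(v)|\le 1$ and $L_{\U}(u)\le 2^t$). There is no subtle correlation issue to manage beyond fixing the order of conditioning as you already do.
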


\begin{proof}[Proof of Theorem~\ref{thm:mixed}]
	Given a fixed node $u$ of the tree $\calT$ and Haar-random $\U,\V$, let $\calE^{\U,\V}_u$ denote the event that $\abs*{\phi^{\U,\V}_{u}} > \tau$ for \begin{equation}
	    \tau\coloneqq \frac{\epsilon^4k^4}{d^2} + \epsilon^2 k^2(1 + \epsilon^2)^{k-1}\cdot d^{-3/2}\sqrt{d^{1/3} + \log^2 N} \le O\left(\epsilon^2 k^2(1 + \epsilon^2)^{k-1}\cdot \max(d^{-4/3}, d^{-3/2}\log N)\right)
	\end{equation} By Lemma~\ref{lem:phitail}, for any node $u$ we have
	\begin{equation}
	    \Pr{\calE^{\U,\V}_u}\le \exp\left(-\Omega\left(\min\brc*{\sqrt{d^{4/3} + d\log^2 N}, (d^{1/3} + \log^2 N) (1 + \epsilon^2)^{k-1}}\right)\right) \le \exp\left(-\Omega(d^{1/3} + \log N)\right) \label{eq:Ebound}
	\end{equation}
    We have that \begin{align}
		\E[\U,\V]*{L_{\U}(u)L_{\V}(u)\cdot \phi^{\U,\V}_u} &= \E[\U,\V]*{L_{\U}(u)L_{\V}(u)\cdot \phi^{\U,\V}_u \cdot \left(\bone{\calE_{u}^{\U,\V}} + \bone{(\calE_{u}^{\U,\V})^c}\right)} \\
		&\le \E[\U,\V]*{L_{\U}(u)L_{\V}(u)\cdot \bone{\calE_{u}^{\U,\V}}} + \tau \E[\U,\V]*{L_{\U}(u)L_{\V}(u)\cdot \bone{(\calE_{u}^{\U,\V})^c}} \\
		&\le \E[\U,\V]*{L_{\U}(u)L_{\V}(u)\cdot \bone{\calE_{u}^{\U,\V}}} + \tau L(u)^2,
	\end{align} where in the second step we used Lemma~\ref{lem:triv_bound} to conclude that $\phi^{\U,\V}_{u} \le 1$. 
	By Lemma~\ref{lem:LRbound} and the fact that the likelihood ratio integrates to 1, we can thus bound the $t$-th summand in the upper bound of Lemma~\ref{lem:chain} by \begin{equation}
		\E[u\sim p^t_0]*{\frac{1}{L(u)}\E[\U,\V]*{L_{\U}(u)L_{\V}(u)\cdot \phi^{\U,\V}_{u}}} \le \exp\left(-O(2N\epsilon^2k^2/d)\right) \cdot \E[u\sim p^t_0,\U,\V]*{L_{\U}(u)L_{\V}(u)\cdot \bone{\calE_{u}^{\U,\V}}} + \tau. \label{eq:puttogether}
	\end{equation}

	To bound the expectation on the right-hand side of \eqref{eq:puttogether}, we apply Cauchy-Schwarz to get \begin{align}
		\E[u\sim p^t_0,\U,\V]*{L_{\U}(u)^2 L_{\V}(u)^2}^{1/2}\cdot \Pr[u,\U,\V]*{\calE_{u}^{\U,\V}}^{1/2} \le \exp(O(1 + \epsilon^2 Nk^2/d))\cdot \exp(-\Omega(d^{1/3} + \log N)),
	\end{align}
	where the second step follows by Lemma~\ref{lem:oneplusZimpliesPsi} and the bound in \eqref{eq:Ebound}. Note that when $N = O(d^2/(\epsilon^2 k^2))$, this bound is $O(1/n)$, where the constant factor can be made arbitrarily small. Invoking Lemma~\ref{lem:chain}, we conclude that
	\begin{equation}
	    \KL{\E{p_{\rho_{\U}}}}{p_{\rhomm}} \le N\left(O(1/N) + \tau\right) = N\tau + O(1).
	\end{equation}
	So provided that 
	\begin{equation}
	    N\le \min\left(O\left(\frac{d^{4/3}}{\epsilon^2 k^2(1+\epsilon^2)^{k-1}}\right), \widetilde{O}\left(\frac{d^{3/2}}{\epsilon^2 k^2(1+\epsilon^2)^{k-1}}\right)\right),
	\end{equation}
	the KL divergence between $\E{p_{\rho_{\U}}}$ and $p_{\rhomm}$ is a small constant. The theorem follows from Pinsker's and Lemma~\ref{lem:lecam}.
\end{proof}

\subsection{Nonadaptive Lower Bound}

In this section, we show that in the weaker \emph{nonadaptive} setting where one prepares a sequence of $k$-entangled POVMs in advance and performs them in succession on batches of $k$ replicas, we can show an optimal lower bound on the copy complexity needed for mixedness testing:

\begin{theorem}\label{thm:nonadaptive}
    Suppose $k\in\mathbb{N}$ satisfies $k\le O(d^{1/2})$ and $k \ll \log(d)/\epsilon^2$. Any nonadaptive protocol which can make entangled measurements of at most $k$ replicas of $\rho$ simultaneously must use at least $\Omega(d^{\frac{3}{2} - \eta}/\epsilon^2)$
    copies of $\rho$ overall, for any constant $\eta > 0$, to solve the mixedness testing task with constant probability. 
\end{theorem}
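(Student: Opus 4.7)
The plan is to apply the Le~Cam two-point method (Lemma~\ref{lem:lecam}) just as in the adaptive proof, taking $\calD_0$ to be the point mass at $\rhomm$ and $\calD_1$ to be the pushforward of the Haar measure on $U(d)$ through $\U\mapsto \rho_\U$. The key structural simplification in the nonadaptive setting is that the POVMs $\calM^{(1)},\ldots,\calM^{(N)}$ are chosen in advance, so the transcript distribution factorizes as $p_\rho(T) = \prod_t p^{(t)}_{\rho}(o_t)$. A direct computation then yields the closed form
\begin{equation}
\chisq{\E[\U]*{p_{\rho_\U}}}{p_{\rhomm}} + 1 \;=\; \E[\U,\V]*{\prod_{t=1}^{N}(1+\phi^{\U,\V}_t)},
\end{equation}
where $\phi^{\U,\V}_t$ is the pairwise correlation from Section~\ref{subsec:morenotation} associated to the fixed $t$-th POVM. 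Crucially, this closed form avoids the multiplicative tree amplification that in the adaptive proof forced $\E[u\sim p^t_0,\U,\V]*{L_{\U}(u)^2 L_{\V}(u)^2}$ to grow like $\exp(O(N\epsilon^2 k^2/d))$ (Lemma~\ref{lem:oneplusZimpliesPsi}) and ultimately capped the adaptive bound at $d^{4/3}$.

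To bound the right-hand side, I will use $1+x\le e^{x}$, combined with the nonnegativity $1+\phi^{\U,\V}_t\ge 0$ supplied by Lemma~\ref{lem:triv_bound}, to obtain $\prod_t(1+\phi^{\U,\V}_t) \le \exp(\sum_t \phi^{\U,\V}_t)$, and then control the MGF of $X(\U,\V) \coloneqq \sum_t \phi^{\U,\V}_t$ via Haar concentration on $U(d)\times U(d)$. Summing the per-round Lipschitz estimate from Corollary~\ref{cor:FGlip}, the function $X(\cdot,\V)$ is $(2\epsilon k/\sqrt d)(1+\epsilon^2)^{(k-1)/2}\sum_t G_t(\V)$-Lipschitz in $\U$; for typical $\V$ one has $\sum_t G_t(\V) = O(N\epsilon k/\sqrt d)$ by Lemmas~\ref{lem:GV2} and~\ref{lem:Gtail}. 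Theorem~\ref{thm:levy} then yields sub-Gaussian concentration of $X$ around its mean at scale $\sigma^2 = O(N^2\epsilon^4k^4(1+\epsilon^2)^{k-1}/d^3)$, and combined with the mean bound $\E{X}\le N\cdot O(\epsilon^4 k^4/d^2)$ from Lemma~\ref{lem:phiexp} this gives
\begin{equation}
\chisq{\E[\U]*{p_{\rho_\U}}}{p_{\rhomm}} + 1 \;\le\; \exp\!\bigl(O(N\epsilon^4 k^4/d^2) + O(N^2\epsilon^4k^4/d^3)\bigr).
\end{equation}
Both exponents are $O(1)$ provided $N \le \widetilde{O}(d^{3/2}/(\epsilon^2 k^2))$, and the hypotheses $k\le O(d^{1/2})$ and $k\ll \log(d)/\epsilon^2$ allow the $k^2$ and polylog factors to be absorbed into the $d^\eta$ slack, giving the claimed $\Omega(d^{3/2-\eta}/\epsilon^2)$ copy-complexity lower bound. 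Pinsker's inequality converts this $\chi^2$ bound into a total variation bound, after which Lemma~\ref{lem:lecam} finishes the argument.

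The hard part will be making the Haar concentration argument uniform in the conditioning variable: the Lipschitz constant of $X$ in $\U$ depends on $\V$ through $\sum_t G_t(\V)$, and symmetrically in the other direction. This can be handled by conditioning on the high-probability event $\calF = \{\sum_t G_t(\V)^2 \le O(N\epsilon^2k^2/d)\}$, whose complement has superpolynomially small probability by combining Lemma~\ref{lem:Gtail} with a union bound over the $N$ rounds; the atypical contribution is then dispatched via a Cauchy--Schwarz split together with a second-moment bound on $\prod_t(1+|\phi^{\U,\V}_t|)^2$ analogous to Lemma~\ref{lem:oneplusZimpliesPsi}. Since $\Pr{\calF^c}$ is negligible, the analysis on $\calF$ essentially determines the final bound.
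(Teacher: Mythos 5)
Your reduction coincides with the paper's up to the point where the chi-squared divergence must be bounded: the exact product identity you derive for nonadaptive protocols is precisely what the paper imports as Lemma~\ref{lem:ingster}, and both arguments then run on the same statistics $\phi^{\U,\V}_u$, the Lipschitz estimates of Corollary~\ref{cor:FGlip}, the bounds on $G_u$ (Lemmas~\ref{lem:GV2}, \ref{lem:Gtail}), and the mean bound of Lemma~\ref{lem:phiexp}. Where you diverge is the middle step: the paper stays per round, converting the tail bound of Lemma~\ref{lem:phitail} into moment bounds $\norm{\phi^{\U,\V}-\E{\phi^{\U,\V}}}_t$ and expanding $(1+\phi^{\U,\V}_u)^N$ binomially, whereas you exponentiate, $\prod_t(1+\phi^{\U,\V}_t)\le \exp(\sum_t\phi^{\U,\V}_t)$, and control the MGF of the sum by applying Theorem~\ref{thm:levy} to $X(\cdot,\V)$ after conditioning on a good event for $\V$. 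This is a legitimate alternative (also note $1+\phi^{\U,\V}_t\ge 0$ holds automatically, since $1+\phi^{\U,\V}_t=\E[v\sim p_0]{(1+\delta^{\U}_t(v))(1+\delta^{\V}_t(v))}$ is a sum of ratios of probabilities; you should not invoke Lemma~\ref{lem:triv_bound}, whose hypothesis $\epsilon\le 1/3k$ is unavailable here), and your treatment of the bad event via Cauchy--Schwarz is fine because $\Pr{\calF^c}$ is of order $\exp(-d^{1-o(1)})$.

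There is, however, one step that does not follow as written. Once you condition on $\V\in\calF$ and exploit Lipschitzness of $X(\cdot,\V)$ in $\U$, Theorem~\ref{thm:levy} concentrates $X$ around the \emph{conditional} mean $\E[\U]{X\mid\V}$, not around $N\,\E{\phi^{\U,\V}}$, and these can differ substantially: for a fixed worst-case POVM sequence, $\E[\U]{\phi^{\U,\V}_t}=\E[v\sim p_0]{\bra{\psi_v}\E[\U]{\Phi^{\U}_0(\epsilon)}\ket{\psi_v}\,\delta^{\V}_t(v)}$, which for a typical $\V$ is only bounded by $O(\epsilon^2k^2/d)\cdot G_t(\V)\sim \epsilon^3k^3/d^{3/2}$ with a $\V$-dependent sign; only its average over $\V$ is the $O(\epsilon^4k^4/d^2)$ of Lemma~\ref{lem:phiexp}. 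Summed over $N\sim d^{3/2}/(\epsilon^2k^2)$ rounds this can be of order $\epsilon k$, which is not $O(1)$ under the theorem's hypotheses (e.g.\ constant $\epsilon$ and $k\sim\log d$), so your displayed bound $\exp(O(N\epsilon^4k^4/d^2)+O(N^2\epsilon^4k^4/d^3))$ is not justified by the ingredients you cite. The gap is repairable with the same toolkit: $\V\mapsto\sum_t\E[\U]{\phi^{\U,\V}_t}$ is Lipschitz with constant $O\left(\frac{N\epsilon^2k^2}{d}(1+\epsilon^2)^{(k-1)/2}\right)$ (Corollary~\ref{cor:FGlip} in the $\V$ argument averaged over $\U$ via Lemma~\ref{lem:GV2}), so a second application of Theorem~\ref{thm:levy} shows its deviation from $N\E{\phi^{\U,\V}}$ is $O\left(N\epsilon^2k^2(1+\epsilon^2)^{(k-1)/2}/d^{3/2}\right)$ with overwhelming probability, which is harmless once $N\le d^{3/2-\eta}/(\epsilon^2k^2)$; alternatively you can simply follow the paper's per-round route, where Lemma~\ref{lem:phitail} is already stated with centering at $\E{\phi^{\U,\V}}$ and the binomial expansion sidesteps the conditional-mean issue at the level of the sum. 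With that repair, your route reaches the same threshold as the paper's, and your final bookkeeping (absorbing the $k$, $(1+\epsilon^2)^{k-1}$, and polylog factors into $d^{\eta}$) matches what the paper itself does.
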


This lower bound is optimal in the sense that there is a matching algorithm making nonadaptive \emph{1-entangled} measurements that only needs $O(d^{3/2}/\epsilon^2)$ copies of $\rho$. We can thus interpret Theorem~\ref{thm:nonadaptive} as saying that in the nonadaptive setting, the optimal thing to do if one can only measure $k \ll \log(d)/\epsilon^2$ replicas at a time is merely to measure a single replica at a time!

We now proceed to the proof of Theorem~\ref{thm:nonadaptive}. Note that if the POVMs are chosen nonadaptively in advance, then if $\rho = \rho_{\U}(\epsilon)$ for a fixed $\U$, the induced distribution over the transcript of measurement outcomes is simply a product distribution, as the choice of POVM at any given step does not depend on the previous measurement outcomes. The same holds if $\rho = \rhomm$. This significantly simplifies the expression for the chi-squared divergence between the distribution over transcripts under $\rho = \rhomm$ and the distribution under $\rho = \rho_{\U}(\epsilon)$ for Haar-random $\U$.

Indeed, we can use the following lemma giving an expression for chi-squared divergence between product mixtures:

\begin{lemma}[Lemma 2.8 from \cite{bubeck2020entanglement}]\label{lem:ingster}
	If the learning algorithm is nonadaptive, then \begin{equation}
		\chi^2\Big(\E[\U]{p_{\rho_{\U}}} \,\Big\|\, p_{\rhomm}\Big) \le \max_{u\in\calT} \E[\U,\V\sim\calD]*{\left(1 + \phi^{\U,\V}_u\right)^N} - 1\label{eq:ingster}
	\end{equation}
\end{lemma}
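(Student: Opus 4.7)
The plan is to prove Lemma~\ref{lem:ingster} by exploiting the product structure of the transcript distribution under nonadaptive measurements and then applying H\"{o}lder's inequality to collapse an $N$-fold product of per-step quantities into a single-node quantity raised to the $N$-th power. The starting point is the standard identity
\[
\chisq{\E[\U]{p_{\rho_{\U}}}}{p_{\rhomm}} + 1 = \E[T\sim p_{\rhomm}]*{\left(\E[\U]*{\frac{p_{\rho_{\U}}(T)}{p_{\rhomm}(T)}}\right)^{2}}.
\]
Squaring the inner expectation by introducing an independent copy $\V\sim\calD$ and then swapping the order of expectation via Fubini, this rewrites as $\E[\U,\V]{R(\U,\V)}$, where $R(\U,\V) \coloneqq \E[T\sim p_{\rhomm}]{p_{\rho_{\U}}(T)\,p_{\rho_{\V}}(T)/p_{\rhomm}(T)^2}$.

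The key step is to invoke the nonadaptive hypothesis: since the POVM used at depth $t$ does not depend on past outcomes $v_1,\ldots,v_{t-1}$, both $p_{\rho}(T)$ and $p_{\rhomm}(T)$ factor as $\prod_{t=1}^{N} q_{t}(v_{t}\mid\rho)$ for per-step conditionals $q_{t}$ depending only on $t$. In particular, under $p_{\rhomm}$ the coordinates of $T$ are mutually independent, and the per-step likelihood ratios $1 + \delta^{\U}_{t}(v_{t}) \coloneqq q_{t}(v_{t}\mid\rho_{\U})/q_{t}(v_{t}\mid\rhomm)$ decouple. Using additionally that $\E[v\sim q_{t}(\cdot\mid\rhomm)]{\delta^{\U}_{t}(v)} = 0$ (since $q_{t}(\cdot\mid\rho_{\U})$ is itself a probability distribution),
\[
R(\U,\V) = \prod_{t=1}^{N} \E[v\sim q_{t}(\cdot\mid\rhomm)]*{\big(1+\delta^{\U}_{t}(v)\big)\big(1+\delta^{\V}_{t}(v)\big)} = \prod_{t=1}^{N} \big(1 + \phi^{\U,\V}_{u(t)}\big),
\]
where $u(t)$ denotes any node of $\calT$ at depth $t-1$; all such nodes yield the same value of $\phi^{\U,\V}_{u}$ in the nonadaptive case, since both the POVM at $u$ and the conditional $p_{0}(\cdot\mid u)$ depend only on the depth of $u$.

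Finally, since $1 + \phi^{\U,\V}_{u(t)} = \E[v]{(1+\delta^{\U}_{t}(v))(1+\delta^{\V}_{t}(v))} \ge 0$ is a nonnegative random variable in $\U,\V$, I would apply the generalized H\"{o}lder's inequality with all $N$ conjugate exponents equal to $N$, and then bound the resulting geometric mean of $N$ nonnegative terms by their common maximum:
\[
\E[\U,\V]*{\prod_{t=1}^{N} (1+\phi^{\U,\V}_{u(t)})} \le \prod_{t=1}^{N} \E[\U,\V]*{(1+\phi^{\U,\V}_{u(t)})^{N}}^{1/N} \le \max_{u\in\calT}\E[\U,\V]*{(1+\phi^{\U,\V}_{u})^{N}}.
\]
Subtracting $1$ from both sides yields the stated bound. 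The main subtlety is purely bookkeeping: one must re-express the tree $\calT$ of Definition~\ref{def:treelearnstate} under nonadaptivity so that each node at depth $t-1$ carries a common POVM and hence a common $\phi^{\U,\V}_{u}$, justifying the per-step tensorization above. All remaining manipulations are standard chi-squared tensorization together with the orthogonality $\E[v]{\delta^{\U}(v)} = 0$.
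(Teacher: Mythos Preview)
Your proof is correct and is precisely the standard Ingster--Suslina tensorization argument that underlies the cited Lemma~2.8 of \cite{bubeck2020entanglement}: express $1+\chi^2$ as a second moment of the likelihood ratio, linearize the square via an independent copy $\V$, factor over the $N$ nonadaptive steps, cancel the linear terms using $\E[v]{\delta^{\U}(v)}=0$, and finish with the $N$-fold H\"older inequality. The paper itself does not prove this lemma but simply imports it from \cite{bubeck2020entanglement}, so there is no alternative approach to compare against; your derivation is exactly what one expects from that reference.
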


In other words, to show a lower bound against nonadaptive learning algorithms, it suffices to show bounds on the moments of $\phi^{\U,\V}$ as a random variable in $\U,\V$. These immediately follow from the tail bound in Lemma~\ref{lem:phitail}. We can now use this to prove Theorem~\ref{thm:nonadaptive}:

\begin{proof}[Proof of Theorem~\ref{thm:nonadaptive}]
	Define $\overline{\phi}^{\U,\V} \coloneqq \phi^{\U,\V} - \E[\U,\V]{\phi^{\U,\V}}$. We have \begin{align}
		\E*{\left(\overline{\phi}^{\U,\V}\right)^t} &= \int^{\infty}_0 \Pr*{\abs*{\overline{\phi}^{\U,\V}}> s^{1/t}} ds \\
		&\le \int^{\infty}_0 \exp\left(-\Omega\left(\frac{d^3s^{2/t}}{\epsilon^4k^4(1+\epsilon^2)^{k-1}}\right)\right)  ds + \int^{\infty}_0 \exp\left(-\Omega\left(\frac{d^2s^{1/t}}{\epsilon^2 k^2(1+\epsilon^2)^{k-1}}\right)\right) ds \\
		&= \Gamma(1 + t/2) \cdot O(\epsilon^2(1 + \epsilon^2)^{(k-1)/2}/d^{3/2})^t + \Gamma(1+t)\cdot O(\epsilon^2 k^2(1 + \epsilon^2)^{k-1}/d^2)^t,
	\end{align}
	so
	\begin{equation}
	    \norm{\phi^{\U,\V} - \E{\phi^{\U,\V}}}_t \le O\left(\sqrt{t}\cdot \epsilon^2 k^2(1 + \epsilon^2)^{(k-1)/2}/d^{3/2} + t\cdot \epsilon^2k^2(1+\epsilon^2)^{k-1}/d^2\right).
	\end{equation}
	By triangle inequality for $L_t$ norms of random variables and Lemma~\ref{lem:phiexp},
	\begin{equation}
	    \norm{\phi^{\U,\V}} \le O\left(k^4\epsilon^4/d^2 + \sqrt{t}\cdot \epsilon^2 k^2 (1 + \epsilon^2)^{(k-1)/2}/d^{3/2} + t\cdot \epsilon^2k^2(1+\epsilon^2)^{k-1}/d^2\right)
	\end{equation}
    For any $u\in\calT$, we can expand
    \begin{align}
		\E[\U,\V]*{\left(1 + \phi^{\U,\V}_u\right)^N} - 1 &= \sum_{2\le t\le N \ \text{even}} \binom{N}{t} \E*{\left(\phi^{\U,\V}_{u}\right)^t} \\
		&\le \sum_{2\le t\le N \ \text{even}} O\left(N\left(t^{-1} k^4\epsilon^4/d^2 + t^{-1/2}\cdot \epsilon^2k^2(1 + \epsilon^2)^{(k-1)/2}/d^{3/2} +  \epsilon^2k^2(1+\epsilon^2)^{k-1}/d^2\right)\right)^t.
	\end{align} 
	So when $N \le \frac{cd^{3/2}}{\epsilon^2 k^2(1+\epsilon^2)^{k-1}}$ for sufficiently small constant $c > 0$, then this quantity is a small constant. The lemma then follows from Lemma~\ref{lem:ingster} and Pinsker's inequality.
\end{proof}

\vspace{-0.5cm}

\bibliographystyle{abbrv}
\bibliography{refs}

\end{document}